\PassOptionsToPackage{table,usenames}{xcolor}
\documentclass[nonblindrev]{informs3customR}
\usepackage{appendix}
\usepackage{algorithm}
\usepackage{algpseudocode}

\usepackage{amsmath}
\usepackage[frak=mma]{mathalfa}
\usepackage{graphicx}
\usepackage{amsfonts}
\usepackage{amssymb}
\usepackage{xcolor}
\usepackage{booktabs} 
\usepackage{mathrsfs}
\usepackage{natbib}
\usepackage[shortlabels]{enumitem}
\usepackage{setspace}

\usepackage[normalem]{ulem}
\usepackage[pagebackref=true, colorlinks=true, linkcolor=blue, citecolor=blue, urlcolor=blue, anchorcolor=blue]{hyperref}

\usepackage{bold-extra} 

\usepackage{url}

\usepackage[capitalise,noabbrev]{cleveref}
\usepackage{physics}

    \usepackage{tikz}
\usetikzlibrary{arrows.meta, positioning, shapes.geometric, fit, calc}

\usepackage{pgfplots}
\usepgfplotslibrary{groupplots}
\pgfplotsset{compat=1.18}
\usetikzlibrary{fillbetween}

\usepackage{hhline}
\usepackage{latexsym}
\usepackage{ifthen}
\usepackage{lscape}
\usepackage{chngpage}
\usepackage{bibunits}
\usepackage{relsize}
\usepackage{bm} 
 
 \usepackage{chngcntr}
\counterwithout{table}{section}

\newtheorem{prop}{Proposition}

\newtheorem{lem}{Lemma}

\newtheorem{rem}{Remark}

\newtheorem{exm}{Example}

\def\E{\mathbb{E}}

\def\eps{\epsilon}

\usepackage[english]{babel}

\newcommand{\e}{\mathbb{E}}

\newcommand{\D}{\mbox{\rm d}}
\newcommand{\Var}{\mathbb{V}\mbox{\rm ar}}
\newcommand{\Pro}{\mathbb{P}}

\newcommand{\LB}{\mbox{\tiny \rm L}}
\newcommand{\UB}{\mbox{\tiny \rm U}}
\newcommand{\ba}{\bar{\alpha}}

\newcommand{\C}[1]{\mathcal{#1}}

\DefineNamedColor{named}{OliveGreen}    {cmyk}{0.64,0,0.95,0.40}
\DefineNamedColor{named}{BrickRed}      {cmyk}{0,0.89,0.94,0.28}
\DefineNamedColor{named}{SeaGreen}      {cmyk}{0.69,0,0.50,0}
\DefineNamedColor{named}{BlueGreen}     {cmyk}{0.85,0,0.33,0}
\DefineNamedColor{named}{BlueViolet}    {cmyk}{0.86,0.91,0,0.04}
\DefineNamedColor{named}{NavyBlue}      {cmyk}{0.94,0.54,0,0}

\renewcommand{\thefootnote}{\fnsymbol{footnote}}

\newtheorem{dfn}{Definition}

\newtheorem{cor}{Corollary}

\newtheorem{assum}{Assumption}

\newcommand{\ti}[1]{\scalebox{0.45}{\mbox{\rm #1}}} 

\DefineNamedColor{named}{NavyBlue}{cmyk}{0.94,0.54,0,0}

\usepackage[pagewise]{lineno}

\makeatletter
\renewcommand\footnoterule{%
  \kern -3pt
  \hrule width \columnwidth height 0.4pt
  \kern 2.6pt
}
\makeatother

\begin{document}

\RUNAUTHOR{Caldentey, Xie}
\RUNTITLE{Intertemporal Demand Allocation for Inventory Control}
\TITLE{\bf   Intertemporal Demand Allocation for \\  Inventory Control in Online Marketplaces\\}
\ARTICLEAUTHORS{%
\AUTHOR{{\large Ren\'e Caldentey ~~~and~~~ Tong Xie}\\}
\AFF{{\normalsize Booth School of Business, The University of Chicago, Chicago, IL 60637} \\
\EMAIL{rene.caldentey@chicagobooth.edu, txie0@chicagobooth.edu}}
}%

\date{}
\ABSTRACT{\vspace{0.5cm}

Online marketplaces increasingly do more than simply match buyers and sellers: they route orders across competing sellers and, in many categories, offer additional ancillary fulfillment services that make seller inventory a source of platform revenue. We investigate how a platform can use intertemporal demand allocation to influence sellers’ inventory choices without directly controlling stock. We develop a model in which the platform observes aggregate demand, allocates orders across sellers over time, and sellers choose between two fulfillment options, \emph{fulfill-by-merchant} (\textup{FBM}) and \emph{fulfill-by-platform} (\textup{FBP}), while replenishing inventory under state-dependent base-stock policies. The key mechanism we study is informational: by changing the predictability of each seller’s sales stream, the platform changes sellers’ safety-stock needs even when average demand shares remain unchanged. We focus on nondiscriminatory allocation policies that give sellers the same demand share and forecast risk. Within this class, uniform splitting minimizes forecast uncertainty, whereas any higher level of uncertainty can be implemented using simple low-memory allocation rules. Moreover, increasing uncertainty above the uniform benchmark requires routing rules that prevent sellers from inferring aggregate demand from their own sales histories. These results reduce the platform’s problem to choosing a level of forecast uncertainty that trades off adoption of platform fulfillment against the inventory held by adopters. Our analysis identifies demand allocation as a powerful operational and informational design lever in digital marketplaces.

\smallskip

}
\KEYWORDS{Inventory control; Dynamic demand allocation; Online marketplace operations; Time-series forecasting.\vspace{0.3cm}}

\maketitle

\parindent 0em
~

\vspace{-3cm}

\section{Introduction}\label{sec:introduction}

Digital marketplaces increasingly operate not only as matching intermediaries, but also as market makers that control how demand is allocated across third-party sellers over time. In many categories, multiple sellers offer identical or nearly identical products, and platform algorithms determine which offer is made salient, and hence which seller is most likely to receive the next order. For example, on Amazon, the ``buy box'' (``featured offer'') selects the default merchant for one-click purchasing, thereby routing marginal order flow among competing sellers \citep{LeeMusolff2025PlatformGuidedSearch,Raval2023SteeringOneClick,Markup2021BuyBox}.\footnote{This control is economically consequential because third-party sellers account for a large share of marketplace activity; in recent quarters, they represented more than 60\% of Amazon's worldwide paid units \citep{AmazonQ4Results2025}.}\smallskip

Motivated by this reality, the central question we investigate in this paper is: ``{\em How can a platform shape sellers' inventory decisions without directly controlling their stock levels?}'' Platforms often have several levers at their disposal to influence inventory choices, including restock limits, capacity reservations, minimum performance thresholds, and automated replenishment recommendations. In this paper, however, we focus on a different and more indirect mechanism that, to the best of our knowledge, has not been studied in the literature. The mechanism we exploit builds on the informational asymmetry between the platform and sellers: because sellers replenish inventory based on their own realized sales histories, platform control over demand allocation affects not only a seller's average demand, but also the information revealed through its order stream and, hence, the forecastability of its future demand. Because forecastability determines safety-stock requirements under standard inventory policies, it becomes a key operational lever through which the platform can influence decentralized inventory decisions.
\smallskip

This mechanism is especially salient when the platform also provides fulfillment services. In programs such as Fulfillment by Amazon (FBA) and Walmart Fulfillment Services (WFS), sellers retain ownership of inventory while placing stock inside the platform's logistics network, where the platform provides storage together with downstream fulfillment services such as pick--pack--ship, returns, and customer service \citep{AmazonAnnualReport2024,LaiLiuXiaoZhao2022FBA,LiShenLiaoCaiChen2024FulfillmentService,WalmartCorp2020WFS,WalmartMarketplaceGrowth2024}. We use \textup{FBP} as a generic label for such arrangements and \textup{FBM} for seller-managed fulfillment.\footnote{Comparable programs include Walmart Fulfillment Services (WFS) for Walmart Marketplace sellers \citep{WalmartCorp2020WFS,SupplyChainDive2020WFS,WalmartMarketplaceGrowth2024,WalmartWFSStrategies2024}, Mercado Libre's managed fulfillment offering \citep{Reuters2024MELIBrazilInvestment,Processexcellence2025MELIFull}, and Zalando Fulfillment Solutions (ZFS) in Europe \citep{ZalandoCMD2021PlatformPartners}.} 
These programs are often framed as service-quality upgrades for consumers, but they also create a second operational relationship between the platform and sellers: the platform becomes an \emph{inventory carrier} that can monetize storage and fulfillment activity \citep{ShiYuDong2021WarehousingRM,AmazonAnnualReport2024,amazon_fba_fees_guide}. This dimension has become economically important in its own right. For Amazon, ``third-party seller services'' net sales---which include commissions and related fulfillment and shipping fees---grew from \$117.7B in 2022 to \$156.1B in 2024 \citep{AmazonAnnualReport2024}. Similar growth in fulfillment services has been observed for Walmart and Alibaba\footnote{Walmart reports continued rapid marketplace growth and notes that more than 60\% of its top sellers are enrolled in WFS \citep{WalmartMarketplaceGrowth2024,WalmartWFSStrategies2024}. In Alibaba's ecosystem, Cainiao's segment revenue reached RMB~99.0B in fiscal year 2024, up 28\% from the previous year \citep{AlibabaFY2024Results2024}.}. Although inventory remains a decentralized seller decision, \textup{FBP} gives the platform a direct economic stake in inventory through storage and fulfillment fees.\footnote{Fulfillment services can account for a substantial share of the platform's effective take rate. For example, while Amazon's referral fees are around 15\%, FBA fees for storage, packing, and delivery often consume 20--35\% of seller revenue, making fulfillment one of the largest cost components for many third-party sellers \citep{kaziukenas2023amazon50cut}.} The platform therefore influences inventory not by dictating stock levels, but by shaping the realized demand process on which sellers base their replenishment decisions. \smallskip


We formalize this idea in a stylized model where aggregate demand is allocated across sellers who manage inventory using base-stock policies. Because safety stock depends on forecast errors, the platform’s allocation policy affects inventory by altering the predictability of seller-level demand, even when long-run demand shares remain unchanged. 
This mechanism generates an extensive–intensive trade-off. Increasing forecast uncertainty raises sellers’ safety stock and inventory levels but can reduce participation in the platform’s fulfillment program if higher inventory costs make it less attractive. The platform therefore trades off higher inventory among participating sellers against lower adoption.\smallskip

Our results characterize both the power and the limits of this lever. The main takeaway is that the platform can materially influence on-platform inventory, and thus potentially service levels and inventory-related revenues, using simple history-dependent allocation mechanisms. First, we characterize the levels of seller-level short-horizon forecast uncertainty (root MSFE) that are implementable when the platform acts in a {\em neutral} manner, meaning that sellers are treated symmetrically in a sense we make precise later, and subject to seller participation constraints.  Second, we show that any implementable increase relative to the uniform-split benchmark, under which demand is allocated evenly across sellers, can be generated by low-order moving-average modifications of the uniform allocation rule. Third, we identify a structural role for informational asymmetry in the form of \emph{non-invertibility}: to induce strictly higher safety stocks under neutrality, the platform must exploit its informational advantage in how aggregate demand innovations are revealed through demand allocation.\smallskip

Taken together, our results show that platform allocation is not only a demand-routing tool but also an inventory-design tool. A platform that cannot directly control seller stock levels can still shape on-platform inventory by shaping what sellers are able to forecast. \smallskip

\paragraph{Roadmap.} The rest of the paper is organized as follows. \cref{sec:literature} reviews the related literature and positions our work within it. \cref{sec:motivating-example} introduces the model, including the platform's demand-allocation problem, sellers' inventory decisions, the \textup{FBP}/\textup{FBM} choice, and the neutrality requirement. \cref{sec:preliminaries} develops the forecasting framework that links allocation dynamics to seller-level forecastability. \cref{sec:optpolicy} characterizes optimal neutral allocation policies and clarifies the role of non-invertibility. \cref{sec:IllustrativeExample} presents an illustrative numerical example. \cref{sec:extension} discusses extensions, and \cref{sec:conclusions} provides concluding remarks.

\section{Related Literature}\label{sec:literature}

This paper lies at the intersection of three research streams. The first studies how digital marketplaces operating
as two-sided platforms shape market outcomes by controlling the \emph{allocation of demand} (order flow) across competing
sellers, and how these allocation choices interact with seller operations---especially inventory---when the platform also
offers fulfillment and inventory-carrying services. The second concerns neutrality and non-discrimination constraints in
platform design: beyond operational objectives, platforms may face reputational, legal, and political pressures that limit
overt discrimination among comparable sellers. The third stream is methodological and concerns forecasting
stationary time series and how forecast uncertainty propagates into inventory decisions through base-stock policies,
safety-stock levels, and information-sharing (or information-design) frictions. In what follows, we review these streams in tandem and position our contribution relative to them.

\subsection{Demand allocation and inventory in online marketplaces}\label{subsec:lit_platforms}

A growing literature in operations and economics emphasizes that many digital marketplaces do not merely match buyers and sellers; they actively \emph{shape} market outcomes by controlling ranking and order-routing rules. When multiple sellers offer close substitutes, the platform effectively implements a \emph{demand-allocation mechanism} that shifts marginal orders and therefore affects competition, entry incentives, and surplus. Recent work formalizes how platform-guided search, algorithmic order routing, and default-merchant selection (e.g., Amazon's ``buy box'') reshape two-sided market outcomes, seller competition, and inventory incentives \citep{LeeMusolff2025PlatformGuidedSearch,CaoHu2024,HagiuJullien2011DivertSearch,Raval2023SteeringOneClick}.\smallskip

A complementary literature in operations management examines platforms that bundle marketplace services with fulfillment and inventory management. Amazon's FBA is the canonical example: sellers pre-position inventory in the platform's logistics network, and the platform handles pick-pack-ship and customer service. Researchers study why a platform would offer such services even when third-party sellers compete with its own retail arm \citep{LaiLiuXiaoZhao2022FBA}, how seller-side tying of fulfillment services intensifies price competition \citep{decorniere2025fulfilled}, and how fulfillment design interacts with fee structure, asymmetric information, and format choice \citep{LiShenLiaoCaiChen2024FulfillmentService,LiZhaoZhang2024DealershipMarketplaceFulfillment,sun2020fulfillment,sun2021cross}. A common theme is that fulfillment programs transform the platform's revenue model: inventory held within the platform ecosystem becomes a direct source of monetization through storage fees and fulfillment activity \citep{ShiYuDong2021WarehousingRM,LiShenLiaoCaiChen2024FulfillmentService}. 
Our contribution is complementary: rather than focusing on fee design or contracting, we study how a platform can influence sellers' endogenous inventory decisions through \emph{temporal engineering of demand allocation} under neutrality-style constraints.\smallskip


\subsection{Neutrality and Platform Accountability}\label{subsec:lit_neutrality}

Platforms face growing reputational, legal, and political pressures to treat comparable sellers even-handedly. Recent policy scrutiny of recommendation fairness \citep{Phillips2024HowFair}, regulatory discussions under the Digital Markets Act \citep{amazon_dma_2024}, and investigative work on platform-controlled defaults such as Amazon's ``buy box'' \citep{Markup2021BuyBox,Raval2023SteeringOneClick} all highlight these accountability concerns.\smallskip


In response, a growing operations literature models \emph{explicit} neutrality or fairness constraints that restrict the platform’s feasible
set of strategies. In the context of search neutrality and online retail platforms, recent work studies how imposing such
constraints changes equilibrium outcomes, platform incentives, and welfare \citep{ZouZhou2025SearchNeutrality}. 
Related
constraint-based approaches appear in fairness-aware assortment and recommendation design such as market-share balancing
\citep{housni2025fairness}, pairwise fairness \citep{fair}, and minimum exposure or share constraints \citep{Lu2024}.
Our modeling choice follows this
tradition: we impose \emph{neutrality} on demand allocation by requiring that comparable sellers be treated
symmetrically in expectation---each seller receives the same long-run mean demand share and faces the same one-step-ahead
forecast uncertainty. This notion captures a practical accountability constraint while still allowing the platform to retain
design flexibility through the \emph{temporal} structure of allocation.

\subsection{Forecasting stationary time series and its role in inventory and supply chains}\label{subsec:lit_forecasting}

Standard inventory theory suggests that sellers replenish inventory using base-stock policies, under which the order-up-to level decomposes into a forecast component plus a safety-stock component, thereby creating a direct link between \emph{inventory holdings} and \emph{forecast uncertainty}. A modern
synthesis of the forecasting--inventory interface is provided by \citet{GoltsosSyntetosGlockIoannou2022MindTheGap}, which
highlights both the historical separation between forecasting and inventory research and recent efforts to integrate them.
Our paper contributes from a platform-design perspective: the platform sits upstream of sellers’ forecasting
problems and can shape the time-series properties of the demand data sellers observe.\smallskip

A related literature studies how forecast-error estimation, autocorrelation, and estimation uncertainty affect safety-stock calculations for lead-time demand \citep{PrakTeunterSyntetos2017SafetyStocksForecastedDemand,BabaiDaiLiSyntetosWang2022LeadTimeVariance,BeutelMinner2012SafetyStockCausalForecasting}. These results reinforce a core premise behind our mechanism-design problem: inventory depends on the \emph{structure} of forecast uncertainty at the relevant horizons, which is shaped not only by the mean and variance of demand but also by its autocovariance structure.\smallskip

Finally, a long-standing operations literature studies the value of information and information sharing in supply chains, and time-series perspectives
integrate forecasting and inventory control across stages \citep{ozan_info_design_supply_chain,GavirneniKapuscinskiTayur1999ValueInformation,CachonFisher2000ValueSharedInformation}. More recently,
\citet{CaldenteyGiloniHurvichZhang2025InfoDesignSharing} develop an information-design view of inventory systems and show how
the structure of an ordering process and non-invertibility can determine the value of information. Our paper is closely
aligned with this perspective but shifts the setting to a marketplace with multiple symmetric sellers: the platform’s demand
allocation mechanism governs how aggregate innovations are revealed to sellers over time and therefore how forecastability
propagates into base-stock choices under neutrality-style constraints.\smallskip

\paragraph{\bf Positioning and contributions.}

Relative to the platform and fulfillment literatures, we study a distinct mechanism: dynamic demand allocation that engineers the \emph{forecastability} of seller-level demand, thereby shaping inventory and FBP adoption. Relative to the inventory-forecasting and information-sharing literatures, we \emph{endogenize} each seller's demand process through a platform design choice under neutrality constraints. Unlike other information-design mechanisms, such as the burgeoning literature on Bayesian persuasion \citep{KamenicaGentzkow2011,BergemannMorris2019}, the platform's instrument here is not a signal structure but an allocation rule constrained by the requirement that seller-level demands sum to market demand period by period. As a result, our proposed mechanism operates not through belief manipulation, but through the manipulation of forecastability. \smallskip

We view our contributions as having both a conceptual and a concrete component. At a conceptual level, we identify a novel operational mechanism through which a platform can influence sellers’ inventory-management decisions. By controlling how realized aggregate demand is allocated across sellers over time, the platform effectively shapes the demand histories---and the information they carry---that sellers observe and use for forecasting, thereby influencing both their fulfillment-mode choices (\textup{FBP} versus \textup{FBM}) and the inventory level. More concretely, we characterize the implementable range of seller-level one-step-ahead forecast uncertainty under neutral demand allocation, showing that uniform splitting attains a sharp lower bound, while simple low-order moving-average perturbations---with low memory and based only on recent demand realizations---can generate higher uncertainty; we show that achieving strict improvements beyond the uniform benchmark necessarily requires non-invertibility, thereby identifying a precise sense in which the platform’s informational advantage can matter even under symmetric treatment; and we reduce the platform’s design problem to a tractable one-dimensional optimization problem, which makes transparent the trade-off between fulfillment adoption and inventory level.\smallskip

\section{Model Setup}\label{sec:motivating-example}
\setcounter{footnote}{1}

This section introduces the model of a platform that allocates demand across sellers who manage inventory and choose between \textup{FBM} and \textup{FBP}, as schematically depicted in Figure~\ref{fig:System}.

\begin{figure}[htbp]
\centering
\includegraphics[width=0.7\textwidth]{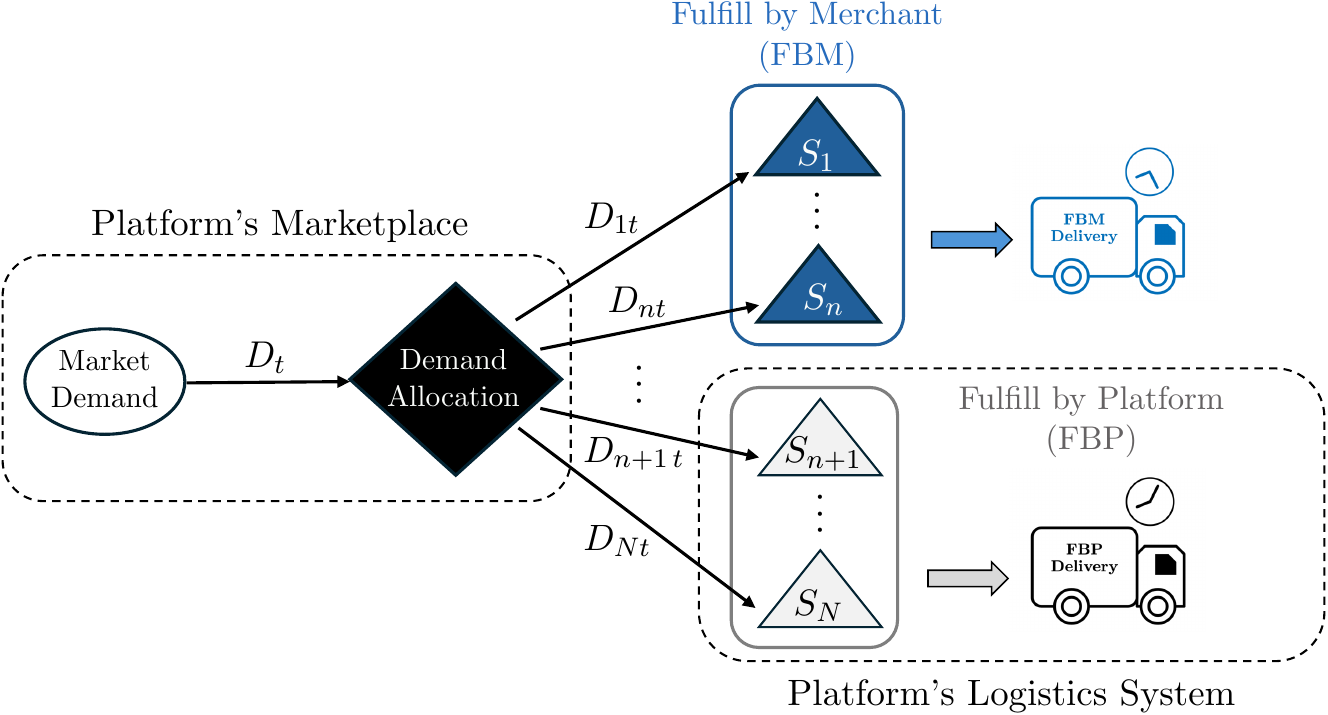}\smallskip 

\caption{Overview of the platform marketplace. Aggregate demand $D_t$ is allocated across sellers, who choose between FBM and FBP and manage their inventory stocks using state-dependent base-stock policies.}
\label{fig:System}
\end{figure}

The platform observes aggregate demand $D_t$ each period and allocates it across $N$ sellers, generating seller-level demand streams $\{D_{nt}\}_{n\in[N]}$ satisfying $\sum_{n} D_{nt}=D_t$. The allocation may depend on past demand. Sellers choose between self-fulfillment (\textup{FBM}) and platform fulfillment (\textup{FBP}). Each seller manages inventory using a base-stock policy with base-stock level $S_n$ based on its allocated demand process, so the platform’s allocation affects both inventory levels and fulfillment-mode choices. The remainder of this section formalizes these elements.\medskip


\noindent
$\diamond$  \textbf{\textsc{Platform intermediation and fulfillment modes}.}
We consider a platform that intermediates transactions between consumers and a pool of
$N$ third-party sellers, indexed by $n\in[N]:=\{1,\ldots,N\}$.
Time is discrete, $t=0,1,2,\ldots$. In each period $t$, aggregate market demand for a representative product
(or product class) is $D_t$. The platform acts as a \emph{market maker}: it determines how realized market
demand is split across sellers by choosing an allocation vector $(D_{nt})_{n\in[N]}$ satisfying
\begin{equation}\label{eq:allocation}
\sum_{n\in[N]} D_{nt} \;=\; D_t , \qquad t=0,1,2,\ldots .
\end{equation}
The allocation may be history-dependent and may reflect the platform's information at time $t$.\medskip

The platform charges a fixed intermediation fee $\rho$ per transaction (per unit of fulfilled demand),
independently of how the order is fulfilled.
Each seller chooses between two fulfillment modes: \emph{fulfill-by-merchant} (FBM) or \emph{fulfill-by-platform} (FBP). Under FBM, the seller manages fulfillment independently; under FBP, the platform provides storage and fulfillment services.\medskip

\noindent
$\diamond$ \textbf{\textsc{Seller heterogeneity and operating costs}.}
Sellers differ in their operating costs. If seller $n$ fulfills orders itself (\textup{FBM}), its per-unit, per-period costs are $(h_n,b_n,f_n)$, where $h_n$ is the holding cost, $b_n$ the stockout/backorder penalty, and $f_n$ the fulfillment cost. Under platform fulfillment (\textup{FBP}), the relevant triplet is $(H,b_n,F)$, where $H$ and $F$ are the effective holding and fulfillment costs under \textup{FBP}. 

Let ${\cal M}_n\in\{\textup{FBM},\textup{FBP}\}$ denote seller $n$’s mode and define
\[
\bar h_n :=
\begin{cases}
h_n, & {\cal M}_n=\textup{FBM},\\
H,   & {\cal M}_n=\textup{FBP},
\end{cases}
\qquad
\bar f_n :=
\begin{cases}
f_n, & {\cal M}_n=\textup{FBM},\\
F,   & {\cal M}_n=\textup{FBP}.
\end{cases}
\]
Thus, seller $n$ always faces backorder cost $b_n$, while holding and fulfillment costs depend on ${\cal M}_n$.

In what follows, we impose the following assumption to focus on instances that most clearly capture the canonical
trade-off between the \textup{FBM} and \textup{FBP} operating modes:

\begin{assum}\label{assm:nonnegative_deltas}
For each seller $n\in[N]$, $F\le f_n$ and $H\ge h_n$.
\end{assum}

That is, platform fulfillment weakly reduces per-order fulfillment cost but weakly increases per-unit inventory carrying cost. This canonical trade-off is empirically grounded: \textup{FBP} pools shipments across sellers and positions inventory in distributed warehouses, lowering per-order costs ($F\leq f_n$), but premium urban warehouse space and storage fees raise carrying costs ($H\geq h_n$) \citep{amazon_fba_fees_guide,houde_newberry_seim_density,LiShenLiaoCaiChen2024FulfillmentService}. The assumption is not essential for the theory or the analysis we develop, but it ensures that neither mode trivially dominates the other along both cost dimensions.

\medskip

\noindent
$\diamond$ \textbf{\textsc{Market demand}.} 
We focus on stationary demand to isolate the effects of demand allocation from other forces such as seasonality, life-cycle trends, or macroeconomic shocks. This assumption still allows for a rich class of demand processes, including ARMA models, while ensuring that the long-run statistical properties of demand remain stable. As a result, sellers face constant one-step-ahead forecast uncertainty and can therefore manage inventory using time-invariant base-stock levels.
\smallskip

\begin{assum}[Stationary Demand]\label{assm:Normal}
The demand process $\{D_t\}_{t\ge 0}$ is a weakly stationary, purely non-deterministic Gaussian process that admits the
$\mathrm{MA}(\infty)$\footnote{See \ref{app:MA_infty} for discussion on $\mathrm{MA}(\infty)$ representation and Gaussian demand.} representation
\begin{equation}\label{def:Dt}
D_t=\mu+\sum_{k=0}^\infty \psi_k\,\eps_{t-k},
\end{equation}
for some sequence $\psi=\{\psi_k\}_{k\ge 0}\in\ell^2$, where $\mu=\mathbb{E}[D_t]$ and $\{\eps_t\}$ is a Gaussian
white-noise sequence with $\Var(\eps_t)=\sigma_\eps^2$. We further assume that $\{D_t\}$ is invertible\footnote{A formal definition and discussion of invertibility is provided in Section \ref{sec:preliminaries}.} with respect to
$\{\eps_t\}$.
\end{assum}\smallskip

In what follows, we normalize $\sigma_\eps=1$. This is without loss of generality because any scale factor in the
innovations can be absorbed into $\psi$ (equivalently, we measure demand in units of the innovation standard
deviation). \smallskip

Letting ${\cal B}$ denote the {\sf Backshift operator}, i.e., ${\cal B}\eps_t=\eps_{t-1}$, the MA($\infty$) representation of the market demand in \eqref{def:Dt} can be written compactly as $D_t=\mu+\psi({\cal B})\,\eps_t$, where $\psi(z)=\sum_{k=0}^\infty \psi_k\, z^k$ is the $z$-transform representation of centered demand process $D_t-\mu$.\smallskip

\noindent
$\diamond$ \textbf{\textsc{Admissible demand allocation policies}.}
In each period $t$, the platform observes $D_t$ and chooses an allocation $(D_{nt})_{n\in[N]}$ satisfying \eqref{eq:allocation}, possibly via a history-dependent rule measurable with respect to the platform's information set $\mathcal{F}_t$, the filtration generated by $\{D_s\}_{s\le t}$. We model allocation policies as linear filters applied to aggregate demand, which preserves stationarity and enables tractable characterization of forecastability.

\begin{dfn}[Admissible Policy]\label{def:Admissible}
An admissible demand allocation policy\footnote{Under invertibility, representations based on $\{D_t\}$ and $\{\eps_t\}$ are equivalent for demand processes whose $z$-transform $\psi(z)$ satisfies mild technical conditions, see \ref{app:MA_infty} for details.} $\pi=\bigl(\mu_n,\C{T}_n(z):n\in[N]\bigr)$,
consists, for each seller $n\in[N]$, of a mean allocation $\mu_n$ and a transfer function $\C{T}_n(z)$, where \linebreak $\C{T}_n(z)=\sum_{k=0}^\infty \tau_{nk}z^k$ for some sequence $\{\tau_{nk}\}_{k\ge 0}\in\ell^2$. Under such a policy, seller $n$'s demand process is given by
\[
D_{nt}=\mu_n+\sum_{k=0}^{\infty}\tau_{nk}\,(D_{t-k}-\mu),\quad \mbox{or equivalently},\quad D_{nt}=\mu_n+\C{T}_n(\C{B})\,(D_t-\mu).
\]
The feasibility condition $D_t=\sum_{n\in[N]}D_{nt}$ then requires both
\[
\sum_{n\in[N]}\mu_n=\mu
\qquad \mbox{and} \qquad
\sum_{n\in[N]}\C{T}_n(z)=1.
\]

We let $\Pi(\psi)$ denote the class of admissible allocation policies for market demand $\psi$.\footnote{
One could also consider allocation rules that do not satisfy $D_t=\sum_{n\in[N]}D_{nt}$ period-by-period, which would require the platform to absorb the mismatch, for example by holding its own inventory. We focus on contemporaneous splits to isolate the role of order allocation across sellers.}
\end{dfn}\smallskip

One particular allocation that serves as a natural benchmark is the {\em uniform allocation}, under which
\begin{equation}\tag{Uniform Allocation}
D_{nt}=\frac{1}{N}\,D_t, \qquad \mbox{that is,}\qquad \mu_n=\frac{\mu}{N},\quad \mbox{and}\quad \C{T}_n(z)=\frac{1}{N}\quad \mbox{for all }n \in [N].
\end{equation}


\noindent
$\diamond$ \textbf{\textsc{Inventory management, seller payoffs, and participation}.}
Each seller $n$ manages inventory to satisfy its allocated demand stream $\{D_{nt}\}$. For expositional simplicity, our baseline model assumes zero replenishment lead time; in \ref{subsec:leadtime_extension} we extend the model to allow for arbitrary lead times that may vary across sellers and may also depend on the seller’s chosen fulfillment mode.\smallskip

Let $\mathcal{F}_{nt}$ denote seller $n$'s information set at time $t$, defined as the filtration generated by $\{D_{ns}\}_{s\le t}$. Just before demand in period $t+1$ is realized, seller $n$ chooses a state-dependent order-up-to level $S_{nt}$ measurable with respect to $\mathcal{F}_{nt}$ and places an order
\[
Q_{nt}=\bigl(S_{nt}-I_{nt}\bigr)^+,
\]
where $I_{nt}$ denotes net inventory after serving demand in period $t$ and $(x)^+=\max\{x,0\}$. 

Following the standard base-stock logic, seller $n$ selects $S_{nt}$ to minimize expected holding and backorder costs for period $t+1$ given $\mathcal{F}_{nt}$:
\[
S_{nt}
=\argmin_{S}\;
\e\!\left[
\,\bar h_n\, (S-D_{n,t+1})^+
+ b_n\, (D_{n,t+1}-S)^+
\;\middle|\; \mathcal{F}_{nt}
\right].
\]
Under \cref{assm:Normal}, $D_{n,t+1}$ is Gaussian conditional on $\mathcal{F}_{nt}$, and the optimal base-stock policy takes the familiar form
\begin{equation}\label{eq:basestock}
S_{nt}=m_{nt}+\zeta_n\,\sigma_n,
\qquad
\zeta_n:=\Phi^{-1}\!\left(\frac{b_n}{\bar h_n+b_n}\right),
\end{equation}
where $m_{nt}:=\e[D_{n,t+1}\mid \mathcal{F}_{nt}]$ is the one-step-ahead mean forecast and the {\em one-step-ahead MSFE} is given by
\begin{equation}\label{eq:MSFE}
\sigma_n^2:=\Var\!\bigl[D_{n,t+1}-m_{nt}\mid \mathcal{F}_{nt}\bigr].
\end{equation}
Thus, safety stock is proportional to the one-step-ahead MSFE, so inventory depends directly on forecastability. Our  model assumes that sellers act as \emph{optimal forecasters}. This assumption allows us to isolate the impact of the platform's  allocation policy from suboptimal forecasting heuristics on the sellers' part. In \cref{sec:suboptimal-forecasts}, we discuss the case in which sellers use suboptimal forecasting methods.
\smallskip


Under an admissible allocation policy $\pi=\bigl(\mu_n,\C{T}_n(z):n\in[N]\bigr)$, seller $n$ receives a stationary demand stream with mean $\mu_n$ and one-step-ahead root MSFE $\sigma_n^\pi$, as defined in \eqref{eq:MSFE}. An explicit representation of $\sigma_n^\pi$ in terms of $\psi(z)$ and $\C{T}_n(z)$ is provided in \cref{sec:preliminaries}. Substituting the optimal base-stock level into the seller’s expected single-period profit yields
\begin{equation}\label{eq:sellerprofit_modes}
U_n^m(\pi)
=
(r-\rho-f_n^m)\,\mu_n
-
K_n^m\,\sigma_n^\pi,
\qquad m\in\{\rm FBM,FBP\},
\end{equation}
where $(f_n^{\ti{FBM}},f_n^{\ti{FBP}})=(f_n,F)$ and $K_n^m$ represents the seller's expected inventory cost coefficient. It aggregates the expected holding and backorder costs associated with the optimal safety-stock level $\zeta_n^m$, and is given by 
\[
K_n^m
=
h_n^m\,\zeta_n^m
+
(h_n^m+b_n)\,\mathcal{L}(\zeta_n^m),
\qquad
\zeta_n^m
=
\Phi^{-1}\!\left(\frac{b_n}{h_n^m+b_n}\right),
\]
with $(h_n^{\ti{FBM}},h_n^{\ti{FBP}})=(h_n,H)$ and $\mathcal{L}(z)=\phi(z)-z(1-\Phi(z))$ denoting the standard Normal loss function.  Stationarity implies that $\sigma_n^\pi$ is independent of $t$, so $U_n^m(\pi)$ is time-independent as well.\footnote{Equation \eqref{eq:sellerprofit_modes} follows from substituting the optimal base-stock level into the expected single-period holding/backorder cost under Gaussian demand.} Therefore, seller $n$ chooses \textup{FBP} over \textup{FBM} whenever
\begin{equation}\label{eq:seller_adoption_condition}
U_n^{\ti{FBP}}(\pi)\ge U_n^{\ti{FBM}}(\pi).
\end{equation}
\vspace{-0.1cm}

\noindent
$\diamond$ \textbf{\textsc{Neutral (non-discriminatory) allocation policies}.}
When a platform steers order flow, its allocation algorithm may be scrutinized for favoritism toward particular sellers or the platform’s own retail arm \citep{Markup2021BuyBox,Phillips2024HowFair,amazon_dma_2024}. To capture a baseline notion of non-discrimination, we impose a neutrality constraint requiring symmetric treatment of sellers in expectation: each seller receives the same mean demand share and faces the same one-step-ahead forecast uncertainty (MSFE). This is the notion of non-discrimination most relevant for inventory decisions. We measure the demand risk induced by policy $\pi$ using the one-step-ahead root MSFE $\sigma_n^\pi$, which governs sellers’ inventory and safety-stock decisions. This leads to the following definition.\smallskip


\begin{dfn}[Neutral admissible policies]\label{def:NeutralAdmissible}
Let $\pi=\bigl(\mu_n, \C{T}_n(z):n\in[N]\bigr)$ be an admissible demand-allocation policy. We say that $\pi$ is \emph{neutral} if there exists a constant $\sigma^\pi>0$ such that, for all $n\in[N]$, $\mu_n=\frac{\mu}{N}$, and $\sigma_n^\pi=\sigma^\pi$, where $\sigma_n^\pi$ denotes seller $n$'s one-step-ahead root MSFE under policy $\pi$ as defined in \eqref{eq:MSFE}. We denote by $\Pi_{\ti N}(\psi)$ the class of neutral admissible policies.
\end{dfn}\smallskip

Since every neutral policy assigns the same mean demand $\mu_n=\frac{\mu}{N}$ to each seller, we will abuse notation slightly and write a neutral admissible policy simply as $\pi=\bigl(\C{T}_n(z):n\in[N]\bigr)$, suppressing the dependence on the common mean demand allocation.\smallskip


Under neutrality, 
seller $n$ prefers \textup{FBP} if and only if
\begin{equation}\label{eq:adoption_condition}
\Delta F_n \;\ge\; \frac{N\,\sigma^{\pi}}{\mu}\,\Delta K_n,
\qquad
\text{where}\qquad
\Delta F_n:=f_n-F,
\qquad
\Delta K_n:=K_n^{\ti{FBP}}-K_n^{\ti{FBM}}.
\end{equation}
Under Assumption~\ref{assm:nonnegative_deltas}, we restrict attention to instances in which $\Delta K_n\ge 0$ and $\Delta F_n\ge 0$ for all $n\in[N]$. Thus, higher forecast uncertainty raises inventory costs and reduces FBP adoption. Here, $\Delta F_n$ is the seller’s per-order fulfillment cost savings from using FBP, while $\Delta K_n$ is the corresponding increase in the inventory holding cost. \smallskip

Under a neutral policy, the boundary
$\Delta F=\frac{N\,\sigma^{\pi}}{\mu}\,\Delta K$
separates FBP adopters from FBM sellers; see Figure~\ref{fig:FBP_partition_pos_orthant}. As $\sigma^{\pi}$ increases, this boundary rotates upward, reducing adoption while increasing safety stock for remaining FBP sellers. Hence the platform faces an extensive--intensive trade-off: higher $\sigma^{\pi}$ lowers participation but raises inventory among adopters. Because both margins affect profits, neither extreme is generally optimal.

\begin{figure}[h]
\centering
\begin{tikzpicture}[scale=0.9]

  \pgfmathsetmacro{\xmin}{0}
  \pgfmathsetmacro{\xmax}{6}
  \pgfmathsetmacro{\ymin}{0}
  \pgfmathsetmacro{\ymax}{5}

  \pgfmathsetmacro{\slope}{0.70}

  \begin{scope}
    \clip (\xmin,\ymin) rectangle (\xmax,\ymax);
    \fill[gray!15]
      (\xmin,{\slope*\xmin}) --
      (\xmin,\ymax) --
      (\xmax,\ymax) --
      (\xmax,{\slope*\xmax}) -- cycle;
  \end{scope}

  \draw[->] (\xmin,0) -- (\xmax,0) node[below right] {$\Delta K$};
  \draw[->] (0,\ymin) -- (0,\ymax) node[above left] {$\Delta F$};
  \node[below left] at (0,0) {\scriptsize $0$};

  \draw[thick] (\xmin,{\slope*\xmin}) -- (\xmax,{\slope*\xmax})
    node[pos=0.86, above, sloped, yshift=0pt]
    {\small $\Delta F=\frac{N\,\sigma^{\pi}}{\mu}\,\Delta K$};

  \node at (1.05,4.55) {\large \textup{FBP}};
  \node[fill=white, inner sep=2pt] at (5.45,0.35) {\large \textup{FBM}};

  \tikzset{seller/.style={circle, fill=black, inner sep=1.25pt}}

  \node[seller] at (0.45,0.35) {};
  \node[seller] at (0.65,1.05) {};
  \node[seller] at (0.85,0.70) {};
  \node[seller] at (1.05,1.55) {};
  \node[seller] at (1.25,0.45) {};
  \node[seller] at (1.65,1.10) {};
  \node[seller] at (1.85,2.05) {};
  \node[seller] at (2.05,0.60) {};
  \node[seller] at (2.25,2.95) {};
  \node[seller] at (2.45,1.35) {};
  \node[seller] at (2.65,2.55) {};
  \node[seller] at (2.85,0.95) {};
  \node[seller] at (3.05,3.40) {};
  \node[seller] at (3.25,1.85) {};
  \node[seller] at (3.45,2.20) {};
  \node[seller] at (3.35,3.05) {};
  \node[seller] at (4.05,1.15) {};
  \node[seller] at (4.25,4.35) {};
  \node[seller] at (4.45,2.55) {};
  \node[seller] at (4.65,1.70) {};
  \node[seller] at (5.05,2.80) {};
  \node[seller] at (5.25,0.85) {};
  \node[seller] at (5.45,3.25) {};
  \node[seller] at (5.65,1.35) {};
  \node[seller] at (5.80,2.15) {};

  \node[seller] at (1.20,0.88) {};  
  \node[seller] at (2.40,1.58) {};  
  \node[seller] at (3.20,2.30) {};  
  \node[seller] at (4.10,2.65) {};  

  \node[seller] at (0.35,3.65) {};
  \node[seller] at (2.55,4.10) {};
  \node[seller] at (1.80,3.35) {};
  \node[seller] at (0.60,2.85) {};
  \node[seller] at (0.95,3.75) {};
  \node[seller] at (0.95,3.75) {};

\end{tikzpicture}
\caption{Partition of sellers (dots) in the $(\Delta K,\Delta F)$ plane under a neutral policy $\pi$.
Sellers above $\Delta F=\frac{N\sigma^{\pi}}{\mu}\Delta K$ adopt \textup{FBP} (shaded region), while sellers below prefer \textup{FBM}.}
\label{fig:FBP_partition_pos_orthant}
\end{figure}
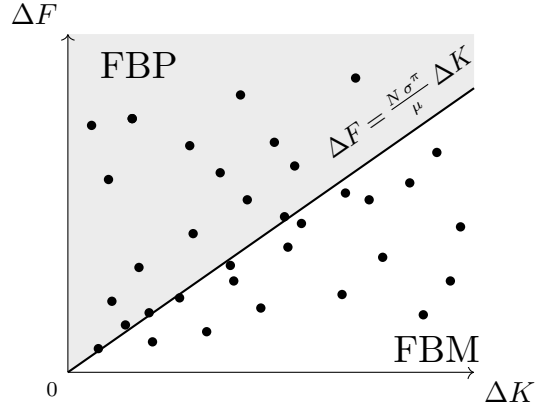


 \medskip

\setcounter{footnote}{1}
\noindent
$\diamond$ \textbf{\textsc{Platform payoff and objective}.}
Under an allocation policy $\pi$, the platform collects a fixed intermediation fee $\rho$ for each fulfilled unit.
Because total transaction volume equals aggregate demand,\footnote{In our baseline model demand is backordered rather than
lost, so every unit of demand eventually becomes a transaction. Allowing for lost sales would make total transactions
(and therefore intermediation revenue) depend on service level.}
the platform’s per-period  expected intermediation revenue equals
$ \mathbb{E}[\rho\,D_t]=\rho\,\mu$, which is independent of $\pi$.\smallskip

In addition, the platform monetizes participation in its \textup{FBP} service. Specifically, we assume that for each unit transacted by a
seller using FBP, the platform collects a fulfillment payoff (net of delivery cost) $\Delta f$, and for each unit of inventory
stored in the platform’s warehouse network it collects a rental fee $\Delta h$ per unit per period.
Thus, to compute the platform’s total payoff under $\pi$, we must first determine which sellers choose FBP vs. FBM. As illustrated in \cref{fig:FBP_partition_pos_orthant}, the set of sellers who adopt the FBP mode is given by 
\[N^{\ti{FBP}}(\pi)
:=\left\{\, n\in[N] : \Delta F_n \geq \frac{N\,\sigma^{\pi}}{\mu}\,\Delta K_n \right\}.\]

Let $\bar S_n^\pi:=\e[S_{nt}]$ denote the average inventory level held by seller $n$ under policy $\pi$. By
\eqref{eq:basestock}, $\bar S_n^\pi=\mu_n+\sigma^{\pi}\,\zeta^{\ti{FBP}}_n$, and the platform’s expected
\emph{gross} per-period payoff can be written as
\begin{eqnarray}\nonumber
V^{\pi}
&:=&
\rho\,\mu + \sum_{n\in N^{\ti{FBP}}(\pi)} \big[\Delta f\, \mu_n + \Delta h\,\bar S_n^\pi\big] =
\rho\,\mu + \sum_{n\in N^{\ti{FBP}}(\pi)} \left[\Delta f\,\frac{\mu}{N} + \Delta h\left(\frac{\mu}{N}+\sigma^{\pi}\,\zeta^{\ti{FBP}}_n\right)\right] \nonumber\\ \label{eq:platform-profit}
&=&
\rho\,\mu + (\Delta f+\Delta h)\,\frac{\mu}{N}\,|N^{\ti{FBP}}(\pi)| + \Delta h\,\Gamma^{\ti{FBP}}(\pi).
\end{eqnarray}

where $|N^{\ti{FBP}}(\pi)|$ denotes the number of sellers adopting FBP and $\Gamma^{\ti{FBP}}(\pi)$ their aggregate safety stock under policy~$\pi$.  Equation \eqref{eq:platform-profit} shows that the platform’s payoff
depends on $\pi$ through (i) which sellers adopt FBP and
(ii) the inventory levels induced by the demand allocation rule.\medskip

Putting all the pieces together, the platform’s optimization problem is given by
\begin{align}\label{eq:platform_prob}
\mbox{Platform's Problem:}\qquad
&\sup_{\pi \in \Pi_{\ti{N}}(\psi)} V^{\pi}
= \rho\,\mu
+ (\Delta f+\Delta h)\,{\,\mu \over N}\, |N^{\ti{FBP}}(\pi)|
+ \Delta h\, \Gamma^{\ti{FBP}}(\pi) \\ \nonumber
\mbox{where}\quad
& N^{\ti{FBP}}(\pi)
=\left\{\, n\in[N] : \Delta F_n \geq \frac{N\,\sigma^{\pi}}{\mu}\,\Delta K_n \right\}
\quad \mbox{and} \quad
\Gamma^{\ti{FBP}}(\pi)=\sigma^{\pi}\,\sum_{n\in N^{\ti{FBP}}(\pi)} \zeta^{\ti{FBP}}_n.
\end{align}
This problem lacks a simple closed-form solution because varying $\sigma^{\pi}$ rotates the adoption boundary, causing discrete changes in the adopter set. Thus, both  $|N^{\ti{FBP}}(\pi)|$ and  $\Gamma^{\ti{FBP}}(\pi)$ are non-smooth in $\sigma^{\pi}$, see \cref{fig:FBP_sigma_payoff}.

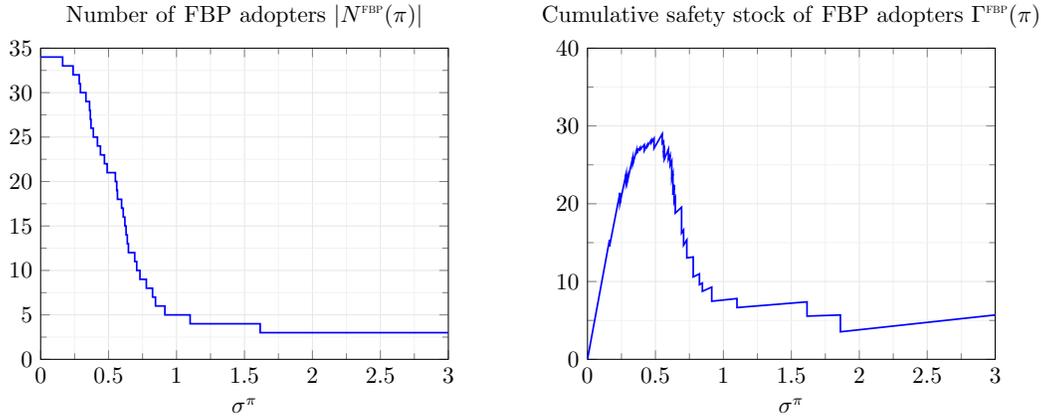
\begin{figure}[h]
\centering
\begin{tikzpicture}[scale=0.8]

\begin{axis}[
  name=plotA,
  width=0.47\textwidth,
  height=0.38\textwidth,
  xmin=0, xmax=3,
  ymin=0, ymax=35,
  xlabel={$ \sigma^{\pi}$},
  title={Number of \textup{FBP} adopters $|N^{\ti{FBP}}(\pi)|$},
  title style={align=center},
  ytick={0,5,10,15,20,25,30,35},
  grid=both,
  major grid style={draw=gray!20},
  minor grid style={draw=gray!10},
  minor tick num=1,
]
\addplot+[const plot mark right, thick, mark=none]
coordinates {
  (0.0000,34)
  (0.1619,34)
  (0.1619,33)
  (0.2389,33)
  (0.2389,32)
  (0.2840,32)
  (0.2840,31)
  (0.2927,31)
  (0.2927,30)
  (0.3333,30)
  (0.3333,29)
  (0.3600,29)
  (0.3600,28)
  (0.3656,28)
  (0.3656,27)
  (0.3707,27)
  (0.3707,26)
  (0.3878,26)
  (0.3878,25)
  (0.4179,25)
  (0.4179,24)
  (0.4400,24)
  (0.4400,23)
  (0.4694,23)
  (0.4694,22)
  (0.4894,22)
  (0.4894,21)
  (0.5500,21)
  (0.5500,20)
  (0.5600,20)
  (0.5600,19)
  (0.5652,19)
  (0.5652,18)
  (0.5957,18)
  (0.5957,17)
  (0.6078,17)
  (0.6078,16)
  (0.6200,16)
  (0.6200,15)
  (0.6286,15)
  (0.6286,14)
  (0.6377,14)
  (0.6377,13)
  (0.6460,13)
  (0.6460,12)
  (0.6923,12)
  (0.6923,11)
  (0.7071,11)
  (0.7071,10)
  (0.7308,10)
  (0.7308,9)
  (0.7778,9)
  (0.7778,8)
  (0.8235,8)
  (0.8235,7)
  (0.8452,7)
  (0.8452,6)
  (0.9143,6)
  (0.9143,5)
  (1.1000,5)
  (1.1000,4)
  (1.6154,4)
  (1.6154,3)
  (1.8611,3)
  (1.8611,3)
  (3.0000,3)
};
\end{axis}

\begin{axis}[
  at={(plotA.outer east)},
  anchor=outer west,
  xshift=1.2cm,
  width=0.47\textwidth,
  height=0.38\textwidth,
  xmin=0, xmax=3,
  ymin=0, ymax=40,
  xlabel={$ \sigma^{\pi}$},
  title={Cumulative safety stock of \textup{FBP} adopters $\Gamma^{\ti{FBP}}(\pi)$},
  title style={align=center},
  grid=both,
  major grid style={draw=gray!20},
  minor grid style={draw=gray!10},
  minor tick num=1,
]
\addplot+[thick, mark=none]
coordinates {
  (0.0000,0.0000)
  (0.1619,15.3567)
  (0.1619,14.5067)
  (0.2389,21.4088)
  (0.2389,20.0588)
  (0.2840,23.8377)
  (0.2840,22.6877)
  (0.2927,23.3854)
  (0.2927,22.7854)
  (0.3333,25.9500)
  (0.3333,25.1000)
  (0.3600,26.8200)
  (0.3600,26.3700)
  (0.3656,26.8881)
  (0.3656,26.5681)
  (0.3707,27.1259)
  (0.3707,26.8209)
  (0.3878,27.2432)
  (0.3878,26.9532)
  (0.4179,27.5816)
  (0.4179,26.8280)
  (0.4400,27.5880)
  (0.4400,27.3460)
  (0.4694,28.1933)
  (0.4694,27.8347)
  (0.4894,28.3961)
  (0.4894,27.1136)
  (0.5500,28.9740)
  (0.5500,27.7675)
  (0.5600,28.0000)
  (0.5600,26.5720)
  (0.5652,26.8962)
  (0.5652,25.7045)
  (0.5957,27.0077)
  (0.5957,25.6977)
  (0.6078,26.2422)
  (0.6078,24.9444)
  (0.6200,25.4820)
  (0.6200,23.1940)
  (0.6286,23.8186)
  (0.6286,21.1243)
  (0.6377,22.4337)
  (0.6377,20.2648)
  (0.6460,20.7076)
  (0.6460,18.7976)
  (0.6923,19.5523)
  (0.6923,16.2331)
  (0.7071,16.6279)
  (0.7071,14.6800)
  (0.7308,15.3454)
  (0.7308,13.0285)
  (0.7778,13.1600)
  (0.7778,10.5889)
  (0.8235,10.9853)
  (0.8235,9.6000)
  (0.8452,9.7970)
  (0.8452,8.7417)
  (0.9143,9.2857)
  (0.9143,7.4686)
  (1.1000,7.8100)
  (1.1000,6.6550)
  (1.6154,7.3846)
  (1.6154,5.5538)
  (1.8611,5.7000)
  (1.8611,3.5370)
  (3.0000,5.7000)
};
\end{axis}

\end{tikzpicture}
\caption{Left: number of \textup{FBP} adopters $|N^{\ti{FBP}}(\pi)|$. Right: cumulative safety stock of \textup{FBP} adopters, $\Gamma^{\ti{FBP}}(\pi)$.}
\label{fig:FBP_sigma_payoff}
\end{figure}
Nevertheless, the problem can be solved efficiently by exploiting the step-function nature of  $|N^{\ti{FBP}}(\pi)|$. Under Assumption~\ref{assm:nonnegative_deltas}, seller $n$ adopts \textup{FBP} under a neutral policy if and only if $\sigma^{\pi}\le \mu\,\Delta F_n/(N\,\Delta K_n)$\footnote{Note that under this convention, the right-hand side is $+\infty$ when$\Delta K_n=0$}. Hence, the number of adopters $|N^{\ti{FBP}}(\pi)|$ is a piecewise-constant, non-increasing
function of $\sigma^{\pi}$, with possible jumps only at the finitely many threshold values
\[
\left\{\; \frac{\mu\,\Delta F_n}{N\,\Delta K_n}\;:\; n\in[N],\ \Delta K_n>0 \right\}.
\]
Between two consecutive thresholds, the adopter set is fixed, so
$\Gamma^{\ti{FBP}}(\pi)$ is linear in $\sigma^{\pi}$.
As a result, the platform’s payoff is piecewise linear in $\sigma^{\pi}$, and an optimal solution can be obtained by
evaluating $V^{\pi}$ at the finitely many critical thresholds. This procedure, however, requires
knowing the range of feasible values of $\sigma^{\pi}$ that the platform can implement under a neutral
allocation policy. 
Although an arbitrarily large $\sigma^\pi$ is theoretically feasible, we impose a participation constraint requiring nonnegative seller payoffs, which yields an upper bound $\sigma_{\UB}$.\medskip

Based on \eqref{eq:sellerprofit_modes}, define
$U_n(\sigma):=\max\{U_n^{\ti{FBP}}(\sigma),\,U_n^{\ti{FBM}}(\sigma)\}$, and let
\begin{equation}\label{eq:sigmaUB}
\sigma_{\UB}:=\sup\bigl\{\sigma \geq 0 \colon U_n(\sigma) \geq 0 \ \text{for all } n \in [N]\bigr\}.
\end{equation}

On the other hand, in \cref{sec:bounds} we show that there exists a lower bound $\sigma_{\LB}$ on the set
of values of $\sigma^{\pi}$ that are attainable under neutral demand-allocation policies. This bound is
derived after introducing the mathematical framework that links an allocation policy $\pi$ to the sellers’
root MSFE $\sigma^{\pi}$ in \cref{sec:preliminaries}.\medskip


\setcounter{footnote}{1}
$\diamond$ \textbf{\textsc{Discussion of Neutrality}.} We close this section by clarifying two central and interconnected modeling choices: our use of \emph{neutral} demand-allocation policies (see \cref{def:NeutralAdmissible}) and our use of forecastability (MSFE), rather than an alternative measure such as unconditional variance, as the relevant notion of demand risk. \smallskip

At a high level, the neutrality assumption reflects growing expectations of non-discrimination in platforms that control order flow \citep{Markup2021BuyBox,Phillips2024HowFair,amazon_dma_2024}. In our formulation, neutrality requires symmetry \emph{in expectation} along two dimensions: mean exposure (equal long-run demand shares) and demand risk (equal root MSFE). Root MSFE is the natural measure here because sellers' safety-stock decisions depend on short-horizon predictability, and it can in principle be audited from realized order histories. This notion implicitly assumes sellers form optimal one-step-ahead forecasts; in \cref{sec:suboptimal-forecasts} we examine how the analysis changes under suboptimal forecasting methods.
A natural alternative is to measure demand risk, and hence neutrality, in terms of unconditional variance rather than MSFE. The next example shows that these two measures are not necessarily aligned.

\begin{exm}[Forecastability and Unconditional Variance]\label{exm:msfe_vs_variance} {\sf  Consider a system with two sellers. 
\;  \\
\textbf{Case 1: Same MSFE, Different Variances.}
Let demand follow the MA(1) process $D_t=\mu+\eps_t+0.8\eps_{t-1}$. Suppose the platform allocates demand so that
\[
D_{1t}=\frac{\mu}{2}+0.5\eps_t-0.2\eps_{t-1}-0.48\eps_{t-2},\qquad
D_{2t}=\frac{\mu}{2}+0.5\eps_t+1.0\eps_{t-1}+0.48\eps_{t-2}.
\]
Then $\Var(D_{1t})=0.5204$ and $\Var(D_{2t})=1.4804$, but the two allocations have the same one-step-ahead forecast uncertainty, with $\mathrm{MSFE}_1=\mathrm{MSFE}_2=0.36$. Thus, if sellers 1 and 2 forecast optimally, they should be indifferent between these two allocations, whereas na\"ive sellers who forecast demand as if it were i.i.d.\ would rank allocation 1 as less risky, and hence preferable, than allocation 2.

\textbf{Case 2: Same Variance, Different MSFEs.}
Now consider aggregate demand $D_t=\mu+\eps_t+0.5\eps_{t-1}$ and
\[
D_{1t}=\frac{\mu}{2}+0.2\eps_t+0.85\eps_{t-1},\qquad
D_{2t}=\frac{\mu}{2}+0.8\eps_t-0.35\eps_{t-1}.
\]
In this case, the two allocations have the same variance, $\Var(D_{1t})=\Var(D_{2t})=0.7625,
$ but different forecast errors:
$
\mathrm{MSFE}_1=0.7225,\,
\mathrm{MSFE}_2=0.64.
$
In this case, optimal forecasters prefer allocation 2 over allocation 1, whereas na\"ive forecasters would be indifferent. Consequently, a neutrality criterion based solely on unconditional variance may fail to detect economically relevant disparities in the actual inventory risk borne by sellers.} \scalebox{0.7}{$\blacksquare$}\smallskip

\begin{figure}[htpb]
    \centering
    \begin{minipage}{0.48\textwidth}
        \centering
        \resizebox{\linewidth}{!}{
\begin{tikzpicture}[scale=0.8]

\begin{axis}[
    width=12cm,
    title style={text width=11cm, align=center},
    height=5.5cm,
    title={\textbf{Seller 1 Demand:} $ \mathrm{Var} = 0.5204 \mid \mathrm{MSFE} = 0.36$},
    ymin=-3.50, ymax=3.50,
    xmin=0, xmax=49,
    grid=major,
    grid style={dashed, gray!30},
    ylabel={$D_{1t}$},
    xticklabels={\empty},
    legend pos=south east
]
\addplot[color=blue, mark=*, mark size=1pt, thick] coordinates {
    (0,0.248) (1,-0.168) (2,0.113) (3,0.698) (4,-0.733) (5,-0.801) (6,0.949) (7,0.180) (8,-1.146) (9,-0.003) (10,-0.115) (11,-0.401) (12,0.437) (13,-0.781) (14,-0.596) (15,0.982) (16,0.434) (17,0.630) (18,-0.031) (19,-0.675) (20,1.451) (21,0.272) (22,-0.625) (23,-0.618) (24,-0.020) (25,0.848) (26,-0.336) (27,0.365) (28,0.177) (29,-0.206) (30,0.046) (31,1.186) (32,-0.088) (33,-1.415) (34,0.629) (35,-0.267) (36,-0.046) (37,-0.436) (38,-0.372) (39,1.305) (40,0.967) (41,-0.157) (42,-0.447) (43,-0.210) (44,-0.624) (45,0.080) (46,0.623) (47,0.966) (48,0.181) (49,-1.458)
};
\addplot[color=black, dashed, thick] coordinates {(0,0) (49,0)};
\end{axis}

\begin{axis}[
    width=12cm,
    title style={text width=11cm, align=center},
    height=5.5cm,
    yshift=-4.8cm,
    title={\textbf{Seller 2 Demand:} $ \mathrm{Var} = 1.4804 \mid \mathrm{MSFE} = 0.36$},
    ymin=-3.50, ymax=3.50,
    xmin=0, xmax=49,
    grid=major,
    grid style={dashed, gray!30},
    xlabel={Time ($t$)},
    ylabel={$D_{2t}$}
]
\addplot[color=red, mark=x, mark size=2pt, thick] coordinates {
    (0,0.248) (1,0.428) (2,0.424) (3,1.343) (4,1.717) (5,0.380) (6,0.443) (7,1.851) (8,1.291) (9,0.170) (10,0.086) (11,-0.436) (12,-0.567) (13,-0.938) (14,-2.660) (15,-2.924) (16,-1.897) (17,-1.126) (18,-0.626) (19,-1.463) (20,-1.115) (21,0.675) (22,0.511) (23,-0.753) (24,-1.665) (25,-1.173) (26,-0.726) (27,-0.910) (28,-0.477) (29,-0.566) (30,-0.881) (31,0.184) (32,1.557) (33,0.347) (34,-0.653) (35,-0.296) (36,-0.722) (37,-1.357) (38,-2.524) (39,-2.170) (40,-0.071) (41,0.919) (42,0.468) (43,-0.184) (44,-1.096) (45,-1.983) (46,-1.660) (47,-0.278) (48,1.008) (49,-0.030)
};
\addplot[color=black, dashed, thick] coordinates {(0,0) (49,0)};
\end{axis}

\end{tikzpicture}


        }
    \end{minipage}\hfill
    \begin{minipage}{0.48\textwidth}
        \centering
        \resizebox{\linewidth}{!}{
\begin{tikzpicture}

\begin{axis}[
    width=12cm,
    title style={text width=11cm, align=center}, 
    height=5.5cm,
    title={\textbf{Seller 1 Demand:} $\mathrm{Var} = 0.7625 \mid \mathrm{MSFE} = 0.7225$},
    ymin=-3.5, ymax=3.5, 
    xmin=0, xmax=49,
    grid=major, 
    grid style={dashed, gray!30},
    ylabel={$D_{1t}$},
    xticklabels={\empty}, 
    legend pos=south east
]
\addplot[color=blue, mark=*, mark size=1pt, thick] coordinates {
    (0,0.395) (1,0.012) (2,0.855) (3,1.248) (4,-0.246) (5,0.117) (6,1.496) (7,0.558) (8,-0.291) (9,0.368) (10,-0.487) (11,-0.347) (12,-0.177) (13,-1.971) (14,-1.579) (15,-0.681) (16,-0.798) (17,0.086) (18,-1.054) (19,-0.907) (20,1.201) (21,-0.178) (22,-0.228) (23,-1.320) (24,-0.441) (25,-0.136) (26,-0.903) (27,0.199) (28,-0.569) (29,-0.368) (30,-0.141) (31,1.572) (32,-0.223) (33,-0.735) (34,0.455) (35,-0.996) (36,-0.214) (37,-1.931) (38,-1.090) (39,0.315) (40,0.662) (41,0.123) (42,-0.159) (43,-0.552) (44,-1.401) (45,-0.704) (46,-0.180) (47,0.967) (48,-0.061) (49,-1.434)
};
\addplot[color=black, dashed, thick] coordinates {(0,0) (49,0)};
\end{axis}

\begin{axis}[
    width=12cm,
    title style={text width=11cm, align=center}, 
    height=5.5cm,
    yshift=-4.8cm, 
    title={\textbf{Seller 2 Demand:} $\mathrm{Var} = 0.7625 \mid \mathrm{MSFE} = 0.64$},
    ymin=-3.5, ymax=3.5, 
    xmin=0, xmax=49,
    grid=major, 
    grid style={dashed, gray!30},
    xlabel={Time ($t$)}, 
    ylabel={$D_{2t}$}
]
\addplot[color=red, mark=x, mark size=2pt, thick] coordinates {
    (0,-0.284) (1,0.567) (2,0.992) (3,-0.720) (4,-0.105) (5,1.345) (6,0.061) (7,-0.644) (8,0.598) (9,-0.561) (10,-0.210) (11,0.357) (12,-1.615) (13,-0.710) (14,0.154) (15,-0.613) (16,0.606) (17,-0.836) (18,-0.812) (19,1.667) (20,-0.694) (21,0.133) (22,-1.163) (23,0.063) (24,0.279) (25,-0.960) (26,0.703) (27,-0.612) (28,-0.023) (29,-0.379) (30,1.692) (31,-0.659) (32,-0.841) (33,1.028) (34,-1.265) (35,0.594) (36,-1.641) (37,-0.377) (38,0.622) (39,0.522) (40,-0.121) (41,-0.152) (42,-0.200) (43,-1.077) (44,-0.058) (45,-0.117) (46,1.007) (47,-0.095) (48,-1.531) (49,0.876)
};
\addplot[color=black, dashed, thick] coordinates {(0,0) (49,0)};
\end{axis}

\end{tikzpicture}


        }
    \end{minipage}
   \caption{Demand sample paths: \textit{Left:} equal MSFE, different variances; \textit{Right:} equal variances, different MSFE.}
    \label{fig:msfe_vs_variance}
\end{figure}
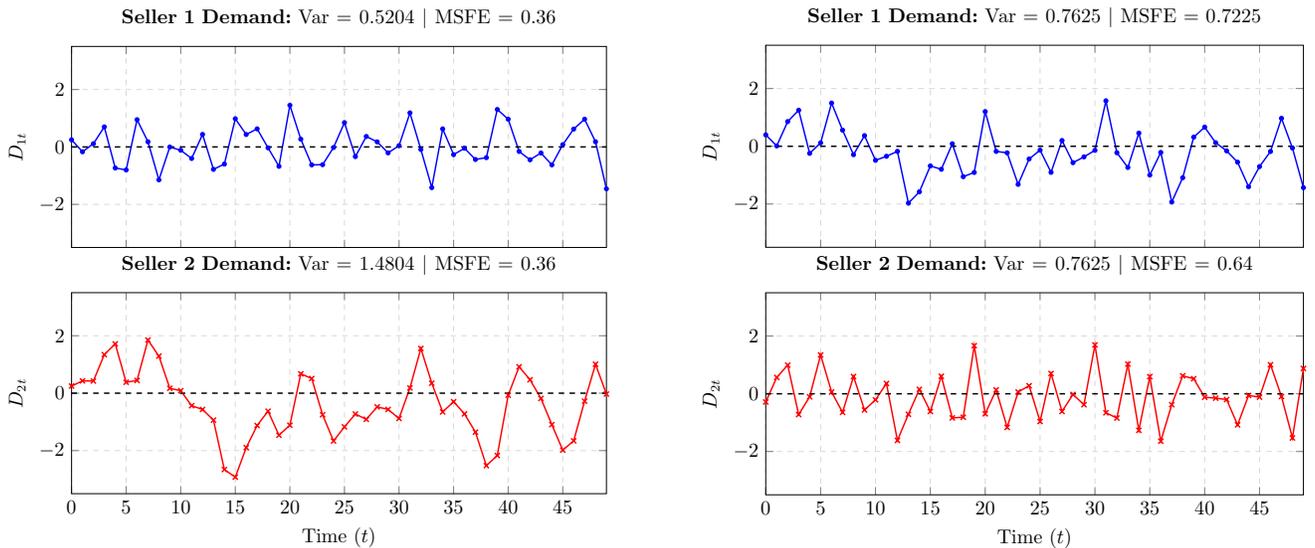
\end{exm}


\section{Preliminaries}\label{sec:preliminaries}
The platform’s allocation rule reshapes the demand signal observed by each seller and, therefore, its forecastability. To analyze this mechanism, we work in the $z$-domain rather than the time domain and use inner--outer factorization to separate two key features of demand: forecastability, governed by the outer factor, and phase structure, captured by the inner factor. This representation allows us to characterize the range of forecast uncertainty that can be implemented under neutrality.


\subsection{Foundations: Inner--outer factorization and invertibility}
\label{subsec:hardy_tools}
We briefly review the essential mathematical tools that underlie our analysis. These tools are mainly drawn from the theory of the $\mathbb{H}^2$ Hardy space, with particular emphasis on inner--outer factorization and its time-series interpretation. The interested reader can consult \ref{app:hardy_tools} for a more detailed presentation, including additional technical background and references.

\smallskip
Our analysis is facilitated by representing stationary demand processes in the $z$-domain. In particular, under an admissible demand allocation policy $\pi=\bigl(\mu_n,\C{T}_n(z):n\in[N]\bigr)$, seller $n$'s demand process satisfies
\[
D_{nt}=\mu_n+\C{T}_n(\C{B})\,(D_t-\mu).
\]
Moreover, by \cref{assm:Normal}, the centered market demand can be represented by $D_t-\mu=\psi(\C{B})\,\eps_t$, where $\psi(z)=\sum_{k=0}^\infty \psi_k z^k$ is the $z$-transform of $\{D_t-\mu\}$. Thus, seller $n$'s demand can equivalently be written as
\[
D_{nt}=\mu_n+\psi_n(\C{B})\,\eps_t,
\qquad \mbox{where}\qquad
\psi_n(z)=\C{T}_n(z)\psi(z).
\]

Thus, under policy $\pi$, the forecastability of seller $n$'s demand---and in particular its root MSFE $\sigma_n^\pi$---is determined by the transfer function $\psi_n(z)$. Note also that the feasibility condition $D_t=\sum_{n\in[N]}D_{nt}$ implies $\sum_{n\in[N]}\psi_n(z)=\psi(z)$.\smallskip

A key tool is the inner–outer factorization of $\psi_n(z)$.
A function $\mathcal{I}\in\mathbb{H}^2$ is \textbf{inner} if it is unimodular on the unit circle, i.e., $|\mathcal{I}(z)|=1$ a.e.\ on $|z|=1$. A function $\mathcal{O}\in\mathbb{H}^2$ is \textbf{outer} if it has no zeros in the open unit disk, i.e., $\mathcal{O}(z)\neq 0$ for $|z|<1$. We write $\mathbb{I}$ and $\mathbb{O}$ for the classes of inner and outer functions in $\mathbb{H}^2$.\smallskip

\begin{lem}[Inner--outer factorization]\label{lem:inner-outer}
Every $f(z)\in\mathbb{H}^2$ admits a factorization
\[
f(z)=\mathcal{O}(z)\,\mathcal{I}(z),
\]
where $\mathcal{I}\in\mathbb{I}$ is \emph{inner} and $\mathcal{O}\in\mathbb{O}$ is \emph{outer}.
\end{lem}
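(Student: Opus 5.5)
The plan is to prove the factorization by the classical Beurling--Riesz construction: build the outer factor from boundary values, then show the quotient is inner. Throughout, $f\not\equiv 0$; the case $f\equiv 0$ is degenerate and would be handled by convention.

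\textbf{Step 1 (the outer factor).} A nonzero $f\in\mathbb{H}^2$ has radial boundary values $f(e^{i\theta})$ a.e. The first task is to show $\log|f(e^{i\theta})|\in L^1$. I would get this from the sub-mean-value property of $\log|f(re^{i\theta})|$, which gives a lower bound on its mean uniformly in $r$; combined with $\log^+|f|\le |f|^2$ and Fatou's lemma, this yields integrability. Then define
\[
\mathcal{O}(z)=\exp\Bigl(\frac{1}{2\pi}\int_0^{2\pi}\frac{e^{i\theta}+z}{e^{i\theta}-z}\,\log|f(e^{i\theta})|\,d\theta\Bigr).
\]
This is analytic and zero-free in the disk. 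Since $\log|\mathcal{O}|$ is the Poisson integral of $\log|f|$, Jensen's inequality gives
\[
|\mathcal{O}(re^{i\phi})|^2\le P_r*|f|^2 .
\]
Hence $\mathcal{O}\in\mathbb{H}^2$, with $|\mathcal{O}|=|f|$ a.e. on the circle.

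\textbf{Step 2 (the inner factor).} Set $\mathcal{I}=f/\mathcal{O}$, which is analytic in the disk because $\mathcal{O}$ has no zeros there. Its boundary modulus is $1$ a.e. The remaining point, and the main obstacle, is to show that $\mathcal{I}$ is bounded by $1$ in the disk, so that $\mathcal{I}\in\mathbb{H}^\infty\subset\mathbb{H}^2$.

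\textbf{Step 3 (bounding the quotient).} I would use the least harmonic majorant property: $\log|f|$ is subharmonic and bounded in $\mathbb{H}^2$-mean, so it satisfies
\[
\log|f(z)|\le P[\log|f(e^{i\theta})|](z)=\log|\mathcal{O}(z)| .
\]
This inequality is the technical heart. I would justify it by applying the sub-mean-value property on circles of radius $r<1$ and letting $r\to1$. The passage to the limit needs convergence of $\log|f(re^{i\theta})|$ to the boundary values in a suitable sense. I would handle this by splitting into $\log^+$ and $\log^-$: the $\log^+$ part is dominated using $\mathbb{H}^2$ convergence $f_r\to f$ in $L^2$, and the $\log^-$ part is controlled by Fatou's lemma, which goes in the right direction for an upper bound.

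Step 3 gives $|\mathcal{I}|\le1$, completing the proof. I would also remark that in the paper's setting, where the relevant functions are typically rational or ARMA-type, one could give an elementary alternative: factor out Blaschke factors $(z-a)/(1-\bar a z)$ for the zeros $a$ inside the disk, and note that the remainder is outer. The general proof would rely on standard Hardy-space references, as in \ref{app:hardy_tools}.
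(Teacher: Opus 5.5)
Your proof is correct. It is the standard general construction: take $\mathcal{O}$ to be the outer function with boundary modulus $|f|$, then use the harmonic-majorant inequality $\log|f(z)|\le P[\log|f|](z)$ to get $|f/\mathcal{O}|\le 1$ in $\mathbb{D}$.

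The paper does not prove the lemma. It quotes the Nevanlinna factorization from the Hardy-space literature in its appendix, and carries it out explicitly only for polynomial transfer functions in the MA$(q)$ example, by splitting off the finite Blaschke product of the zeros in $\mathbb{D}$. That is your closing remark. This elementary route is explicit and covers every policy the paper actually builds, because there $\psi_n=\psi\,\mathcal{T}_n/N$ with $\psi$ outer and $\mathcal{T}_n$ a polynomial. It cannot, however, handle infinitely many zeros or singular inner factors.

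Your construction absorbs both into $\mathcal{I}=f/\mathcal{O}$. More importantly, it delivers an $\mathcal{O}$ that is outer in the Poisson-integral sense of the paper's footnote, not merely zero-free in $\mathbb{D}$ (the paper's working definition). This stronger property is what makes $\sigma=|\mathcal{O}(0)|$ in \cref{lem:msfe_outer_main} valid: a zero-free factor that still contained a singular inner part $S$ would understate the root MSFE by the factor $|S(0)|<1$.

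Your exclusion of $f\equiv 0$ is necessary, not a convention. Inner and outer functions are never identically zero, so the statement as written fails for $f\equiv 0$.

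Two small points need tightening.
\begin{enumerate}
\item In Step 1, the lower bound $\log|f(0)|$ on the circular means requires $f(0)\neq 0$. Otherwise, divide out $z^m$ first.
\item In Step 3, for $z\neq 0$ the mean-value inequality at the center is not enough. You need the Poisson-kernel form $\log|f(z)|\le \frac{1}{2\pi}\int P_{z/r}(\theta)\log|f(re^{i\theta})|\,d\theta$ on the disc of radius $r$. Since $P_{z/r}\to P_z$ uniformly, your Fatou/$L^1$ splitting then closes the argument, using that $\log^+$ is $1$-Lipschitz and $f_r\to f$ in $L^2$.
\end{enumerate}
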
\smallskip

Inner and outer functions admit a natural forecasting interpretation. Inner factors act as \emph{all-pass} filters: multiplying a transfer function by an inner factor preserves its spectral density and, hence, its entire autocovariance structure. Thus, these inner factors are ``invisible'' to an optimal forecaster. By contrast, outer factors are causal, minimum-phase, and invertible, and therefore determine the fundamental one-step-ahead forecastability of the process. Thus, by \cref{lem:inner-outer}, any transfer function can be decomposed into an inner component, which captures phase distortions, and an outer component, which governs forecast uncertainty. The following lemma summarizes the implication most relevant for our analysis.\smallskip

\begin{lem}[Root MSFE]\label{lem:msfe_outer_main}
Let $D_{nt}=\mu_n+\C{T}_n(\C{B})\,(D_t-\mu)$ denote seller $n$'s allocated demand process under an admissible allocation policy $\pi=\bigl(\mu_n,\C{T}_n(z):n\in[N]\bigr)$, and define $\psi_n(z):=\C{T}_n(z)\,\psi(z)$. Then $\psi_n\in\mathbb{H}^2$ and, by \cref{lem:inner-outer}, admits an inner--outer factorization of the form
\[
\psi_n(z)=\C{O}_n(z)\,\C{I}_n(z),
\]
for some $\C{O}_n\in\mathbb{O}$ and $\C{I}_n\in\mathbb{I}$. Moreover, seller $n$'s one-step-ahead root mean squared forecast error under policy $\pi$ is given by
\[
\sigma_n^{\pi}=|\C{O}_n(0)|.
\]
\end{lem}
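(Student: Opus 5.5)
We first check that $\psi_n=\C{T}_n\psi$ lies in $\mathbb{H}^2$, and then identify seller $n$'s innovation process through the outer factor. The identification rests on the Wold decomposition of $D_{nt}$ together with Szeg\H{o}--Kolmogorov.

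\textbf{Membership in $\mathbb{H}^2$.} Because $D_{nt}=\mu_n+\sum_k\tau_{nk}(D_{t-k}-\mu)$ is a well-defined stationary $L^2$ process, its coefficients in terms of $\{\eps_t\}$ are the convolution $\{(\tau_n*\psi)_k\}$. Its variance equals $\frac{1}{2\pi}\int|\C{T}_n(e^{i\omega})\psi(e^{i\omega})|^2\,d\omega$.

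If $\psi$ is bounded on the circle, which covers ARMA demand, this integral is at most $\|\psi\|_\infty^2\|\tau_n\|_2^2<\infty$. In general we interpret admissibility as requiring this quantity to be finite, which is the ``mild technical condition'' referenced in the appendix. Under that reading $\psi_n\in\mathbb{H}^2$, and \cref{lem:inner-outer} gives $\psi_n=\C{O}_n\C{I}_n$.

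\textbf{Constructing the seller's innovations.} Write $\C{O}_n(z)=\sum_k o_k z^k$ and define $\eta_{nt}:=\C{I}_n(\C{B})\eps_t$. Since $|\C{I}_n|=1$ a.e. on the circle, $\{\eta_{nt}\}$ has flat spectral density, so it is a unit-variance Gaussian white noise. Moreover $D_{nt}-\mu_n=\C{O}_n(\C{B})\eta_{nt}$.

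Because $\C{O}_n$ is outer, Beurling's theorem says the polynomial multiples of $\C{O}_n$ are dense in $\mathbb{H}^2$. Translated to the time domain, the closed linear span of $\{D_{ns}-\mu_n\}_{s\le t}$ equals the closed linear span of $\{\eta_{ns}\}_{s\le t}$.

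Under Gaussianity, conditional expectation given $\C{F}_{nt}$ equals the $L^2$ projection onto this closed span. Hence $m_{nt}=\mu_n+\sum_{k\ge1}o_k\eta_{n,t+1-k}$, and the forecast error is $D_{n,t+1}-m_{nt}=o_0\eta_{n,t+1}$. The term $\eta_{n,t+1}$ is orthogonal to, and independent of, $\C{F}_{nt}$, so the conditional variance is the constant $o_0^2=|\C{O}_n(0)|^2$. This yields $\sigma_n^\pi=|\C{O}_n(0)|$.

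\textbf{Main obstacle and fallback.} The delicate step is the span identity, namely that the seller's past generates exactly the past of $\eta_n$ and not something smaller. If invoking Beurling directly feels too terse, the fallback is Szeg\H{o}--Kolmogorov. It states that the one-step prediction error variance of a process with spectral density $f_n=|\psi_n|^2$ is $\exp\{\frac{1}{2\pi}\int\log|\psi_n(e^{i\omega})|^2d\omega\}$. For an outer function, $\log|\C{O}_n(0)|$ equals the average of $\log|\C{O}_n|$ over the circle, and $|\C{O}_n|=|\psi_n|$ a.e. on the circle, so the formula gives the same answer.

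One side condition needs handling: $\psi_n\not\equiv0$ ensures $\log|\psi_n|\in L^1$. If instead $\psi_n\equiv0$, both sides of the claimed identity are zero.
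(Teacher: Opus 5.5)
Your proof is correct. Its skeleton matches the paper's: the appendix proof of \cref{lem:msfe_outer_app}, and the remark after it, also treat multiplication by $\C{I}_n$ as an isometry. Equivalently, they relabel $\eta_t=\C{I}_n(\C{B})\eps_t$ as a new unit white noise and then read off $|\C{O}_n(0)|=\mathrm{dist}(\C{O}_n,z\mathbb{H}^2)$.

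The genuine difference is the step that identifies the seller's information set. The paper projects $\psi_n$ onto $z\mathbb{H}^2$, which in the $z$-domain is the past of the shocks $\{\eps_s\}$, not of $\{D_{ns}\}$. It then asserts that the inner isometry preserves this distance. Taken literally, that yields $|\psi_n(0)|=|\C{I}_n(0)|\,|\C{O}_n(0)|$ (e.g.\ $\psi_0/2$ rather than $\psi_0|\alpha_n|/2$ in \cref{exm:iid}), and the argument never uses that $\C{O}_n$ is outer. The subspace that actually matters is the closed span of the seller's past, namely $z\,\C{I}_n\mathbb{H}^2$. Establishing this is exactly your Beurling step: outer $\Rightarrow$ cyclic $\Rightarrow$ the pasts of $D_n$ and $\eta_n$ span the same space. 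After that, the isometry argument goes through.

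Your Szeg\H{o}--Kolmogorov fallback is an independent route. It bypasses the span identity by using the mean-value identity $\log|\C{O}_n(0)|=\frac{1}{2\pi}\int_{-\pi}^{\pi}\log|\C{O}_n(e^{i\omega})|\,d\omega$.

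Both of your routes need $\C{O}_n$ to be outer in the standard Nevanlinna sense, as delivered by the canonical factorization. Being merely zero-free, as in the main-text definition, is not enough: a zero-free factor can carry a singular inner part, which would make $|\C{O}_n(0)|$ strictly smaller than the true root MSFE. You should state that you use the canonical factorization.

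Finally, you flag that $\C{T}_n\psi\in\mathbb{H}^2$ does not follow from $\ell^2$ coefficients alone, since a product of two $\mathbb{H}^2$ functions is only in $\mathbb{H}^1$. Your sufficient condition $\psi\in\mathbb{H}^\infty$ is appropriate. This addresses a claim the paper states without proof.
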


This lemma identifies the channel through which the platform can influence seller $n$'s root MSFE, and hence its inventory decisions, by shaping the outer component of the allocated demand process. By contrast, inner factors do not affect forecastability, but they provide the platform with additional flexibility to satisfy the feasibility condition $\sum_{n\in[N]}\psi_n(z)=\psi(z)$.\smallskip

Another important concept in our analysis is \emph{invertibility}. Given seller $n$'s allocated demand process $D_{nt}=\mu_n+\psi_n(\C{B})\,\eps_t$, we ask whether seller $n$ can recover the underlying demand shocks $\{\eps_t\}$ using only its own observed demand stream $\{D_{nt}\}$. If so, we say that $\{D_{nt}\}$ is invertible with respect to $\{\eps_t\}$. The next result gives a necessary and sufficient condition.\smallskip

\begin{lem}[Invertibility]\label{lem:invertible_outer_main}
Seller $n$'s demand process $\{D_{nt}\}$ is invertible with respect to the demand shocks $\{\eps_t\}$ if and only if $\psi_n(z)$ is outer.
\end{lem}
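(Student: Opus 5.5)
We prove the two directions separately. Throughout we work in $\mathbb{H}^2$ with the standard facts that the closed linear span of $\{\eps_s\}_{s\le t}$ is isometric to $\mathbb{H}^2$ via $\eps_{t-k}\mapsto z^k$. Under this map, the closed span of $\{D_{ns}-\mu_n\}_{s\le t}$ corresponds to the closed shift-invariant subspace $[\psi_n]:=\overline{\mathrm{span}}\{z^k\psi_n(z):k\ge 0\}$. Invertibility of $\{D_{nt}\}$ with respect to $\{\eps_t\}$ means that $\eps_t$ lies in the closed span of $\{D_{ns}\}_{s\le t}$ for every $t$. Equivalently, the past of $D_n$ equals the past of $\eps$, i.e.\ $[\psi_n]=\mathbb{H}^2$, which holds iff $1\in[\psi_n]$.

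\textbf{Outer implies invertible.} I will invoke Beurling's theorem in its cyclicity form: a function $f\in\mathbb{H}^2$ is cyclic for the shift (i.e.\ $[f]=\mathbb{H}^2$) iff $f$ is outer. Given this, $\psi_n$ outer yields polynomials $p_j$ with $p_j\psi_n\to 1$ in $\mathbb{H}^2$. Translated back to the time domain, the finite linear combinations $p_j(\C{B})(D_{nt}-\mu_n)$ converge in mean square to $\eps_t$, so $\eps_t$ is $\mathcal{F}_{nt}$-measurable (as an $L^2$ limit).

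\textbf{Invertible implies outer.} By Lemma~\ref{lem:inner-outer}, write $\psi_n=\C{O}_n\C{I}_n$. Since $\C{O}_n$ is cyclic, we get $[\psi_n]=\C{I}_n[\C{O}_n]=\C{I}_n\mathbb{H}^2$, using that multiplication by an inner function is an isometry on $\mathbb{H}^2$, so the image of a dense set is dense in $\C{I}_n\mathbb{H}^2$ and that subspace is closed. If $\{D_{nt}\}$ is invertible, then $1\in\C{I}_n\mathbb{H}^2$, so $1=\C{I}_n g$ for some $g\in\mathbb{H}^2$. On the circle $|g|=1$ a.e., so $g$ is itself inner, and $\C{I}_n g=1$ forces $\C{I}_n$ to have no zeros and no singular part. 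Concretely, $1/\C{I}_n=g\in\mathbb{H}^2$ together with $|\C{I}_n|\le 1$ in the disk gives $|g|\ge1$ inside the disk while $|g|=1$ on the boundary; the maximum principle applied to $1/g=\C{I}_n$ then forces $\C{I}_n$ to be a unimodular constant. Hence $\psi_n$ is a constant multiple of an outer function, i.e.\ outer. Alternatively, one can use Lemma~\ref{lem:msfe_outer_main}: invertibility means the one-step innovation of $D_n$ is proportional to $\eps_{t+1}$ with root MSFE $|\psi_n(0)|$, which equals $|\C{O}_n(0)|$ only if $|\C{I}_n(0)|=1$, forcing a constant inner factor.

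The main obstacle is the rigorous identification of "recovering shocks from the seller's own history" with the subspace condition $[\psi_n]=\mathbb{H}^2$, and the reliance on Beurling's theorem. I would cite Beurling directly (as in \ref{app:hardy_tools}) rather than reprove it, and handle the time-domain/$z$-domain isometry carefully using stationarity, so that the claim reduces to the single time index $t=0$.
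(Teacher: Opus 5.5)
The paper never proves this lemma. It restates it in the Hardy-space appendix as Lemma~\ref{lem:invertible_outer_app}, gives no proof, and treats it as a standard consequence of Beurling's theorem with pointers to the cited references. Your argument is exactly that standard proof, and it is sound:
\begin{itemize}
\item You identify the seller's linear past with the shift-invariant subspace $[\psi_n]$.
\item For the ``if'' direction you use cyclicity of outer functions.
\item For the ``only if'' direction you use $[\psi_n]=\C{I}_n\mathbb{H}^2$.
\end{itemize}
A strength of the ``only if'' argument is that it rules out singular inner factors as well as Blaschke zeros, which a zero-counting argument would not. Your alternative via Lemma~\ref{lem:msfe_outer_main} also works, since $\C{O}_n(0)\neq 0$ and an inner function with $|\C{I}_n(0)|=1$ is constant.

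You should say explicitly that you are reading both key terms differently from the paper's literal definitions, because the lemma holds only under your readings.

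\textbf{Invertibility.} The appendix defines it through one fixed sequence $\{\varpi_k\}\in\ell^2$ with $\eps_t=\sum_k\varpi_k(X_{t-k}-\bar X)$. Your polynomials $p_j$ change with $j$, so they only give $\eps_t\in\overline{\mathrm{span}}\{D_{ns}-\mu_n\}_{s\le t}$. This cannot be upgraded in general. Take the outer function $\psi_n(z)=c(1+z)$. Matching coefficients in $\sum_{k\le K}\varpi_k(D_{n,t-k}-\mu_n)\to\eps_t$ forces $\varpi_k=(-1)^k/c$. The leftover term $c\,\varpi_K\eps_{t-K-1}$ then has unit norm for every $K$, so no such series exists. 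Consistently, Example~\ref{exm:iid} calls the case $|\alpha_n|=1$ non-invertible even though its inner factor is constant there. So either adopt the closed-span notion as your definition, in which case your proof is complete. Or, if you want the series form, add a hypothesis such as absolute summability of the Taylor coefficients of $1/\psi_n$; then $\varpi$ is just that coefficient sequence and Beurling is unnecessary.

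\textbf{Outer.} It must be the genuine notion, i.e.\ the Poisson-integral form in the appendix footnote, not merely ``zero-free on $\mathbb{D}$'' as in the main-text definition. The function $\exp\{-(1+z)/(1-z)\}$ is zero-free but generates a proper invariant subspace. So Beurling's criterion, and the lemma, fail under the weaker definition.

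\textbf{Last step of ``only if''.} Your appeal to the maximum principle for $1/g=\C{I}_n$ is garbled. What you need is $|g|\le 1$ on $\mathbb{D}$, which holds because $g\in\mathbb{H}^2$ is the Poisson integral of its unimodular boundary values. Together with $|g|=1/|\C{I}_n|\ge 1$, this gives $|g|\equiv 1$, so $g$ and hence $\C{I}_n$ are constant. The maximum principle applied to $\C{I}_n$ alone only returns $|\C{I}_n|\le 1$, which you already had.
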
\smallskip

As we show later, the platform's optimal allocation policy generally induces a nontrivial inner factor in $\psi_n(z)$, so seller-level demand is not invertible. In that case, seller $n$ cannot fully recover the underlying demand shocks and therefore has incomplete information about aggregate market demand.\smallskip

We conclude with a simple example that illustrates the main ideas in this section.\smallskip

\begin{exm}[MA$(q)$ demand allocation]
{\sf Suppose the $z$-transform $\psi_n(z)$ of seller $n$'s demand admits an MA$(q)$ representation
$$\psi_n(z)=\sum_{k=0}^q \varphi_k\,z^k.$$
By fundamental theorem of algebra, we can factor $\psi_n(z)$ in terms of its $q$ (possibly complex) roots $\{a_j\}_{j=1}^q$ as
\[
\psi_n(z)=c\,\prod_{j=1}^q (z-a_j).
\]
To obtain the inner--outer factorization of $\psi_n(z)$, split the roots according to whether they lie inside the unit disk or not. Let $\C{A}_{\mathbb{I}}=\{a_j:\abs{a_j}<1\}$ and $\C{A}_{\mathbb{O}}=\{a_j:\abs{a_j}\ge 1\}$. It follows that
\begin{align*}
\psi_n(z)
&= c\,\prod_{j \in \C{A}_{\mathbb{O}}} (z-a_j)\,\prod_{j \in \C{A}_{\mathbb{I}}} (z-a_j) = \underbrace{c\,\prod_{j \in \C{A}_{\mathbb{O}}} (z-a_j)\,\prod_{j \in \C{A}_{\mathbb{I}}} (1-\bar{a}_j z)}_{\C{O}_n(z)}
\underbrace{\prod_{j \in \C{A}_{\mathbb{I}}} \frac{z-a_j}{1-\bar{a}_j z}}_{\C{I}_n(z)}.
\end{align*}

The roots of $\C{O}_n(z)$ are given by the set $\C{A}_{\mathbb{O}}\cup\{\bar{a}_j^{-1}:a_j\in\C{A}_{\mathbb{I}}\}$, all of which lie outside the unit disk; hence $\C{O}_n(z)$ is outer. On the other hand, by construction, one can check that $\abs{\C{I}_n(z)}=1$ for all $\abs{z}=1$, so $\C{I}_n(z)$ is inner; alternatively, it is called a Blaschke product. Therefore, $\psi_n(z)$ admits the inner--outer factorization $\psi_n(z)=\C{O}_n(z)\,\C{I}_n(z)$ given above.\smallskip

It then follows from \cref{lem:msfe_outer_main} that the root MSFE of $\psi_n(z)$ is
$$\sigma_n=|\C{O}_n(0)|=|c|\,\prod_{j \in \C{A}_{\mathbb{O}}} \abs{a_j}.$$
Finally, by \cref{lem:invertible_outer_main}, seller $n$'s demand is invertible with respect to the demand shocks if and only if $\C{A}_{\mathbb{I}}=\emptyset$, that is, if and only if $\abs{a_j}\ge 1$ for all $j=1,\dots,q$.} \scalebox{0.7}{$\blacksquare$}
\end{exm}

\subsection{Demand allocation in the $z$-domain}\label{subsec:z_domain_reformulation}


We now apply the inner-outer representation to the platform's demand-allocation problem in  \eqref{eq:platform_prob}. Let $\psi(z)=\sum_{k=0}^\infty \psi_k z^{k}$ denote the $z$-transform of the market demand process. Because \eqref{def:Dt} is the Wold representation of market demand $D_t$ (see \ref{app:MA_infty} for details), the corresponding transfer function $\psi$ is outer, i.e., $\psi\in\mathbb{O}$.
In the $z$-transform domain, an admissible allocation policy $\pi$ is specified by a collection $\{\psi_n(z)=\C{T}_n(z)\, \psi(z)\}_{n\in[N]}$ of transfer functions
representing the seller-level allocated demand streams, subject to the admissibility constraint
\[
\sum_{n\in[N]}\psi_n(z)=\psi(z).
\]
Equivalently, writing the inner--outer factorization $\psi_n(z)=\mathcal{O}_n(z)\,\mathcal{I}_n(z)$ with $\mathcal{I}_n\in\mathbb{I}$ and $\mathcal{O}_n\in\mathbb{O}$,
we may represent an admissible policy as $\pi=\{(\mathcal{I}_n,\mathcal{O}_n)\}_{n\in[N]}$ satisfying $\sum_{n\in[N]}\mathcal{O}_n(z)\,\mathcal{I}_n(z)=\psi(z)$.
\smallskip

\cref{lem:msfe_outer_main} makes explicit the link between outer factors and sellers' forecastability. In particular, seller $n$'s one-step-ahead root MSFE is given by $\sigma_n=|\mathcal{O}_n(0)|$. We use this relation to express the notion of a \emph{neutral} allocation policy as the algebraic condition $|\mathcal{O}_n(0)|=\sigma$ for all $n\in[N]$, where $\sigma$ is a platform-chosen constant independent of $n$. This allows us to rewrite the platform's optimization problem \eqref{eq:platform_prob} directly in terms of inner and outer functions:
\begin{align}\label{eq:platform-z}
& V^*=\sup_{\pi \in \Pi_{\ti{N}}(\psi)} \rho\,\mu
+ (\Delta f+\Delta h)\,\frac{\mu}{N}\, |N^{\ti{FBP}}(\sigma)|
+ \Delta h\, \Gamma^{\ti{FBP}}(\sigma) \\ \nonumber
\text{subject to} \quad
& \sum_{n \in [N]} \psi_n(z)=\psi(z), \\ \nonumber
& \psi_n(z)=\mathcal{O}_n(z)\,\mathcal{I}_n(z),
\quad \mathcal{O}_n \in \mathbb{O},\ \mathcal{I}_n \in \mathbb{I}, \quad \forall n \in [N], \\ \nonumber
& |\mathcal{O}_n(0)|=\sigma, \quad \forall n \in [N], \\ \nonumber
\mbox{where} \quad & N^{\ti{FBP}}(\sigma)
=\left\{\, n\in[N] : \Delta F_n \geq \frac{N\,\sigma}{\mu}\,\Delta K_n \right\}
\quad \mbox{and} \quad
\Gamma^{\ti{FBP}}(\sigma)=\sigma\,\sum_{n\in N^{\ti{FBP}}(\sigma)} \zeta^{\ti{FBP}}_n.
\end{align}

As previously noted, we can solve \eqref{eq:platform-z} by conducting an exhaustive search over the finitely many values of $\sigma$ at which the piecewise-constant
function $|N^{\ti{FBP}}(\sigma)|$ jumps. This search, however, must be carried out over the range of values of $\sigma$ that are feasible under a neutral allocation policy
$\pi\in\Pi_{\ti{N}}(\psi)$. We investigate this implementability question next by deriving a lower bound on the feasible set $\{\sigma^{\pi}:\pi\in\Pi_{\ti{N}}(\psi)\}$.

\subsection{Uniform allocation as a lower bound on root MSFE}\label{sec:bounds}

In this section we derive a lower bound on the minimum possible value of $\sigma^{\pi}$ under any neutral demand allocation policy. As we will see, the bound is attained by the simple uniform allocation in which each seller receives a fixed fraction of market demand each period. Intuitively, uniform splitting minimizes volatility by removing unnecessary dispersion across sellers: when total demand $D_t$ is divided equally, each seller receives the same deterministic share $1/N$, so only aggregate uncertainty remains. Any deviation from this equal split reintroduces dispersion and increases the mean squared forecast error (MSFE).
\medskip

We first state a lower-bound result for neutral allocation policies.

\begin{prop}[Lower bound]\label{prop:lowerbound}
Let market demand $D_t$ be given as in \eqref{def:Dt}, and consider an admissible neutral allocation policy $\pi=\{(\mathcal{I}_n,\mathcal{O}_n)\}_{n\in[N]}$.
Let $\sigma^{\pi}_n$ denote the root MSFE of seller $n$ under $\pi$. Then
\[
\sum_{n \in [N]} \sigma^{\pi}_n \;\ge\;  \abs{\psi(0)}.
\]
It follows that for any neutral demand allocation policy $\pi$, $\sigma^{\pi}_n \geq \frac{\abs{\psi(0)}}{N}$.
\end{prop}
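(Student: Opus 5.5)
The plan is to evaluate the feasibility identity $\sum_{n\in[N]}\psi_n(z)=\psi(z)$ at $z=0$ and control each term $|\psi_n(0)|$ by the corresponding root MSFE. By \cref{lem:msfe_outer_main}, each allocated transfer function factors as $\psi_n=\mathcal{O}_n\mathcal{I}_n$ with $\sigma_n^\pi=|\mathcal{O}_n(0)|$. Since $\mathcal{I}_n$ is inner, it lies in $\mathbb{H}^\infty$ with $|\mathcal{I}_n(z)|=1$ a.e.\ on the unit circle. The maximum modulus principle (equivalently, the Poisson representation of a bounded analytic function) then gives $|\mathcal{I}_n(0)|\le 1$. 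Consequently
\[
|\psi_n(0)| \;=\; |\mathcal{O}_n(0)|\,|\mathcal{I}_n(0)| \;\le\; |\mathcal{O}_n(0)| \;=\; \sigma_n^\pi .
\]

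Next I would evaluate the admissibility constraint at the origin, which gives $\psi(0)=\sum_{n}\psi_n(0)$. The triangle inequality then yields
\[
|\psi(0)| \;\le\; \sum_{n\in[N]}|\psi_n(0)| \;\le\; \sum_{n\in[N]}\sigma_n^\pi ,
\]
which is the first claim. For the second claim, neutrality gives $\sigma_n^\pi=\sigma^\pi$ for every $n$, so the sum equals $N\sigma^\pi$ and hence $\sigma^\pi\ge|\psi(0)|/N$.

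The only step requiring care is the bound $|\mathcal{I}_n(0)|\le 1$. I would justify it through the standard fact that an $\mathbb{H}^2$ function with unimodular boundary values is bounded by $1$ in the disk: its modulus is dominated by the Poisson integral of its boundary modulus. Alternatively, if one prefers to stay in time-series language, it follows from an MSFE argument. The quantity $|\psi_n(0)|=|\psi_{n,0}|$ is the innovation loading, i.e.\ the root MSFE of a forecaster who observes the full shock history $\{\eps_s\}_{s\le t}$. A seller who observes only $\{D_{ns}\}$ has a coarser information set and therefore cannot do better, giving $|\psi_{n0}|\le\sigma_n^\pi$ directly.

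Finally, I would check that $\psi_n(0)$ is well defined, which holds because $\psi_n\in\mathbb{H}^2$ by \cref{lem:msfe_outer_main}. I would also note that $|\psi(0)|=|\psi_0|$ is exactly the root MSFE of aggregate demand, since $\psi$ is outer; this is what makes the bound tight under uniform allocation.
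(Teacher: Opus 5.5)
Your proposal is correct and follows essentially the same argument as the paper: evaluate $\sum_{n}\mathcal{O}_n(0)\mathcal{I}_n(0)=\psi(0)$, apply the triangle inequality, and use $|\mathcal{I}_n(0)|\le 1$ for inner $\mathcal{I}_n$. Your Poisson-integral justification of $|\mathcal{I}_n(0)|\le 1$ is slightly more careful than the paper's maximum-modulus step, since an inner function need not extend continuously to the closed disk; your alternative argument that the seller's information set is coarser than the shock history is also valid.
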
\smallskip

For future reference, let $\sigma_{\LB}:=\frac{\abs{\psi_0}}{N}$ denote the lower bound on a seller's root MSFE that the platform can implement using an admissible policy.
The following corollary shows that this lower bound is attained by uniform splitting.\smallskip

\begin{cor}[Uniform allocation]\label{cor:uniform}
Let $\pi$ be the uniform demand allocation policy under which each seller $n$ faces a demand process with $z$-transform $\psi_{n}(z)=\psi(z)/N$.
Then $\sigma_n^{\pi}=\sigma_{\LB}$ for all $n\in[N]$.
\end{cor}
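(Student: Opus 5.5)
The plan is a direct computation with the inner--outer factorization of \cref{lem:inner-outer} and the root MSFE formula of \cref{lem:msfe_outer_main}. Under the uniform policy each seller's transfer function is $\psi_n(z)=\psi(z)/N$. By \cref{assm:Normal}, \eqref{def:Dt} is the Wold representation of market demand, and the paper has already observed that this makes $\psi$ outer.

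\emph{Step 1 (outer factor).} A nonzero constant multiple of an outer function is again outer, since it lies in $\mathbb{H}^2$ and has no zeros in the open unit disk. So $\psi(z)/N\in\mathbb{O}$, and one valid factorization is
\[
\mathcal{O}_n(z)=\psi(z)/N,\qquad \mathcal{I}_n(z)\equiv 1,
\]
where the constant $1$ is trivially inner. As a side remark, \cref{lem:invertible_outer_main} then shows that each seller's demand is invertible under uniform splitting.

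\emph{Step 2 (root MSFE).} \cref{lem:msfe_outer_main} gives
\[
\sigma_n^{\pi}=|\mathcal{O}_n(0)|=|\psi(0)|/N=|\psi_0|/N=\sigma_{\LB}
\]
for every $n\in[N]$. The mean allocation is $\mu_n=\mu/N$, so the policy is neutral with common value $\sigma^\pi=\sigma_{\LB}$. Admissibility holds because $\sum_n \mathcal{T}_n=N\cdot\frac1N=1$ and $\sum_n\mu_n=\mu$.

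\emph{Step 3 (tightness).} Combined with \cref{prop:lowerbound}, which gives $\sigma_n^\pi\ge |\psi(0)|/N$ for every neutral policy, this shows the bound is attained, so uniform allocation is optimal for minimizing root MSFE.

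The only delicate point is that the inner--outer factorization is unique only up to a unimodular constant. The value $|\mathcal{O}_n(0)|$ in \cref{lem:msfe_outer_main} is therefore well defined regardless of which factorization is chosen, and we may use the trivial one.

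If one prefers not to rely on the remark that the Wold transfer function is outer, there is a direct alternative. Under uniform splitting, seller $n$ observes $D_{nt}=\mu/N+(D_t-\mu)/N$, which generates the same filtration as $\{D_t\}$. The one-step forecast error is then $(D_{t+1}-\E[D_{t+1}\mid\mathcal{F}_t])/N=\psi_0\eps_{t+1}/N$, whose standard deviation is $|\psi_0|/N$ because $\sigma_\eps=1$.
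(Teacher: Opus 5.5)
Your proof is correct and matches the paper's: $\psi$ is outer because \eqref{def:Dt} is the invertible Wold representation, so $\psi/N$ is outer with trivial inner factor, and \cref{lem:msfe_outer_main} gives $\sigma_n^{\pi}=|\psi(0)|/N=\sigma_{\LB}$. Your filtration-based alternative is also valid, but it does not actually avoid the outerness fact: identifying the forecast error as $\psi_0\eps_{t+1}/N$ relies on the same invertibility assumption in \cref{assm:Normal}.
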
\smallskip

The lower bound in \cref{prop:lowerbound} has an immediate implication for the platform’s ability to induce FBP adoption. Recall from \cref{eq:adoption_condition} that under
a neutral allocation policy $\pi$, seller $n$ adopts FBP if and only if
\[
\Delta F_n \ge \frac{N\,\sigma^{\pi}}{\mu}\,\Delta K_n.
\]
By \cref{prop:lowerbound}, any neutral policy must satisfy $\sigma^{\pi}\ge \sigma_{\LB}$.
As a result, the platform cannot necessarily induce \emph{every} seller to join the FBP system. In particular, sellers with
\[
\Delta F_n < \frac{N\,\sigma_{\LB}}{\mu}\,\Delta K_n
= \frac{\abs{\psi(0)}}{\mu}\,\Delta K_n
\]
strictly prefer FBM under \emph{every} implementable neutral policy $\pi\in\Pi_{\ti{N}}(\psi)$.
In words, the existence of the lower bound $\sigma_{\LB}$ implies that some sellers may never find it profitable to adopt the platform’s fulfillment service, regardless of how
the platform allocates demand while maintaining neutrality.\smallskip

The lower-bound result above reveals that uniform allocation provides the \emph{best} forecasting environment the platform can create under neutrality, since it minimizes each seller’s root MSFE and therefore serves as a natural benchmark. A complementary question is whether the platform can move in the opposite direction and deliberately make seller-level demand \emph{less} forecastable. As noted above, doing so may be attractive because higher forecast uncertainty raises safety stocks and, through that channel, affects inventory held inside the platform's logistics system. The next result shows, however, that increasing seller-level root MSFE relative to the uniform benchmark is not costless. In particular, any deviation from uniform allocation that preserves neutrality must also increase the unconditional variance of the allocated demand, thereby imposing a \emph{volatility tax} on sellers.\smallskip

\begin{prop}\label{prop:cost}
Suppose market demand $D_t$ is given as in \eqref{def:Dt}. Let $\Var^{\ti{unif}} = \frac{1}{N^2}\norm{\psi}^2$ denote the variance of each seller's demand under the uniform demand allocation policy.
If an admissible policy $\pi \in \Pi(\psi)$ satisfies $\Var(D_{nt}) = \Var^{\ti{unif}}$ for all $n \in [N]$, then $\pi$ must be the uniform demand allocation policy, meaning $\psi_{n} = \frac{1}{N}\psi$ for all $n \in [N]$.
\end{prop}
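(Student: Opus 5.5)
The natural route is a Hilbert-space (Pythagorean) argument in $\mathbb{H}^2$, equivalently in the coefficient space $\ell^2$. By \cref{assm:Normal} with $\sigma_\eps=1$, the variance of any stationary process $\mu_n+\psi_n(\C{B})\eps_t$ equals $\norm{\psi_n}^2=\sum_{k\ge0}|\psi_{nk}|^2$, where $\psi_n=\C{T}_n\psi\in\mathbb{H}^2$ by \cref{lem:msfe_outer_main}. Hence the hypothesis reads $\norm{\psi_n}^2=\frac{1}{N^2}\norm{\psi}^2$ for every $n\in[N]$. The admissibility constraint from \cref{def:Admissible} gives $\sum_{n\in[N]}\psi_n=\psi$.

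Define the deviations $\delta_n:=\psi_n-\frac1N\psi\in\mathbb{H}^2$. The feasibility constraint becomes $\sum_{n}\delta_n=0$. Expanding the norms gives
\[
\sum_{n\in[N]}\norm{\psi_n}^2=\sum_{n\in[N]}\Bigl\|\tfrac1N\psi+\delta_n\Bigr\|^2
=\tfrac1N\norm{\psi}^2+\tfrac{2}{N}\,\mathrm{Re}\Bigl\langle \psi,\sum_{n}\delta_n\Bigr\rangle+\sum_{n}\norm{\delta_n}^2
=\tfrac1N\norm{\psi}^2+\sum_{n}\norm{\delta_n}^2 .
\]
The cross term vanishes precisely because the deviations sum to zero. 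Under the hypothesis, the left-hand side equals $N\cdot\frac{1}{N^2}\norm{\psi}^2=\frac1N\norm{\psi}^2$. Therefore $\sum_n\norm{\delta_n}^2=0$, so each $\delta_n=0$ and $\psi_n=\frac1N\psi$ for all $n$. This is the uniform allocation.

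As a final remark, $\psi_n=\C{T}_n\psi$ with $\psi$ outer and nonzero pins down $\C{T}_n=\frac1N$, so the uniform policy is identified at the level of the transfer functions as well. Nothing beyond $\psi\not\equiv0$ is needed for this step.

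There is no real obstacle here; the key observation is that the uniform split is the orthogonal projection minimizing $\sum_n\norm{\psi_n}^2$ subject to $\sum_n\psi_n=\psi$. The only care needed is in justifying that the variance equals the $\mathbb{H}^2$ norm squared, which uses $\sigma_\eps=1$ and the white-noise property of $\eps_t$. One should also handle possibly complex coefficients by writing the real part of the inner product. The mean shares play no role, since variance does not depend on $\mu_n$.
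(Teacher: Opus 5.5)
Your proof is correct, and it takes a different route from the paper. The paper expands $\norm{\psi}^2=\norm{\sum_n\psi_n}^2$ into diagonal terms plus the pairwise cross terms $\sum_{i\neq j}\langle\psi_i,\psi_j\rangle$. It uses the per-seller hypothesis to show these cross terms must equal $\frac{N-1}{N}\norm{\psi}^2$. That value is exactly the Cauchy--Schwarz ceiling $N(N-1)\cdot\frac{1}{N^2}\norm{\psi}^2$, so every pairwise Cauchy--Schwarz inequality is tight, and equal norms then give $\psi_1=\cdots=\psi_N$. You instead center at the uniform split and use the Pythagorean identity $\sum_n\norm{\psi_n}^2=\frac1N\norm{\psi}^2+\sum_n\norm{\psi_n-\psi/N}^2$, whose cross term vanishes because the deviations sum to zero.

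Your route gains three things. First, it needs only the aggregate condition $\sum_n\Var(D_{nt})=N\,\Var^{\ti{unif}}$ rather than equality seller by seller, so it proves a strictly stronger statement. Second, it avoids the equality case of Cauchy--Schwarz, where one must argue that positive collinearity plus equal norms forces $\psi_i=\psi_j$; the paper passes over that step quickly. Third, it yields an exact formula for the ``volatility tax'', $\sum_n\Var(D_{nt})-N\,\Var^{\ti{unif}}=\sum_n\norm{\psi_n-\psi/N}^2$. This formula is what actually supports the paper's follow-up claim that any non-uniform allocation carries strictly higher overall volatility. The per-seller statement alone does not give that, since under a non-uniform admissible policy an individual seller's variance can fall below $\Var^{\ti{unif}}$ (e.g., i.i.d.\ demand with constant shares $0.4$ and $0.6$).

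The paper's version has the modest advantage of working directly with the hypothesis as stated. Your closing remark is also correct: $(\C{T}_n-\tfrac1N)\psi\equiv 0$ with $\psi\not\equiv 0$ forces $\C{T}_n\equiv 1/N$ by the identity theorem. This is a small addition, since the paper stops at $\psi_n=\psi/N$.
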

This proposition shows that the platform cannot raise sellers' forecast uncertainty above the uniform-allocation benchmark while keeping their unconditional demand variance unchanged. Any non-uniform allocation therefore comes with strictly higher overall demand volatility for the sellers.\medskip

\section{Optimal Neutral Demand Allocation Policies}\label{sec:optpolicy}

Building on the inner--outer factorization of  demand allocation introduced in the previous section, we now solve \eqref{eq:platform-z} over the class $\Pi_{\ti N}(\psi)$ of neutral demand-allocation policies, as defined in \cref{def:NeutralAdmissible}, and characterize the structure and implementability of optimal solutions.\smallskip

Our analysis shows that, absent participation constraints, the platform can induce any target root MSFE $\sigma^{\pi}\ge \sigma_{\LB}$ using remarkably simple low-order moving-average perturbations of the uniform split: when the number of sellers $N$ is even, an alternating $\mathrm{MA}(1)$ filter suffices, whereas when $N$ is odd, a minimal $\mathrm{MA}(2)$ modification applied to two sellers restores admissibility while preserving neutrality. We derive these constructions through an inner--outer factorization of seller transfer functions, which clarifies how introducing a nontrivial inner component reallocates phase across sellers and raises each seller's forecast error by changing the outer factor at the origin. A key implication is that achieving $\sigma^{\pi}>\sigma_{\LB}$ under neutrality requires non-invertibility for at least one seller, so that no seller can recover the innovation sequence $\{\eps_t\}$, and hence aggregate demand, from its own demand stream alone. Finally, we translate the resulting transfer-function policies into explicit time-domain allocation rules and complement the ex post benchmark with an order-level routing mechanism that implements the allocation dynamically as orders arrive, tracking the benchmark up to a small rounding error.

\smallskip

Our construction of optimal solutions proceeds in three steps. First, for expositional convenience, we reparametrize the platform's demand allocation to the class of transfer functions  that express each seller’s allocation relative to the uniform baseline introduced in \cref{cor:uniform}. \smallskip

\begin{dfn}[Transfer Function Representation]
\label{def:allocation}
Using the uniform allocation $\psi(z)/N$ as a baseline, we represent the $z$-transform for seller $n$ as
\begin{equation}\label{eq:psi_n}
\psi_n(z)=\frac{\psi(z)}{N}\,\mathcal{T}_n(z),\qquad \text{with }\;\sum_{n=1}^N \psi_n(z)=\psi(z).
\end{equation}
\end{dfn}\smallskip

Because ${\psi(z)}/{N}$ represents a uniform allocation, $\mathcal{T}_n(z)$ captures deviations from this baseline while preserving admissibility. Second, we decompose $\mathcal{T}_n$ via its inner--outer factorization $\mathcal{T}_n(z)=\mathcal{O}_n(z)\,\mathcal{I}_n(z)$, where $\mathcal{O}_n\in\mathbb{O}$ and $\mathcal{I}_n$ is a finite Blaschke product whose order is $1$ when $N$ is even, and at most $2$ when $N$ is odd. This decomposition isolates variance scaling (the outer factor) from phase reallocation (the inner factor). Third, for a given target value $\sigma\geq \sigma_{\LB}$ of a seller’s root MSFE, we choose the zeros of the Blaschke factors so that (i) neutrality holds and (ii) each seller’s target root MSFE binds with equality. This pins down the constants $\mathcal{O}_n(0)$ and, using $\sigma_n=\abs{\psi(0)}\,\abs{\mathcal{O}_n(0)}/N$, attains the desired target $\sigma$. Our derivation also implies that the optimization problem in \eqref{eq:platform-z} admits multiple solutions, allowing the platform to impose additional criteria beyond neutrality to break ties. We discuss this non-uniqueness in \cref{sec:non-unique}. \smallskip

Before turning to the formal details of this program, we first illustrate the approach with an example.

\begin{exm}[IID Demand]\label{exm:iid}{\sf 
Suppose there are two sellers, $N=2$, and market demand is i.i.d., given by $D_t=\mu+\psi_0\,\eps_t$, for some scalar $\psi_0>0$. In this case, $\psi(z)=\psi_0$ and $\sigma_{\LB}=\psi_0/2$. Consider the following class of admissible allocation policies based on $\mathrm{MA}(1)$ processes:
\begin{equation}\label{eq:exiid}
\psi_{n}(z)=\frac{\psi_0}{2}\,\C{T}_n(z), \quad \mbox{where }\C{T}_n(z)=1-\alpha_n\,z, \qquad n=1,2,
\end{equation}
for scalars $\alpha_1$ and $\alpha_2$ such that $\alpha_1=-\alpha_2$. Here, $\psi_0(z)/2$ is the uniform allocation. This policy is admissible since $\psi_{1}(z)+\psi_{2}(z)=\psi(z)$. 
\smallskip

Seller $n$'s mean squared forecast error depends on whether its demand process is invertible (see \citealp{BrockwellDavis2006} for details). Invertibility is determined by the location of the root $z_n$ of the transfer function $\psi_n(z)$, which here equals $z_n:=1/\alpha_n$. We distinguish two cases:
\begin{enumerate}
    \item If $\abs{z_n}>1$, i.e., $\abs{\alpha_n}<1$, then $\{D_{nt}\}$ is invertible with $\psi_n(z)\in\mathbb{O}$ and $\sigma_n=\abs{\psi_n(0)}=\psi_0/2=\sigma_{\LB}$.
    \item If $\abs{z_n}\le 1$, i.e., $\abs{\alpha_n}\ge 1$, then $\{D_{nt}\}$ is noninvertible. In this case, $\psi_n(z)$ admits the inner--outer factorization
    \[
    \psi_n(z)=\frac{\psi_0}{2}\,(1-\alpha_n z)=\C{O}_n(z)\,\C{I}_n(z), \quad \mbox{with}\quad 
    \C{O}_n(z)=\frac{\psi_0}{2}\,(\alpha_n-z) \quad \mbox{and}\quad 
    \C{I}_n(z)=\frac{1-\alpha_n z}{\alpha_n-z},
    \]
    where $\C{O}_n(z)\in\mathbb{O}$ and $\C{I}_n(z)\in\mathbb{I}$ is a Blaschke factor. Hence, \cref{lem:msfe_outer_main} implies
    \[
    \sigma_n=\abs{\C{O}_n(0)}=\frac{\psi_0}{2}\, \abs{\alpha_n}=\sigma_{\LB}\, \abs{\alpha_n}.
    \]
\end{enumerate}

Therefore, combining cases 1 and 2, under the $\mathrm{MA}(1)$ allocation in \eqref{eq:exiid}, each seller’s root MSFE equals $\sigma_n=\sigma_{\LB}\,\max\{1,\abs{\alpha_n}\}$. The platform can thus implement any target root MSFE $\sigma\ge \sigma_{\LB}$ by choosing $\alpha_1$ and $\alpha_2$ so that $\abs{\alpha_n}\,\sigma_{\LB}=\sigma$.
\smallskip

Finally, note that when $\sigma>\sigma_{\LB}$ under the $\mathrm{MA}(1)$ allocation in \eqref{eq:exiid}, each seller’s demand process $D_{nt}$ is non-invertible in the demand shocks $\{\eps_t\}$. In the i.i.d.\ demand case, this implies that no single seller can infer aggregate market demand $D_t$ solely from observing its own demand process $D_{nt}$. By contrast, under the uniform allocation in \cref{cor:uniform}, each seller can recover $D_t$ exactly from $D_{nt}$, provided the number of sellers $N$ operating on the platform is common knowledge.
}  \scalebox{0.7}{$\blacksquare$}
\end{exm}
\smallskip

As we show in the following result, the method from the previous example extends to a general demand process when the number of sellers is even. The case of an odd number of sellers requires a slightly different approach, which we discuss next.
\smallskip

\begin{prop}\label{prop:weaklyeven}
Consider a target root MSFE $\sigma > \sigma_{\LB}$ and suppose the number of sellers $N$ is even. Consider a policy $\pi\in \Pi_{\ti{N}}(\psi)$ such that seller $n$'s demand allocation admits the $z$-transform representation \\ $\psi_n(z)=\frac{1}{N}\,\psi(z)\,\C{T}_n(z)$ with transfer function
\[
\C{T}_n(z)=1+(-1)^n\,\ba\, z,\qquad \mbox{where}\quad \ba=\frac{N\,\sigma}{\abs{\psi(0)}}={\sigma \over \sigma_{\LB}}.
\]
Then $\pi$ is neutral and satisfies $\sigma^\pi=\sigma$.
\end{prop}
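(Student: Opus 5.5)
We will verify admissibility first, then compute each seller's root MSFE through \cref{lem:msfe_outer_main}, following the template of \cref{exm:iid}.

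\textbf{Admissibility.} Since $N$ is even, the signs $(-1)^n$ for $n\in[N]$ contain exactly $N/2$ pluses and $N/2$ minuses. Hence $\sum_{n}\C{T}_n(z)=N$, and therefore $\sum_n\psi_n(z)=\psi(z)$. Each $\psi_n=\psi\,\C{T}_n/N$ lies in $\mathbb{H}^2$, because it is the product of $\psi\in\mathbb{H}^2$ with a polynomial. In the original parametrization of \cref{def:Admissible}, the policy is $D_{nt}=\mu/N+\frac1N(1+(-1)^n\ba\,\C{B})(D_t-\mu)$. Its coefficients are finite, so it is an admissible linear filter of past aggregate demand, and the mean shares equal $\mu/N$.

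\textbf{Inner--outer factorization.} By \cref{assm:Normal} and the discussion in \cref{subsec:z_domain_reformulation}, $\psi$ is outer. Since $\sigma>\sigma_{\LB}$, we have $\ba>1$, so the root $z_n=-(-1)^n/\ba$ of $\C{T}_n$ lies strictly inside the unit disk. As in \cref{exm:iid}, we write, with $a_n:=(-1)^n\ba$ real,
\[
1+a_n z=(a_n+z)\cdot\frac{1+a_n z}{a_n+z}.
\]
The factor $a_n+z$ has its only root $-a_n$ outside the closed disk, so it is outer. The factor $\C{I}_n(z):=(1+a_nz)/(a_n+z)$ is a Blaschke factor. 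Because $a_n$ is real, for $|z|=1$ we have $|1+a_n z|=|\bar z+a_n|=|z+a_n|$, so $\C{I}_n$ is unimodular on the circle and analytic in the closed disk; hence it is inner. Thus $\psi_n=\C{O}_n\C{I}_n$ with $\C{O}_n(z)=\frac1N\psi(z)(a_n+z)$.

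The remaining point is that $\C{O}_n$ is outer. A product of two outer functions that lies in $\mathbb{H}^2$ is outer, and it has no zeros in the open disk, which is the paper's working definition. This step is the only potential subtlety: it relies on uniqueness of the inner--outer factorization up to a unimodular constant, so that $|\C{O}_n(0)|$ is well defined. I would cite the appendix on Hardy spaces for this uniqueness.

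\textbf{Root MSFE.} \cref{lem:msfe_outer_main} then gives
\[
\sigma_n^\pi=|\C{O}_n(0)|=\frac{|\psi(0)|\,|a_n|}{N}=\frac{|\psi(0)|\,\ba}{N}=\sigma
\]
for every $n$. This value does not depend on $n$, so the policy is neutral and $\sigma^\pi=\sigma$. The argument also shows that each seller's demand is non-invertible, by \cref{lem:invertible_outer_main}.
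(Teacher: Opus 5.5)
Your proposal is correct and follows essentially the same route as the paper. Like the paper, you use the evenness of $N$ to get $\sum_n \mathcal{T}_n(z)=N$, factor $1+\alpha_n z=(\alpha_n+z)\cdot\frac{1+\alpha_n z}{\alpha_n+z}$ as in \cref{exm:iid}, and read off $\sigma_n^\pi=|\psi(0)|\,\bar\alpha/N=\sigma$ from \cref{lem:msfe_outer_main}. You are slightly more explicit than the paper in stating that the outer factor of $\psi_n$ is $\frac{1}{N}\psi(z)(\alpha_n+z)$ because $\psi$ is itself outer, which the paper uses only implicitly.
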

\smallskip

It is worth noting that the allocation for seller $n$ in \cref{prop:weaklyeven} is obtained by taking the uniform split $\psi(z)/N$ and multiplying it by the deliberately non-invertible transfer $\C{T}_n(z)=1-\alpha_n z$. This introduces an inner factor—specifically, a Blaschke product—into the representation of seller $n$’s demand, which in turn modifies the outer component and increases the root MSFE relative to the uniform split. The value of $\alpha_n$ is carefully selected so as to match the target root MSFE $\sigma$.
\smallskip

When $N$ is odd, the previous construction, that is, multiplying the uniform allocation by the monomial factor $\bigl(1-\alpha_n z\bigr)$, cannot satisfy admissibility ($\sum_{n}\alpha_n=0$) and neutrality (the quantities $\abs{\alpha_n}$ are equal to the same positive constant for every $n$). Consequently, for odd $N$ the optimal design requires a different (e.g., higher-order or mixed) inner–outer modification rather than the simple monomial factor. \smallskip


\begin{prop}\label{prop:weaklyodd}
Consider a target root MSFE $\sigma > \sigma_{\LB}$ and suppose the number of sellers $N \geq 3$ is odd.  Let $\pi\in\Pi_{\ti N}(\psi)$ be a demand allocation policy such that the $z$-transform for seller $n$'s demand  admits the representation $\psi_n(z)=\frac{1}{N}\,\psi(z)\,\C{T}_n(z)$, with transfer function 
\begin{align*}
\C{T}_1(z)&=1+\ba z+\ba\,z^2,  \\
\C{T}_2(z)&=1-\ba z^2,  \\
\C{T}_n(z) &=1+(-1)^n\,\ba\, z, \quad \mbox{for } n \geq 3,
\end{align*}
where $\ba =\frac{N\,\sigma}{\abs{\psi(0)}}$. Then $\pi$ is neutral and implements $\sigma^\pi=\sigma$.
\end{prop}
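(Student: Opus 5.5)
The plan is to verify the two defining requirements separately: admissibility, i.e.\ $\sum_n\psi_n=\psi$, and neutrality with common value $\sigma$, i.e.\ $\sigma_n^\pi=\sigma$ for every $n$. This mirrors the argument behind \cref{prop:weaklyeven} and \cref{exm:iid}.

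\textbf{Admissibility.} Since $\psi_n=\psi\,\C{T}_n/N$, it suffices to check $\sum_{n}\C{T}_n(z)=N$. First, $\C{T}_1+\C{T}_2=2+\ba z$, because the $z^2$ terms cancel. Among the indices $n=3,\dots,N$ (with $N$ odd) there are $(N-1)/2$ odd indices and $(N-3)/2$ even ones. Hence $\sum_{n\ge 3}\C{T}_n=(N-2)+\ba z\bigl[\tfrac{N-3}{2}-\tfrac{N-1}{2}\bigr]=(N-2)-\ba z$. Adding the two pieces gives exactly $N$. Each $\C{T}_n$ is a polynomial, so $\psi_n\in\mathbb{H}^2$ and the policy is a legitimate element of $\Pi(\psi)$, with equal means $\mu/N$.

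\textbf{Root MSFE.} Write $\C{T}_n=\C{O}^T_n\,\C{I}^T_n$ using the Blaschke construction of the MA$(q)$ example. Because $\psi$ is outer (Wold representation) and a product of outer functions is outer, $\psi_n=\bigl(\tfrac{1}{N}\psi\,\C{O}^T_n\bigr)\C{I}^T_n$ is an inner--outer factorization. \cref{lem:msfe_outer_main} then gives
\[
\sigma_n^\pi=\tfrac{1}{N}|\psi(0)|\,|\C{O}^T_n(0)|=\sigma_{\LB}\,|\C{O}^T_n(0)|.
\]
For a polynomial with $\C{T}(0)=1$, the example's formula can be rewritten as $|\C{O}^T(0)|=\prod_{|a_j|<1}|a_j|^{-1}$, where the $a_j$ are the roots of $\C{T}$. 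So it remains to show that this product equals $\ba=\sigma/\sigma_{\LB}$ for every seller. Note that $\ba>1$ because $\sigma>\sigma_{\LB}$.
\begin{enumerate}[(i)]
\item For $n\ge 3$, the single root $-(-1)^n/\ba$ has modulus $1/\ba<1$, so the factor is $\ba$.
\item For $\C{T}_2=1-\ba z^2$, the roots $\pm\ba^{-1/2}$ both lie inside the disk, and their product has modulus $1/\ba$, so the factor is $\ba$.
\item For $\C{T}_1=\ba z^2+\ba z+1$, the roots have product $1/\ba$ and sum $-1$. The factor equals $\ba$ provided both roots lie strictly inside the unit disk.
\end{enumerate}

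\textbf{Main obstacle.} The only real work is locating the roots of $\C{T}_1$, and I would split by the discriminant $\ba^2-4\ba$.
\begin{itemize}
\item If $1<\ba<4$, the roots are complex conjugates with equal modulus $\ba^{-1/2}<1$.
\item If $\ba\ge 4$, the roots are real and equal to $\bigl(-1\pm\sqrt{1-4/\ba}\bigr)/2$. Both lie in $(-1,0)$, since $0\le\sqrt{1-4/\ba}<1$.
\end{itemize}
In either case both roots are inside the disk, so $|\C{O}^T_1(0)|=\ba$. Hence $\sigma_n^\pi=\sigma_{\LB}\,\ba=\sigma$ for all $n$, which establishes neutrality with $\sigma^\pi=\sigma$.
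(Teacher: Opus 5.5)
Your proof is correct and follows essentially the same route as the paper: you check $\sum_n \C{T}_n = N$, then factor each $\C{T}_n$ into a Blaschke inner part and an outer part whose value at $0$ has modulus $\ba$ (the paper writes these outer parts as the reversed polynomials $\ba+\ba z+z^2$ and $\ba-z^2$), which gives $\sigma_n^\pi=\sigma_{\LB}\,\ba=\sigma$ for every seller. Your discriminant case split for the roots of $\C{T}_1$ supplies the step the paper only asserts from $\ba>1$, namely that both roots lie inside the unit disk.
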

\smallskip

A distinctive feature of the optimal allocation policies identified in Propositions~\ref{prop:weaklyeven}
and~\ref{prop:weaklyodd} is that the transfer functions $\C{T}_n(z)$ used to perturb the uniform allocation
possess a nontrivial inner component. As a result, the induced demand process $\{D_{nt}\}$ faced by any seller
$n$ is non-invertible with respect to the demand-shock sequence $\{\epsilon_t\}$. Indeed, as the following
proposition shows, non-invertibility is typically necessary for a neutral allocation policy to achieve
$\sigma^{\pi}>\sigma_{\LB}$.
\smallskip

\begin{prop}\label{prop:noninv}
Suppose the platform uses a  neutral policy $\pi \in \Pi_{\ti N}$ 
whose associated root MSFE $\sigma^{\pi}$ is strictly larger than the lower bound $\sigma_{\LB}$. Then at least one of the demand processes $\{D_{nt}\}$ must be non-invertible with respect to $\{\epsilon_t\}$.
\end{prop}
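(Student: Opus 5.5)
We argue by contrapositive: assume every seller's demand process $\{D_{nt}\}$ is invertible with respect to $\{\eps_t\}$, and show that then $\sigma^{\pi}=\sigma_{\LB}$. By \cref{lem:invertible_outer_main}, invertibility of every $D_{nt}$ means each $\psi_n(z)$ is outer, so its inner factor is a unimodular constant. \cref{lem:msfe_outer_main} then gives
\[
\sigma_n^{\pi}=|\C{O}_n(0)|=|\psi_n(0)|\qquad\text{for every } n\in[N].
\]
In other words, the root MSFE is simply the modulus of the leading coefficient of the allocated transfer function.

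Next we use admissibility. Evaluating $\sum_{n}\psi_n(z)=\psi(z)$ at $z=0$ gives $\sum_n \psi_n(0)=\psi(0)$. The paper works with real-coefficient transfer functions, and we shall assume this. The key step is then to pin down the signs of the $\psi_n(0)$. Neutrality forces $|\psi_n(0)|=\sigma$ for all $n$, so each $\psi_n(0)\in\{\sigma,-\sigma\}$. Let $k$ be the number of sellers with $\psi_n(0)=\sigma$. Then
\[
(2k-N)\,\sigma=\psi(0).
\]
Two observations follow. First, $\psi(0)\neq 0$, because $\psi$ is outer and therefore has no zero in the disk; this rules out $2k=N$. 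Second, $|2k-N|\le N$, so $\sigma=|\psi(0)|/|2k-N|\ge|\psi(0)|/N$, which recovers \cref{prop:lowerbound}, with equality exactly when all signs agree.

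Mixed signs are the main obstacle. With $k=N-1$, for instance, this gives $\sigma=|\psi(0)|/(N-2)>\sigma_{\LB}$ under fully invertible allocations. An example with $N=3$ and $\psi\equiv 1$ is $\psi_1=\psi_2=1$ and $\psi_3=-1$, all constant and hence outer, giving $\sigma=1>1/3$. So the statement as literally written seems false unless something excludes sign-mixing. This is presumably why the paper's lead-in says non-invertibility is only ``typically necessary.''

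I would handle this as follows. First, I would try to find a normalization implicit in the setup that rules out mixed signs. One candidate is the sign convention $\psi_n(0)>0$, i.e.\ $\tau_{n0}\ge 0$, which is natural since a seller cannot receive a negative share of the contemporaneous shock in the Wold normalization. Another is the convention that the outer factor is chosen with $\C{O}_n(0)>0$ and the inner factor with $\C{I}_n(0)$ real positive, so that an outer $\psi_n$ has $\psi_n(0)>0$. If no such normalization exists, I would prove the proposition under the explicit proviso that the contemporaneous shares $\psi_n(0)/\psi(0)$ are nonnegative. That proviso gives $k=N$, so $\sigma=|\psi(0)|/N=\sigma_{\LB}$, contradicting $\sigma^\pi>\sigma_{\LB}$. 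I would then remark on the mixed-sign exception as the case excluded by ``typically.''
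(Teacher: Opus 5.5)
Your core step matches the paper's proof: invertibility makes each $\psi_n$ outer, so $\sigma_n^\pi=|\psi_n(0)|$, and admissibility is then evaluated at $z=0$. The two arguments differ in how the contemporaneous shares are pinned down. The paper's proof simply posits $\psi_n=\frac{1}{N}\psi\,\C{T}_n$ with $\C{T}_n(0)=1$ for every $n$, so each seller gets exactly $\psi(0)/N$ of the current shock. That restriction is not contained in Definitions~\ref{def:Admissible}, \ref{def:NeutralAdmissible} or \ref{def:allocation}, although the constructions of Propositions~\ref{prop:weaklyeven}--\ref{prop:weaklyodd} satisfy it. Once it is assumed, $\sigma_n^\pi=|\psi(0)|/N=\sigma_{\LB}$ follows at once.

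Your counterexample ($N=3$, $\psi\equiv 1$, $\psi_1=\psi_2=1$, $\psi_3=-1$) is valid under the paper's definitions. It is admissible and neutral, each $\psi_n$ is a nonzero constant and hence outer, and $\sigma^\pi=1>\tfrac13=\sigma_{\LB}$. So a normalization of this kind is genuinely needed, which is consistent with the paper's hedge ``typically necessary.''

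Your proviso $\psi_n(0)/\psi(0)\ge 0$ is implied by $\C{T}_n(0)=1$ but not conversely, so it covers more policies. Your identity $(2k-N)\sigma=\psi(0)$ also gives extra information. With real coefficients, the statement holds with no extra hypothesis when $N=2$, since $2k\neq N$ forces $|2k-N|=2$. For general $N$, the only levels above $\sigma_{\LB}$ attainable with every seller invertible are $|\psi(0)|/|2k-N|$, and these arise purely from sign flips.

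Drop your second candidate normalization. The product $\psi_n(0)=\C{O}_n(0)\,\C{I}_n(0)$ is fixed by the policy, so no factorization convention can change its sign. If $\C{O}_n(0)>0$ and $\psi_n$ is outer with $\psi_n(0)<0$, the inner factor is forced to be the constant $-1$. The Wold-normalization rationale for your first candidate also fails, because seller $n$'s own innovation may be $-\eps_t$. The sign condition therefore has to be stated as an explicit hypothesis, as in your fallback. The paper in effect makes the same assumption, in stronger form, through $\C{T}_n(0)=1$.
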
\smallskip

\cref{prop:noninv} establishes that, under the neutrality requirement, the platform can increase the system’s safety stock beyond the normal level achieved under uniform allocation only by adopting allocation policies that render the demand process non-invertible for some sellers.

\subsection{Implementable order-level routing}
\label{sec:implementable-routing}

The transfer-function policies characterized in Propositions~\ref{prop:weaklyeven} or \ref{prop:weaklyodd} describe seller-level demand processes in the $z$-transform domain. To connect these prescriptions to marketplace operations, this subsection (i) translates the optimal policies into explicit time-domain allocation rules, and (ii) provides a simple order-by-order routing mechanism that implements the resulting benchmark allocations as orders arrive. The benchmark should be interpreted as an idealized target that is computed \emph{ex post} from aggregate demand, while the routing mechanism is \emph{online} and uses only information available at the time each order arrives.\smallskip

\begin{cor}\label{cor:timedomain}
Consider the demand allocation with $z$-transform representation
$\psi_n(z)=\frac{1}{N}\,\psi(z)\,\mathcal{T}_n(z)$, where the transfer function $\mathcal{T}_n(z)$
is given in Propositions~\ref{prop:weaklyeven} or \ref{prop:weaklyodd} depending on whether $N$
is even or odd. Then, the time-domain representation of seller $n$'s demand $D_{nt}$ is given by
\begin{enumerate}[\rm(a)]
\item For $N$ even:
\[
D_{nt}=\frac{D_t}{N}+(-1)^n\,\frac{\sigma}{N\,\sigma_{\LB}}\,(D_{t-1}-\mu),\qquad n\in[N].
\]

\item For $N$ odd:
\begin{align*}
D_{1t}&=\frac{D_t}{N}+\frac{\sigma}{N\,\sigma_{\LB}}\Big[(D_{t-1}-\mu)+(D_{t-2}-\mu)\Big],\\
D_{2t}&=\frac{D_t}{N}-\frac{\sigma}{N\,\sigma_{\LB}}\,(D_{t-2}-\mu),\\
D_{nt}&=\frac{D_t}{N}+(-1)^n\,\frac{\sigma}{N\,\sigma_{\LB}}\,(D_{t-1}-\mu),\qquad n\ge 3.
\end{align*}
\end{enumerate}
\end{cor}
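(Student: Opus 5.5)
The plan is a direct translation from the $z$-domain back to the time domain, using $D_t-\mu=\psi(\C{B})\,\eps_t$ from \cref{assm:Normal}. Seller $n$'s centered demand is $D_{nt}-\mu/N=\psi_n(\C{B})\eps_t=\frac{1}{N}\,\C{T}_n(\C{B})\,\psi(\C{B})\,\eps_t$.

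The first step is to justify commuting the operators. Each $\C{T}_n$ in Propositions~\ref{prop:weaklyeven} and~\ref{prop:weaklyodd} is a polynomial in $z$ of degree at most two. Multiplication of $z$-transforms therefore corresponds to composition of the associated backshift filters, and these commute. This gives
\[
D_{nt}-\frac{\mu}{N}=\frac{1}{N}\,\C{T}_n(\C{B})\,(D_t-\mu).
\]

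The second step expands the polynomial. If $\C{T}_n(z)=1+c_1z+c_2z^2$, the identity becomes
\[
D_{nt}=\frac{\mu}{N}+\frac{1}{N}(D_t-\mu)+\frac{c_1}{N}(D_{t-1}-\mu)+\frac{c_2}{N}(D_{t-2}-\mu).
\]
The constant terms combine, since $\frac{\mu}{N}+\frac{1}{N}(D_t-\mu)=\frac{D_t}{N}$.

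The third step substitutes the coefficients. In both propositions $\ba=N\sigma/\abs{\psi(0)}$, and since $\sigma_{\LB}=\abs{\psi(0)}/N$ we have $\ba/N=\sigma/(N\sigma_{\LB})$.
\begin{itemize}
\item For $N$ even, $c_1=(-1)^n\ba$ and $c_2=0$, which gives (a).
\item For $N$ odd, seller $1$ has $c_1=c_2=\ba$.
\item For $N$ odd, seller $2$ has $c_1=0$ and $c_2=-\ba$.
\item For $N$ odd, sellers $n\ge3$ have $c_1=(-1)^n\ba$ and $c_2=0$.
\end{itemize}
These cases give the three lines of (b).

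As a consistency check, summing over $n$ should recover $D_t$. This follows because $\sum_n\C{T}_n(z)=N$ in each case: the alternating signs cancel for even $N$, and for odd $N$ the terms $\ba z^2-\ba z^2$ cancel while the $z$-coefficients $\ba+\sum_{n\ge3}(-1)^n\ba=0$ also cancel.

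No step is a real obstacle. The only point needing care is the first step, justifying $\C{T}_n(\C{B})\psi(\C{B})\eps_t=\C{T}_n(\C{B})(D_t-\mu)$. This holds because $\C{T}_n$ is a finite polynomial, so the series manipulation is elementary and converges in $L^2$ since $\psi\in\ell^2$.
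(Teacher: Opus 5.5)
Your proposal is correct and follows essentially the same route as the paper's proof: write $D_{nt}-\mu/N=\frac{1}{N}\psi(\mathcal{B})\mathcal{T}_n(\mathcal{B})\varepsilon_t$, commute the backshift filters to get $\frac{1}{N}\mathcal{T}_n(\mathcal{B})(D_t-\mu)$, expand the polynomial, and use $\bar\alpha=\sigma/\sigma_{\LB}$ together with $\frac{\mu}{N}+\frac{1}{N}(D_t-\mu)=\frac{D_t}{N}$. Your closing check that $\sum_n\mathcal{T}_n(z)=N$ is not needed for the corollary, but it is correct.
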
\smallskip

The time-domain representations in Corollary~\ref{cor:timedomain} can be interpreted as an \emph{ex post} allocation rule:
at the end of period $t$ the platform observes the realized aggregate demand $D_t$ and then assigns seller-level
demands $\{D_{nt}\}_{n\in[N]}$ according to a simple linear function of $(D_t,D_{t-1},D_{t-2})$ (and $\mu$). Under
this interpretation, implementation is immediate: the platform maintains a running record of past aggregate demand
realizations, computes the adjustments prescribed by the corollary, and posts the corresponding seller-level
quantities for period $t$. In particular, the rule is lightweight---it requires only $\mu$, $\sigma/\sigma_{\LB}$,
and one or two lags of $D_t$---and it is transparent, since each seller's allocation is a uniform share $D_t/N$ plus
a signed correction that depends on recent deviations of market demand from its mean.\smallskip

In most operational settings, however, demand is not allocated in a single end-of-period ``batch''. Rather, orders
arrive sequentially and must be routed in real time to a specific seller (or fulfillment mode) subject to service
levels, capacity, and latency constraints. When demand allocations are executed unit-by-unit, the platform typically
cannot wait until the end of the period to learn $D_t$ before deciding $\{D_{nt}\}$, and even if some form of batching is possible, deferring assignment would delay fulfillment and degrade customer experience. Moreover, practical marketplaces often impose additional frictions---partial cancellations, heterogeneous shipping promises, and seller availability---that make a pure after-the-fact redistribution of realized demand infeasible.\smallskip

For these reasons, Corollary~\ref{cor:timedomain} is best viewed as a \emph{theoretical benchmark} describing the
target seller-level demand processes implied by the transfer-function design. We therefore complement it with a
dynamic allocation mechanism that routes each arriving order immediately, using only information available up to
that moment, while ensuring that the induced seller-level demand streams track (as closely as possible) the
benchmark allocations prescribed by the corollary.\smallskip

Corollary~\ref{cor:timedomain} implies that the benchmark allocation can be written as
\begin{equation}\label{eq:benchmark_decomp}
D_{nt}=\frac{D_t}{N}+b_{n,t},
\end{equation}
where the offset $b_{n,t}$ depends only on lagged aggregate demand (and parameters), hence is known at the beginning
of period $t$ even though $D_t$ is not yet realized. Concretely,
\begin{equation}\label{eq:offsets}
b_{n,t}= \dfrac{\sigma}{N\,\sigma_{\LB}}\,
\begin{cases}
(-1)^n\,(D_{t-1}-\mu), & \text{if $N$ is even or $N$ is odd and $n\ge 3$},\\[1.2ex]
(D_{t-1}-\mu)+(D_{t-2}-\mu), & \text{if $N$ is odd and $n=1$},\\[1.2ex]
-(D_{t-2}-\mu), & \text{if $N$ is odd and $n=2$}.
\end{cases}
\end{equation}\smallskip

Within period $t$, index arriving orders by $k=1,2,\dots$. Define $A_{n,t}(k)$ as the number of orders assigned to
seller $n$ among the first $k-1$ arrivals in period $t$, with $A_{n,t}(1)=0$. The platform routes orders using the
offset-tracking policy in Algorithm~\ref{alg:offsettracking}.\smallskip

\begin{algorithm}[h]
\caption{Offset-Tracking}
\label{alg:offsettracking}
\begin{algorithmic}[1]
\Require Number of sellers $N$, parameters $(\mu,\sigma,\sigma_{\LB})$, lagged aggregate demand $(D_{t-1},D_{t-2})$.
\State Compute offsets $\{b_{n,t}\}_{n\in[N]}$ from \eqref{eq:offsets}.
\State Initialize counters $A_{n}\gets 0$ for all $n\in[N]$.
\For{$k=1,2,\ldots$ (as orders arrive during period $t$)}
    \State Choose
    \[
    n_k \in \arg\min_{n\in[N]} \bigl\{A_{n}-b_{n,t}\bigr\}
    \]
    (break ties uniformly at random).
    \State Assign order $k$ to seller $n_k$ and update $A_{n_k}\gets A_{n_k}+1$.
\EndFor
\State \Return Implemented seller demands $\widehat D_{nt}\gets A_{n}$ for all $n\in[N]$.
\end{algorithmic}
\end{algorithm}

Because the rule always assigns the next order to the seller with the smallest ``offset-adjusted'' count
$A_n-b_{n,t}$, it equalizes these adjusted counts as tightly as possible as orders arrive. Consequently, at the end
of period $t$ the implemented allocation matches the benchmark up to indivisibility: that is,
\linebreak $\bigl|\widehat D_{nt}-D_{nt}\bigr|\le 1$ for all $n\in[N]$.\smallskip

\begin{prop}[Offset-tracking matches the benchmark up to indivisibility]\label{lem:offsettracking_discrepancy}
Fix a period $t$ and suppose that $D_t\in\mathbb{Z}_+$ orders arrive during the period.
Let $\{b_{n,t}\}_{n\in[N]}$ be the offsets in \eqref{eq:offsets} and define the benchmark target shares
\[
x_{n,t}\;:=\;\frac{D_t}{N}+b_{n,t},\qquad n\in[N].
\]
Assume that $x_{n,t}\ge 0$ for all $n$.\footnote{This feasibility condition is automatically satisfied whenever the
benchmark allocation in \eqref{eq:benchmark_decomp} is interpreted as a (nonnegative) order allocation for the
realized $D_t$.} Let $\widehat D_{nt}$ denote the implemented seller demands returned by
Algorithm~\ref{alg:offsettracking}. Then for every $n\in[N]$,
\[
\bigl|\widehat D_{nt}-x_{n,t}\bigr|
=
\left|\widehat D_{nt}-\left(\frac{D_t}{N}+b_{n,t}\right)\right|
\le 1.
\]
\end{prop}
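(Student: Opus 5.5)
The proof will track the ``offset-adjusted counts'' that Algorithm~\ref{alg:offsettracking} equalizes. Let $y_n:=A_n-b_{n,t}$, and write $y_n^{\rm fin}=\widehat D_{nt}-b_{n,t}$ for the value after all $D_t$ orders have been routed. Since $x_{n,t}-b_{n,t}=D_t/N$, we have
\[
\widehat D_{nt}-x_{n,t}=y_n^{\rm fin}-\frac{D_t}{N},
\]
so the claim is equivalent to $|y_n^{\rm fin}-D_t/N|\le 1$ for every $n$.

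\textbf{Step 1: the offsets sum to zero.} For $N$ even the signs $(-1)^n$ cancel in pairs. For $N$ odd, the indices $n=3,\dots,N$ form an odd number of terms beginning and ending with sign $-1$, so they sum to $-(D_{t-1}-\mu)$. Seller~1 contributes $(D_{t-1}-\mu)+(D_{t-2}-\mu)$ and seller~2 contributes $-(D_{t-2}-\mu)$, so the total is again $0$. Because $\sum_n\widehat D_{nt}=D_t$, it follows that $\sum_n y_n^{\rm fin}=D_t$, i.e.\ the average of the final $y$'s is exactly $D_t/N$.

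\textbf{Step 2: an invariant of the greedy rule.} Each $y_n$ only increases over the period, so the running minimum $\min_m y_m$ is nondecreasing. Let $L:=\min_m y_m^{\rm fin}\le D_t/N$, where the inequality holds because a minimum never exceeds the average. Take any seller $n$ with $\widehat D_{nt}\ge 1$ and look at the last order it received. At that moment $y_n$ attained the current minimum, which is at most $L$; after that order $y_n$ rose by exactly $1$ and never changed again. Hence $y_n^{\rm fin}\le L+1\le D_t/N+1$, which gives the upper bound for every seller that receives at least one order. For a seller with $\widehat D_{nt}=0$ the upper bound is immediate, since $\widehat D_{nt}-x_{n,t}=-x_{n,t}\le 0$ by the assumption $x_{n,t}\ge0$.

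\textbf{Step 3: the lower bound, which I expect to be the main obstacle.} Since $y_n^{\rm fin}\ge L$ for all $n$, it suffices to show $L\ge D_t/N-1$. The difficulty is that the initial values $-b_{n,t}$ may be widely spread, so the greedy rule need not equalize all counts; one cannot argue directly that the final spread of the $y$'s is at most $1$. Instead argue by contradiction. Suppose $L+1<D_t/N$.
\begin{itemize}
\item[(i)] Every seller with at least one order has $y_n^{\rm fin}\le L+1<D_t/N$.
\item[(ii)] Every seller with no order has $y_n^{\rm fin}=-b_{n,t}=D_t/N-x_{n,t}\le D_t/N$.
\end{itemize}
Summing, $\sum_n y_n^{\rm fin}<D_t$ as soon as some seller is served, which contradicts Step~1. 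If instead no seller is served, then $D_t=0$, every $x_{n,t}=b_{n,t}$, and the claim reduces to checking this trivial case directly (which holds whenever $D_t=0$ together with the standing assumption $x_{n,t}\ge 0$). Therefore $L\ge D_t/N-1$, and combining with Step~2 gives $|\widehat D_{nt}-x_{n,t}|\le1$ for all $n\in[N]$. Ties broken at random do not affect the argument.
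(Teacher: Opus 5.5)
Your proof is correct and is essentially the paper's argument. Your $y_n^{\rm fin}-D_t/N$ is the paper's terminal discrepancy $\delta_n:=\widehat D_{nt}-x_{n,t}$, and both proofs rest on the same two facts: a served seller was a minimizer just before its last order, so it ends at most one above the final minimum; and $\sum_n b_{n,t}=0$. The paper is slightly more economical, and, contrary to your remark in Step~3, it does bound the final spread by $1$ directly: it applies the last-assignment step only to an $\arg\max$ seller, which must have been served whenever its discrepancy is positive because $\delta_n(0)=-x_{n,t}\le 0$, so $\max_n\delta_n\le\min_n\delta_n+1$, and $\sum_n\delta_n=0$ then gives $\max_n\delta_n\le 1$ and $\min_n\delta_n\ge -1$ at once.
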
\smallskip

\begin{rem}[Additional operational constraints]{\sf
In practice, order assignments may be subject to additional operational constraints (e.g., inventory availability,
delivery promises, or capacity limits). The same policy can be modified by restricting the minimization in
Algorithm~\ref{alg:offsettracking} to a feasible set $\mathcal{S}_t(k)\subseteq[N]$ of sellers available to serve
the $k^{\mbox{\tiny th}}$ order arriving during period $t$.}
\end{rem}

\subsection{Non-Uniqueness}\label{sec:non-unique}
Propositions~\ref{prop:weaklyeven} and \ref{prop:weaklyodd} are \emph{existence} results: for any admissible target $\sigma$, typically infinitely many distinct allocation rules generate the same seller-level root MSFE. For instance, when $N$ is even, the demand allocation policy $\psi_n(z)=\frac{\psi(z)}{N}\,\C{T}_n(z)$ with
\begin{equation}\label{eq:non_unique}
\C{T}_n(z)=1+(-1)^n\,\bar\alpha\, z^k,
\qquad
\bar\alpha=\frac{\sigma}{\sigma_{\LB}},
\end{equation}
implements $\sigma_n^\pi=\sigma$ for every $n\in[N]$ and every integer $k\in\mathbb{N}$, so varying the lag length $k$ yields a distinct policy with identical forecasting implications.\footnote{Indeed, for each $n$, the transfer function $\C{T}_n(z)$ admits an inner--outer factorization of the form
$\C{T}_n(z)=\C{O}_n(z)\,\C{I}_n(z)$,
with
$\C{O}_n(z)=(-1)^n\bar\alpha+z^k$
and
$\C{I}_n(z)=({1+(-1)^n\bar\alpha\, z^k})/({(-1)^n\bar\alpha+z^k})$.
Since $|\C{O}_n(0)|=\bar\alpha=\sigma/\sigma_{\LB}$, it follows from \cref{lem:msfe_outer_main} that the induced root MSFE is exactly $\sigma$.} Our preference for the constructions in Propositions~\ref{prop:weaklyeven} and \ref{prop:weaklyodd} is driven by \emph{parsimony}: they deliver minimal-memory policies with transparent time-domain interpretations. By \cref{cor:timedomain}, the even-$N$ policy depends only on current and one-period-lagged demand, whereas \eqref{eq:non_unique} with $k>1$ requires demand $k$ periods in the past. Thus, although both policies induce exactly the same seller-level root MSFE, the former is operationally simpler, requires less historical information, and is arguably a more natural benchmark for implementation.

\section{Illustrative Example}\label{sec:IllustrativeExample}

Consider ten sellers serving demand concentrated in a single urban market, but located at different distances from that market, as depicted in the left panel of Figure~\ref{fig:HoldingFulfillment_Combined}. Geography creates a fundamental cost tension. Sellers closer to the city can ship cheaply because they are nearer to customers, but they pay more for urban warehouse space. Sellers farther away face the opposite problem: storage is cheap, but every shipment travels far and costs more (see \citealp{fried2023ecommerce} for a detailed discussion).\footnote{Last-mile facilities command substantially higher rents than more remote warehouses; see, for example, \citet{dablanc2014logistics}, \citet{wealthmgmt2024lastmile}. Likewise, parcel shipping costs rise with delivery distance and shipping zone; see \citet{ups2026rates}.} The right panel of Figure~\ref{fig:HoldingFulfillment_Combined} reports the cost parameters $(h_n,b_n,f_n)$, which reflect this geographical trade-off, along with the implied safety-stock coefficients $K_n^{\ti{FBM}}$ and $K_n^{\ti{FBP}}$ for each seller. This cost tension shapes each seller's choice between \textup{FBP} and \textup{FBM}. Distant sellers, facing high fulfillment costs,  gain the most from outsourcing logistics to the platform. Urban sellers, who already ship cheaply on their own, have less benefit. The question is then how the platform's demand-allocation policy interacts with this geographic heterogeneity.


\begin{figure}[h!]
\centering

\begin{minipage}[c]{0.57\linewidth}
\centering
\begin{tikzpicture}[font=\small,
    costlabel/.style={font=\scriptsize, align=center, text=black},
    scale=0.82]


  \newcommand{\cityicon}[3]{%
    \begin{scope}[shift={(#1,#2)}, scale=#3, line width=0.45pt]
      \fill[gray!25] (-0.40,-0.50) rectangle (-0.12,0.35);
      \draw[black]   (-0.40,-0.50) rectangle (-0.12,0.35);
      \fill[white] (-0.34,-0.10) rectangle (-0.28,0.02);
      \fill[white] (-0.24,-0.10) rectangle (-0.18,0.02);
      \fill[white] (-0.34, 0.12) rectangle (-0.28,0.24);
      \fill[white] (-0.24, 0.12) rectangle (-0.18,0.24);
      \fill[gray!35] (-0.10,-0.50) rectangle (0.14,0.60);
      \draw[black]    (-0.10,-0.50) rectangle (0.14,0.60);
      \fill[white] (-0.04,0.05) rectangle (0.02,0.17);
      \fill[white] ( 0.06,0.05) rectangle (0.12,0.17);
      \fill[white] (-0.04,0.27) rectangle (0.02,0.39);
      \fill[white] ( 0.06,0.27) rectangle (0.12,0.39);
      \draw[black, line width=0.3pt] (0.02,0.60) -- (0.02,0.74);
      \fill[gray!20] (0.16,-0.50) rectangle (0.40,0.15);
      \draw[black]   (0.16,-0.50) rectangle (0.40,0.15);
      \fill[white] (0.21,-0.12) rectangle (0.27,0.00);
      \fill[white] (0.30,-0.12) rectangle (0.36,0.00);
    \end{scope}%
  }

  \newcommand{\warehouseicon}[3]{%
    \begin{scope}[shift={(#1,#2)}, scale=#3, line width=0.45pt]
      \fill[gray!12] (-0.38,-0.30) rectangle (0.38,0.12);
      \draw[black]   (-0.38,-0.30) rectangle (0.38,0.12);
      \foreach \yy in {-0.20,-0.10,0.00}{%
        \draw[gray!50, line width=0.25pt] (-0.38,\yy) -- (0.38,\yy);}
      \fill[gray!30] (-0.44,0.12) -- (0,0.40) -- (0.44,0.12) -- cycle;
      \draw[black]   (-0.44,0.12) -- (0,0.40) -- (0.44,0.12);
      \draw[gray!50, line width=0.25pt] (-0.22,0.26) -- (0.22,0.26);
      \fill[gray!35] (-0.28,-0.28) rectangle (-0.12,-0.14);
      \draw[black]   (-0.28,-0.28) rectangle (-0.12,-0.14);
      \fill[gray!25] (-0.06,-0.28) rectangle ( 0.10,-0.14);
      \draw[black]   (-0.06,-0.28) rectangle ( 0.10,-0.14);
      \fill[gray!40] ( 0.16,-0.28) rectangle ( 0.32,-0.14);
      \draw[black]   ( 0.16,-0.28) rectangle ( 0.32,-0.14);
      \fill[gray!30] (-0.18,-0.10) rectangle (-0.02, 0.04);
      \draw[black]   (-0.18,-0.10) rectangle (-0.02, 0.04);
      \fill[gray!35] ( 0.04,-0.10) rectangle ( 0.20, 0.04);
      \draw[black]   ( 0.04,-0.10) rectangle ( 0.20, 0.04);
    \end{scope}%
  }

  \newcommand{\truckicon}[4]{%
    \begin{scope}[shift={(#1,#2)}, rotate=#3, scale=#4, yscale=-1, line width=0.40pt]
      \fill[gray!20] (-0.42,-0.14) rectangle (0.12,0.14);
      \draw[black]   (-0.42,-0.14) rectangle (0.12,0.14);
      \fill[gray!35] (0.12,-0.14) -- (0.30,-0.14) -- (0.30,0.10)
                     -- (0.19,0.14) -- (0.12,0.14) -- cycle;
      \draw[black]   (0.12,-0.14) -- (0.30,-0.14) -- (0.30,0.10)
                     -- (0.19,0.14) -- (0.12,0.14);
      \fill[black] (0.18,0.00) rectangle (0.28,0.09);
      \fill[black] (-0.26,-0.14) circle (0.055);
      \fill[black] ( 0.22,-0.14) circle (0.055);
    \end{scope}%
  }

  \coordinate (C) at (0,0);

  \cityicon{0}{0}{1.45}
  \node[font=\footnotesize\bfseries, below=24pt, fill=white, inner sep=1pt] at (C)
    {Urban Demand Center};

  \fill[gray!8, draw=gray!30, line width=0.6pt]
    plot[smooth cycle, tension=0.75] coordinates {
      ( 1.00,  2.25)   
      ( 1.30,  1.50)   
      ( 1.05,  0.40)   
      ( 1.10, -1.50)   
      ( 0.65, -2.60)   
      (-1.15, -2.38)   
      (-1.60, -1.80)   
      (-1.90, -0.10)   
      (-1.90,  1.45)   
      (-0.80,  2.25)   
    };

  \draw[black, dashed, line width=0.5pt] (C) -- ( 0.33,  1.87);  
  \draw[black, dashed, line width=0.5pt] (C) -- (-1.41,  1.41);  
  \draw[black, dashed, line width=0.5pt] (C) -- ( 1.84,  2.62);  
  \draw[black, dashed, line width=0.5pt] (C) -- (-2.44,  0.89);  
  \draw[black, dashed, line width=0.5pt] (C) -- (-3.44, -1.61);  
  \draw[black, dashed, line width=0.5pt] (C) -- (-0.75, -2.05);  
  \draw[black, dashed, line width=0.5pt] (C) -- ( 1.44, -3.95);  
  \draw[black, dashed, line width=0.5pt] (C) -- (-1.37,  3.76);  
  \draw[black, dashed, line width=0.5pt] (C) -- ( 3.03, -1.75);  
  \draw[black, dashed, line width=0.5pt] (C) -- ( 4.58,  0.40);  


  \coordinate (W10) at (0.33, 1.87);
  \warehouseicon{0.33}{1.87}{0.7}
  \node[costlabel, anchor=south west] at (-0.36, 2.10) {$(h_{10},f_{10})$};

  \coordinate (W9) at (-1.41, 1.41);
  \warehouseicon{-1.41}{1.41}{0.7}
  \node[costlabel, anchor=south east] at (-1.4, 1.54) {$(h_9,f_9)$};

  \coordinate (W6) at (1.84, 2.62);
  \warehouseicon{1.84}{2.62}{0.7}
  \node[costlabel, anchor=south] at (1.84, 3.05) {$(h_6,f_6)$};

  \coordinate (W7) at (-2.44, 0.89);
  \warehouseicon{-2.44}{0.89}{0.7}
  \node[costlabel, anchor=east] at (-2.80, 0.89) {$(h_7,f_7)$};

  \coordinate (W4) at (-3.44, -1.61);
  \warehouseicon{-3.44}{-1.61}{0.7}
  \node[costlabel, anchor=north east] at (-3.76, -1.80) {$(h_4,f_4)$};

  \coordinate (W8) at (-0.75, -2.05);
  \warehouseicon{-0.75}{-2.05}{0.7}
  \node[costlabel, anchor=north] at (-0.75, -2.41) {$(h_8,f_8)$};

  \coordinate (W2) at (1.44, -3.95);
  \warehouseicon{1.44}{-3.95}{0.7}
  \node[costlabel, anchor=north] at (1.44, -4.31) {$(h_2,f_2)$};

  \coordinate (W3) at (-1.37, 3.76);
  \warehouseicon{-1.37}{3.76}{0.7}
  \node[costlabel, anchor=south] at (-1.37, 4.19) {$(h_3,f_3)$};

  \coordinate (W5) at (3.03, -1.75);
  \warehouseicon{3.03}{-1.75}{0.7}
  \node[costlabel, anchor=west] at (3.39, -1.75) {$(h_5,f_5)$};

  \coordinate (W1) at (4.58, 0.40);
  \warehouseicon{4.58}{0.40}{0.7}
  \node[costlabel, anchor=west] at (4.94, 0.40) {$(h_1,f_1)$};

  \truckicon{2.29}{0.20}{185}{1.0}

  \cityicon{0}{0}{1.45}
  \node[font=\footnotesize\bfseries, below=24pt, fill=none, inner sep=1pt] at (C)
    {Urban Demand Center};

\end{tikzpicture}
\end{minipage}%
\hfill
\begin{minipage}[c]{0.40\linewidth}
\centering
\small
\renewcommand{\arraystretch}{1.12}
\begin{tabular}{@{}crrrrr@{}}
\hline
$n$ & $h_n$ & $b_n$ & $f_n$ & $K^{\mathrm{FBM}}_n$ & $K^{\mathrm{FBP}}_n$ \\
\hline
 1 & 0.60 & 12.00 & 24.50 & 1.250 & 3.703 \\
 2 & 0.80 &  9.00 & 24.40 & 1.479 & 3.382 \\
 3 & 0.90 & 13.00 & 23.10 & 1.757 & 3.791 \\
 4 & 1.10 &  8.00 & 22.30 & 1.830 & 3.250 \\
 5 & 1.20 & 11.00 & 21.40 & 2.115 & 3.606 \\
 6 & 1.50 & 10.00 & 20.00 & 2.438 & 3.500 \\
 7 & 1.70 &  9.00 & 18.80 & 2.591 & 3.382 \\
 8 & 2.00 & 12.00 & 12.50 & 3.159 & 3.703 \\
 9 & 2.00 & 13.00 & 11.90 & 3.229 & 3.791 \\
10 & 2.10 & 11.00 & 10.48 & 3.191 & 3.606 \\
\hline
\end{tabular}

\smallskip
{\footnotesize All costs as \% of $r=\$100$.}
\end{minipage}

\caption{Geographic trade-off between holding and fulfillment costs (left)
and seller cost parameters with safety-stock coefficients (right).
Each seller~$n$ is located at holding cost $h_n$ and fulfillment cost $f_n$;
sellers nearer the city face higher $h_n$ but lower $f_n$. The shaded region highlights the urban area containing sellers 8, 9, and 10, who switch from FBP to FBM in the optimal policy.}
\label{fig:HoldingFulfillment_Combined}
\end{figure}


To make this concrete, we normalize the gross per-unit margin to $r=\$100$ and express all operating costs as percentages of $r$ (and hence, numerically, in dollars). The platform charges an intermediation fee $\rho=15$ per unit, and its effective \textup{FBP} seller-side costs are
\[
(F,H)=(10,2.5).
\]
On the platform side, the net \textup{FBP} fulfillment payoff and storage payoff are
\[
\Delta f=2,\qquad \Delta h=2.
\]
These values imply that $F\leq f_n$ and $H\geq h_n$ for all sellers: the platform fulfills more cheaply than any seller can on its own, but its storage costs exceed every seller's holding cost.\footnote{For example, Amazon's \textup{FBA} storage and fulfillment fees provide a useful benchmark for the values of $H$ and $F$ used here; see \citet{amazon2025fba,sellerapp2026fbastorage}.} 

For simplicity, we assume i.i.d.\ market demand,
\[
D_t=\mu+5\,\eps_t,\qquad \mu=15,\qquad \Var(\eps_t)=1,
\]
so that the market-demand $z$-transform is $\psi(z)=5$. Hence, by \cref{prop:lowerbound},
\[
\sigma_{\LB}=\frac{|\psi(0)|}{N}=\frac{5}{N}.
\]
With $N=10$, this yields
\[
\sigma_{\LB}=0.5.
\]
Because $N$ is even, \cref{prop:weaklyeven} implies that the platform can implement any $\sigma\geq \sigma_{\LB}$ using an MA(1) modification of the uniform split. The platform's design space is therefore the participation-truncated interval $[\sigma_{\LB},\sigma_{\UB}]$, where $\sigma_{\UB}$ is defined in \eqref{eq:sigmaUB}. In this example,
\[
\sigma_{\UB}=\min_{n\in[N]}\max\!\left\{
\frac{(r-\rho-f_n)(\mu/N)}{K_n^{\ti{FBM}}},
\frac{(r-\rho-F)(\mu/N)}{K_n^{\ti{FBP}}}
\right\}\approx 33.
\]

The platform thus faces a wide range of implementable forecast-error levels. To see how this lever interacts with the geographic trade-off, Figure~\ref{fig:Ex_DFvsDK} maps each seller into the $(\Delta K,\Delta F)$ plane. The shaded cone corresponds to feasible values of $\sigma\in[\sigma_{\LB},\sigma_{\UB}]$, bounded by the two lines
\[
\Delta F=\frac{N\sigma_{\LB}}{\mu}\Delta K
\qquad\text{and}\qquad
\Delta F=\frac{N\sigma_{\UB}}{\mu}\Delta K.
\]
Sellers inside this region may choose either \textup{FBP} or \textup{FBM}, depending on the root MSFE induced by the platform's allocation rule. 

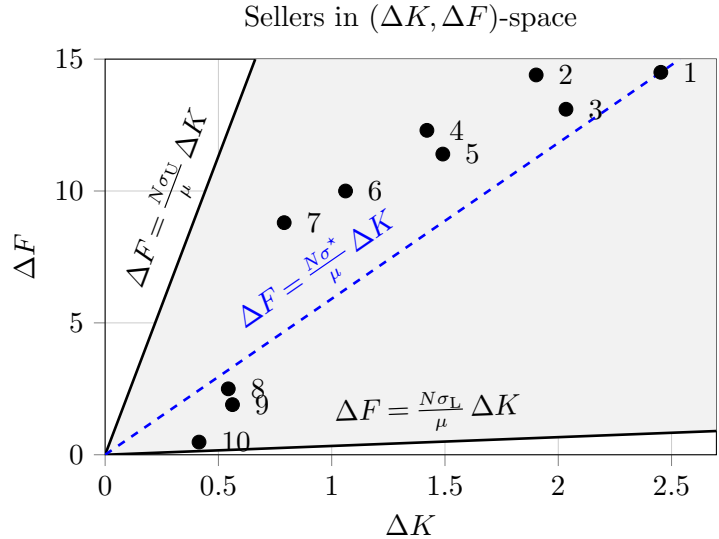
\begin{figure}[h]
\centering

\pgfmathsetmacro{\figscale}{0.7}
\begin{tikzpicture}
\begin{axis}[
    width=\figscale*0.78\linewidth,
    height=\figscale*0.55\linewidth,
    xmin=0, xmax=2.697962,
    ymin=0, ymax=15.000000,
    grid=both,
    major grid style={line width=.2pt, draw=gray!40},
    minor grid style={line width=.1pt, draw=gray!20},
    tick align=outside,
    tick pos=left,
    xlabel={$ \Delta K $},
    ylabel={$ \Delta F $},
    title={Sellers in $(\Delta K,\Delta F)$-space},
]

\addplot[draw=none, fill=gray!10, forget plot]
coordinates { (0.000000,0.000000) (0.662607,15.000000) (2.697962,15.000000) (2.697962,0.899321) (0.000000,0.000000) } \closedcycle;

\addplot[only marks, mark=*, mark size=2.6pt] coordinates {
(2.452693,14.500000)
(1.902455,14.400000)
(2.034033,13.100000)
(1.420230,12.300000)
(1.490433,11.400000)
(1.061128,10.000000)
(0.790304,8.800000)
(0.543475,2.500000)
(0.562064,1.900000)
(0.414429,0.480000)
};

\node[anchor=west] at (axis cs:2.506652,14.500000) {1};
\node[anchor=west] at (axis cs:1.956414,14.400000) {2};
\node[anchor=west] at (axis cs:2.087992,13.100000) {3};
\node[anchor=west] at (axis cs:1.474190,12.300000) {4};
\node[anchor=west] at (axis cs:1.544393,11.400000) {5};
\node[anchor=west] at (axis cs:1.115088,10.000000) {6};
\node[anchor=west] at (axis cs:0.844263,8.800000) {7};
\node[anchor=west] at (axis cs:0.597434,2.500000) {8};
\node[anchor=west] at (axis cs:0.616023,1.900000) {9};
\node[anchor=west] at (axis cs:0.468388,0.480000) {10};

\addplot[black, line width=1.0pt] coordinates {(0,0) (2.697962,0.899321)} node[pos=0.53, sloped, above] {$\Delta F = \frac{N\sigma_{\LB}}{\mu}\,\Delta K$};
\addplot[black, line width=1.0pt] coordinates {(0,0) (0.662607,15.000000)} node[pos=0.62, sloped, above] {$\Delta F = \frac{N\sigma_{\UB}}{\mu}\,\Delta K$};
\addplot[blue, dashed, line width=1.0pt] coordinates {(0,0) (2.537269,15.000000)} node[pos=0.4, sloped, above, text=blue] {$\Delta F = \frac{N\sigma^{\star}}{\mu}\,\Delta K$};

\end{axis}
\end{tikzpicture}


\caption{Sellers in the $(\Delta K,\Delta F)$ plane. The shaded cone corresponds to feasible values $\sigma\in[\sigma_{\LB},\sigma_{\UB}]$, defined by the boundaries $\Delta F=(N\sigma_{\LB}/\mu)\Delta K$ and $\Delta F=(N\sigma_{\UB}/\mu)\Delta K$.}
\label{fig:Ex_DFvsDK}
\end{figure}

For each $\sigma\in[\sigma_{\LB},\sigma_{\UB}]$, the platform payoff under neutrality is
\[
V(\sigma)
=\rho\mu
+(\Delta f+\Delta h)\frac{\mu}{N}\,\big|N^{\ti{FBP}}(\sigma)\big|
+\Delta h\,\sigma\sum_{n\in N^{\ti{FBP}}(\sigma)} \zeta^{\ti{FBP}}_n,\; \text{where}
\]
\[
N^{\ti{FBP}}(\sigma)=\left\{n\in[N]: \Delta F_n \ge (N\sigma/\mu)\Delta K_n\right\}.
\]
As $\sigma$ rises, each seller $n$ reaches a breakpoint
\[
\sigma_n=\frac{\mu\,\Delta F_n}{N\,\Delta K_n}
\]
at which it switches from \textup{FBP} to \textup{FBM}, causing a jump in $V(\sigma)$. The sellers who switch first---at low values of $\sigma_n$---are the urban sellers, for whom the fulfillment advantage $\Delta F_n$ is small relative to the safety-stock disadvantage $\Delta K_n$. Remote sellers, with large $\Delta F_n$, remain in \textup{FBP} even at high MSFE.\medskip

Figure~\ref{fig:IllustrativeExample_Panels} plots the number of \textup{FBP} adopters and the platform payoff as functions of $\sigma$. The shaded band marks the feasible interval $\sigma\in[\sigma_{\LB},\sigma_{\UB}]$. By the participation constraint, we set $V(\sigma)=0$ for $\sigma>\sigma_{\UB}$.

\begin{figure}[h]
\centering

\pgfmathsetmacro{\figscale}{0.75}
\begin{tikzpicture}
\definecolor{lightblue}{rgb}{0.30,0.65,0.95}

\pgfmathsetmacro{\sigL}{0.500000}
\pgfmathsetmacro{\sigU}{33.956782}
\pgfmathsetmacro{\sigstar}{8.867804}

\begin{groupplot}[
    group style={group size=2 by 1, horizontal sep=2.5cm},
    width=\figscale*0.47\linewidth,
    height=\figscale*0.48\linewidth,
    xmin=0.000000, xmax=40.000000,
    grid=both,
    major grid style={line width=.2pt, draw=gray!40},
    minor grid style={line width=.1pt, draw=gray!20},
    tick align=outside,
    tick pos=left,
    xtick={\sigL,\sigstar,\sigU,40},
    xticklabels={$\sigma_{\LB}$,$\sigma^\star$,$\sigma_{\UB}$,$40$},
    xlabel={\large $\,\sigma$},
]

\nextgroupplot[
    ymin=0, ymax=10,
    ylabel={\normalem $|N^{\mathrm{FBP}}(\sigma)|$},
    title={Number of Sellers using FBP},
    title style={font=\normalem},
]
\addplot[draw=none, fill=gray!10, forget plot]
coordinates {(\sigL,0) (\sigU,0) (\sigU,10) (\sigL,10)} \closedcycle;

\addplot[lightblue, line width=2.8pt] coordinates {(0.000000,10) (1.737330,10)};
\addplot[lightblue, line width=2.8pt] coordinates {(1.737330,9) (5.070600,9)};
\addplot[lightblue, line width=2.8pt] coordinates {(5.070600,8) (6.900048,8)};
\addplot[lightblue, line width=2.8pt] coordinates {(6.900048,7) (8.867804,7)};
\addplot[lightblue, line width=2.8pt] coordinates {(8.867804,6) (9.660610,6)};
\addplot[lightblue, line width=2.8pt] coordinates {(9.660610,5) (11.353750,5)};
\addplot[lightblue, line width=2.8pt] coordinates {(11.353750,4) (11.473172,4)};
\addplot[lightblue, line width=2.8pt] coordinates {(11.473172,3) (12.990850,3)};
\addplot[lightblue, line width=2.8pt] coordinates {(12.990850,2) (14.135895,2)};
\addplot[lightblue, line width=2.8pt] coordinates {(14.135895,1) (16.702444,1)};
\addplot[lightblue, line width=2.8pt] coordinates {(16.702444,0) (33.956782,0)};
\addplot[lightblue, line width=2.8pt] coordinates {(33.956782,0) (40.000000,0)};
\addplot[black, dashed, line width=0.9pt] coordinates {(1.737330,10) (1.737330,9)};
\addplot[black, dashed, line width=0.9pt] coordinates {(5.070600,9) (5.070600,8)};
\addplot[black, dashed, line width=0.9pt] coordinates {(6.900048,8) (6.900048,7)};
\addplot[black, dashed, line width=0.9pt] coordinates {(8.867804,7) (8.867804,6)};
\addplot[black, dashed, line width=0.9pt] coordinates {(9.660610,6) (9.660610,5)};
\addplot[black, dashed, line width=0.9pt] coordinates {(11.353750,5) (11.353750,4)};
\addplot[black, dashed, line width=0.9pt] coordinates {(11.473172,4) (11.473172,3)};
\addplot[black, dashed, line width=0.9pt] coordinates {(12.990850,3) (12.990850,2)};
\addplot[black, dashed, line width=0.9pt] coordinates {(14.135895,2) (14.135895,1)};
\addplot[black, dashed, line width=0.9pt] coordinates {(16.702444,1) (16.702444,0)};
\addplot[black, dashed, line width=0.9pt] coordinates {(33.956782,0) (33.956782,0)};
\addplot[black, dashed, line width=0.9pt] coordinates {(8.867804,7) (8.867804,0)};

\nextgroupplot[
    ymin=0, ymax=391.074061,
    ylabel={\normalem $V(\sigma)$},
    title={Platform's Payoff},
    title style={font=\normalem},
]
\addplot[draw=none, fill=gray!10, forget plot]
coordinates {(\sigL,0) (\sigU,0) (\sigU,391.074061) (\sigL,391.074061)} \closedcycle;

\addplot[lightblue, line width=2.8pt] coordinates {(0.000000,285.000000) (1.737330,315.491411)};
\addplot[lightblue, line width=2.8pt] coordinates {(1.737330,306.378882) (5.070600,358.908484)};
\addplot[lightblue, line width=2.8pt] coordinates {(5.070600,342.877126) (6.900048,368.088446)};
\addplot[lightblue, line width=2.8pt] coordinates {(6.900048,349.051915) (8.867804,372.451487)};
\addplot[lightblue, line width=2.8pt] coordinates {(8.867804,349.697198) (9.660610,357.626971)};
\addplot[lightblue, line width=2.8pt] coordinates {(9.660610,332.515027) (11.353750,346.100482)};
\addplot[lightblue, line width=2.8pt] coordinates {(11.353750,322.365157) (11.473172,323.136839)};
\addplot[lightblue, line width=2.8pt] coordinates {(11.473172,296.581966) (12.990850,279.159337)};
\addplot[lightblue, line width=2.8pt] coordinates {(12.990850,279.159337) (14.135895,253.081224)};
\addplot[lightblue, line width=2.8pt] coordinates {(14.135895,253.081224) (16.702444,257.090346)};
\addplot[lightblue, line width=2.8pt] coordinates {(16.702444,225.000000) (33.956782,225.000000)};
\addplot[lightblue, line width=2.8pt] coordinates {(33.956782,0.000000) (40.000000,0.000000)};
\addplot[black, dashed, line width=0.9pt] coordinates {(1.737330,315.491411) (1.737330,306.378882)};
\addplot[black, dashed, line width=0.9pt] coordinates {(5.070600,358.908483) (5.070600,342.877126)};
\addplot[black, dashed, line width=0.9pt] coordinates {(6.900048,368.088445) (6.900048,349.051915)};
\addplot[black, dashed, line width=0.9pt] coordinates {(8.867804,372.451487) (8.867804,349.697198)};
\addplot[black, dashed, line width=0.9pt] coordinates {(9.660610,357.626971) (9.660610,332.515027)};
\addplot[black, dashed, line width=0.9pt] coordinates {(11.353750,346.100482) (11.353750,322.365157)};
\addplot[black, dashed, line width=0.9pt] coordinates {(11.473172,323.136838) (11.473172,296.581966)};
\addplot[black, dashed, line width=0.9pt] coordinates {(12.990850,303.669819) (12.990850,279.159337)};
\addplot[black, dashed, line width=0.9pt] coordinates {(14.135895,282.875363) (14.135895,253.081224)};
\addplot[black, dashed, line width=0.9pt] coordinates {(16.702444,257.090346) (16.702444,225.000000)};
\addplot[black, dashed, line width=0.9pt] coordinates {(33.956782,225.000000) (33.956782,0)};
\addplot[black, dashed, line width=0.9pt] coordinates {(8.867804,372.451487) (8.867804,0)};

\end{groupplot}
\end{tikzpicture}


\caption{Number of \textup{FBP} adopters (left) and platform payoff $V(\sigma)$ (right) as functions of $\sigma$. The shaded area corresponds to the feasible region $\sigma\in[\sigma_{\LB},\sigma_{\UB}]$, with $\sigma_{\LB}=0.5$ and $\sigma_{\UB}=33$.}
\label{fig:IllustrativeExample_Panels}
\end{figure}
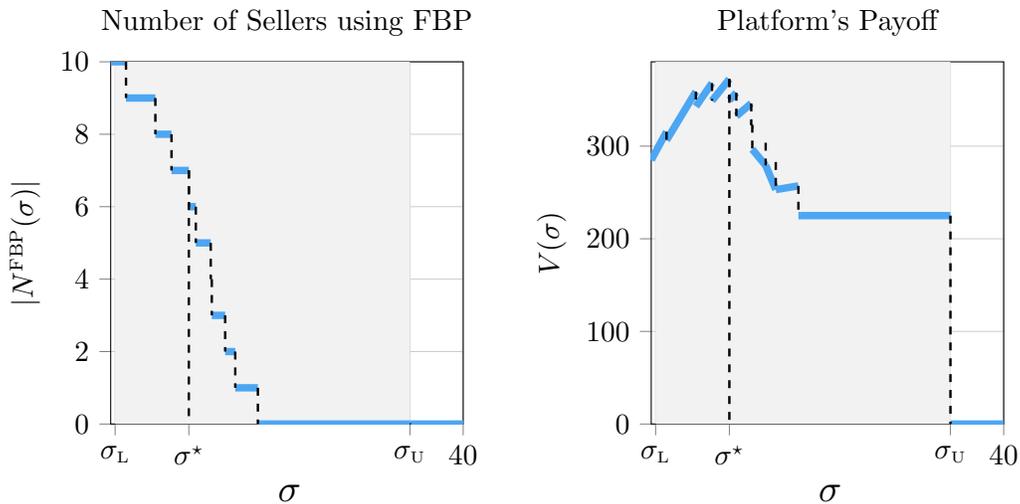

The platform-optimal choice is $\sigma^\star\approx 8.87$, far above the minimum $\sigma_{\LB}=0.5$ under a uniform allocation. At this level of forecast error, the most urban sellers, $8$, $9$, and $10$, switch to \textup{FBM}, while the remaining seven, more remote sellers stay in \textup{FBP}. This partition is captured by the shaded region in the left panel of \cref{fig:HoldingFulfillment_Combined}.\smallskip 

Table~\ref{tab:IllustrativeExample_CompareSigma} contrasts the platform's optimal strategy with the uniform benchmark.
\begin{table}[h!]
\centering
\renewcommand{\arraystretch}{1.2}
\rowcolors{3}{gray!10}{white}
\begin{tabular}{l c c}
\specialrule{1.2pt}{0pt}{0pt}
{\bf Demand Allocation} & \textbf{Uniform} & \textbf{Optimal}\\
\midrule
Root MSFE: $\sigma$ & $\sigma_{\LB}=0.50$ & $\sigma^\star=8.87$\\
FBP adopters: $N^{\ti{FBP}}(\sigma)$
& $\{1,2,3,4,5,6,7,8,9,10\}$
& $\{1,2,3,4,5,6,7\}$\\
Number of FBP adopters: $\big|N^{\ti{FBP}}(\sigma)\big|$
& $10$ & $7$\\
Platform payoff: $V(\sigma)$
& $293.78$ & $372.45$\\
Cumulative FBP safety stock: $\Gamma^{\ti{FBP}}(\sigma)$
& $4.39$ & $52.73$\\
Cumulative FBM safety stock: $\Gamma^{\ti{FBM}}(\sigma)$
& $0$ & $28.12$\\
Cumulative seller utility: $\sum_{n} U_n(\sigma)$
& $1107.14$ & $814.47$\\
\specialrule{1.2pt}{0pt}{0pt}
\end{tabular}\smallskip
\caption{Outcomes under the uniform allocation $\sigma=\sigma_{\LB}$ and the platform-optimal choice $\sigma=\sigma^\star$.}
\label{tab:IllustrativeExample_CompareSigma}
\end{table}
Under uniform allocation, all ten sellers adopt \textup{FBP} and carry minimal safety stock but generates a platform payoff of only $V(\sigma_{\LB})=293.78$. Under the optimal policy, the platform raises $\sigma$ nearly eighteen-fold. The urban sellers leave \textup{FBP} because their fulfillment savings were modest---their proximity to the demand center already kept shipping costs low, so the extra inventory burden is too costly. The remote sellers stay, because the platform's fulfillment advantage is large enough to offset the higher safety stock cost. The platform's payoff rises by $26.78\%$ to $V(\sigma^\star)=372.45$. This improvement is accompanied by a sharp increase in aggregate safety stock:
\[
\Gamma^{\ti{FBP}}(\sigma_{\LB})+\Gamma^{\ti{FBM}}(\sigma_{\LB})=4.39
\qquad\text{versus}\qquad
\Gamma^{\ti{FBP}}(\sigma^\star)+\Gamma^{\ti{FBM}}(\sigma^\star)=80.85.
\]
Total buffer inventory rises by a factor of approximately $18$, while cumulative seller utility falls from $1107.14$ to $814.47$, a decline of $26.44\%$. The geographical heterogeneity is what makes this work: the platform can afford to lose the urban sellers because the remote sellers, who benefit most from \textup{FBP}, are also the ones most willing to carry more inventory. In short, the platform exploits the holding--fulfillment trade-off created by geography to simultaneously raise service reliability and extract more surplus from the sellers who rely the most on the platform's logistics network.

\section{Discussion}\label{sec:extension}

This section discusses one extension of the baseline model that helps clarify its scope and robustness. In particular, we examine how our analysis changes when sellers rely on simple forecasting heuristics rather than optimal one-step-ahead predictors. A second extension, which considers heterogeneous replenishment lead times across sellers and fulfillment modes, is relegated to \ref{subsec:leadtime_extension}. Together, these extensions illustrate how the platform’s allocation policy continues to shape \textup{FBP} adoption and inventory levels once we move beyond the benchmark assumptions.

\subsection{Suboptimal forecasts}\label{sec:suboptimal-forecasts}

Our baseline assumes sellers form the \emph{optimal} one-step-ahead forecast, so the uncertainty in the base-stock rule \eqref{eq:basestock} is the \emph{minimal} MSFE, pinned down by the outer factor of the seller-level demand filter (\cref{lem:msfe_outer_main}). In practice, forecasting is often done with low-touch heuristics—moving averages, exponential smoothing, or Holt--Winters—rather than the optimal predictor \citep{Fildes2009EffectiveForecasting}.\smallskip

Suboptimal forecasts typically raise one-step-ahead MSFE and hence require additional safety stock, which could benefit the platform by increasing inventory intensity. But higher perceived uncertainty can also deter \textup{FBP} adoption by worsening sellers’ adoption incentives. Thus, suboptimal forecasting can increase safety stock \emph{conditional on adoption} while reducing the set of adopters, making the net \emph{ex ante} effect on the platform's payoffs ambiguous.\smallskip

Suppose seller $n$ uses some ad-hoc point-forecast method that estimates the demand $D_{n,t+1}$ in period $t+1$ by means of a stable linear filter of the seller’s own demand history,
$$\widehat m_{nt}=\sum_{k=0}^\infty\,\widehat{\psi}_{nk}\,D_{n,t-k}\equiv \widehat{\psi}_n(\mathcal{B})\,D_{nt},$$
where $\mathcal{B}$ denotes the backshift operator and $\widehat{\psi}_n(z)$ satisfies $\widehat{\psi}_n(1)=1$ so that $\e[\widehat m_{nt}]=\e [D_{nt}]=\mu_n$.\smallskip 

For example, if seller $n$ forecasts using {\em simple exponential smoothing}
(SES) with a smoothing parameter $\lambda\in(0,1]$, as in  \cite{ChenRyanSimchiLevi2000ESBullwhip} (see also
\citealp{Gardner1990ForecastPerfInventoryControl,SnyderKoehlerHyndmanOrd2004MeansVariancesLTD,EppenMartin1988SafetyStockStochasticLeadTime}), 
$$\widehat{\psi}^{\ti{SES}}_n(z)=\lambda\,\sum_{k=0}^\infty (1-\lambda)^k\, z^k={\lambda \over 1-(1-\lambda)\,z}.$$

In the steady state, seller $n$'s forecast error can be written as
\[
e_{n,t+1}:=D_{n,t+1}-\widehat m_{nt}
=
D_{n,t+1}-\widehat{\psi}_n(\mathcal{B})\,D_{nt}
=
\big(1-\widehat{\psi}_n(\mathcal{B})\,\mathcal{B}\big)\,D_{n,t+1}.
\]
We measure perceived uncertainty by the stationary root
MSFE,
\[
\tilde\sigma_n^\pi(\lambda):=\Big(\e[e_{n,t+1}^2]\Big)^{1/2},
\]
which replaces $\sigma_n^\pi$ in the seller’s base-stock calculation when forecasting is performed via the linear-filter $\widehat{\psi}_n(z)$.\medskip

As in \eqref{eq:psi_n}, write $\psi_n(z)=\frac{1}{N}\psi(z)\C{T}_n(z)$ for a transfer function $\C{T}_n(z)$
with an inner--outer factorization $\C{T}_n(z)=\C{O}_n(z)\C{I}_n(z)$. It follows that
\begin{equation}\label{eq:SES-MSFE}
(\tilde\sigma_n^\pi(\lambda))^2
=
\frac{1}{2\pi N}\int_{-\pi}^{\pi}
\left|1-\widehat{\psi}_n(e^{-i\theta})\,e^{-i\theta}\right|^2
\left|\psi(e^{-i\theta})\right|^2\,\left|\C{O}_n(e^{-i\theta})\right|^2\,\D\theta
\;\ge\;
\frac{1}{N}\abs{\psi(0)}^2\abs{\C{O}_n(0)}^2
=
(\sigma_n^\pi)^2,
\end{equation}
where the inequality follows from Parseval/Cauchy--Schwarz in $\mathbb{H}^2$\footnote{For
$f(z)=\sum_{k\ge0}f_k z^k\in\mathbb{H}^2$,
\[
\frac{1}{2\pi}\int_{-\pi}^{\pi}\abs{f(e^{-i\theta})}^2\,\D\theta
=
\sum_{k=0}^{\infty}\abs{f_k}^2
\;\ge\;
\abs{f_0}^2
=
\abs{f(0)}^2.
\]
}.
Equation \eqref{eq:SES-MSFE} formalizes that (for a fixed allocation policy) suboptimal forecasts weakly
inflate a seller’s one-step-ahead MSFE, and therefore inflate the safety stock required to maintain a given
service level.\medskip


When sellers forecast optimally, their demand uncertainty is driven solely by the baseline, unpredictable ``noise'' in their demand stream ($\abs{\psi(0)}\,\abs{\C{O}_n(0)}$). They can perfectly filter out all other predictable patterns. In contrast, when sellers rely on simpler, ad-hoc forecasting rules, they cannot perfectly filter the signal. As a result, their perceived uncertainty $\tilde\sigma_n^\pi(\lambda)$ is inflated by the \emph{entire pattern} of the demand stream they receive. This means that if sellers use simple heuristics, the platform cannot simply tune a single, overall level of volatility; it must also carefully consider the specific shape of the fluctuations it routes to sellers (for example, whether the allocated demand bounces up and down rapidly day-by-day, or drifts in slow, extended waves).\medskip

To illustrate this point, let us return to the i.i.d.\ demand example in \cref{sec:IllustrativeExample} and assume seller $n$ uses SES to forecast demand with smoothing parameter $\lambda_n \in [0,1)$. Suppose the platform chooses
its allocation as if sellers were using optimal forecasts, implementing the MA(1) perturbation
\[
\psi_n(z)=\frac{\abs{\psi(0)}}{N}\Big(1+(-1)^n\alpha^\star z\Big),
\qquad
\alpha^\star=\frac{\sigma^\star}{\sigma_{\LB}}.
\]
Then the induced SES MSFE admits the closed form (\ref{Append_Der} for a derivation), 
\begin{equation}\label{eq:MSFE_SES}
(\tilde\sigma_n^\pi(\lambda_n))^2 = \frac{\abs{\psi(0)}^2}{N^2}\left[ 1+\abs{(-1)^n\alpha^\star-\lambda_n}^2+\frac{\lambda_n}{2-\lambda_n}\abs{1-\lambda_n+(-1)^n\alpha^\star}^2 \right]. 
\end{equation}
Furthermore, minimizing over $\lambda_n\in[0,1]$ yields
$\lambda_n^\star=0$. Consequently,
\begin{equation}\label{eq:opt_MSFE-SES}
(\tilde\sigma_n^\pi(\lambda_n^\star))^2=\frac{\abs{\psi(0)}^2}{N^2}+(\sigma^\star)^2
=(\sigma_{\LB})^2+(\sigma^\star)^2,
\end{equation}
so restricting sellers to SES inflates the MSFE relative to the optimal-forecast benchmark $(\sigma^\star)^2$
by the additive term $(\sigma_{\LB})^2$. Interestingly, the corner solution $\lambda_n^\star=0$ corresponds to an \emph{infinitely sluggish} SES rule: the update
$\widehat m_{nt}=\lambda D_{nt}+(1-\lambda)\widehat m_{n,t-1}$ stops reacting to new observations, so the forecast
$\widehat m_{nt}$ becomes essentially constant. In effect, the seller behaves as if demand were
unpredictable from its own history (as if i.i.d.\ around a fixed mean), so the one-step-ahead error is
realized demand relative to that baseline.\smallskip

{Let us apply these results to the numerical instance in \cref{sec:IllustrativeExample} with $N=10$ and $\sigma_{\LB}=0.5$. If the platform keeps the benchmark design
$\sigma^\star\approx 8.87$, then, according to \eqref{eq:opt_MSFE-SES}, sellers perceive $\tilde{\sigma}\approx 8.88$.
As a result, the set of
\textup{FBP} adopters shrinks from $\{1, 2, 3, 4, 5, 6, 7\}$ to $\{2, 3, 4, 5, 6, 7\}$ (seller~$1$ drops out).
Total \textup{FBP}
safety stock falls from $\Gamma^{\ti{FBP}}(\sigma^\star)\approx 52.73$ to
$\Gamma^{\ti{FBP}}_{\ti{SES}}(\sigma^\star)\approx 44.35$ (a $15.9\%$ reduction), and the platform payoff drops
from $V(\sigma^\star)\approx 372.45$ to $V_{\ti{SES}}(\sigma^\star)\approx 349.70$ (a $6.1\%$ reduction).}\medskip

To summarize, when sellers rely on low-touch forecasting heuristics, allocation policy affects not only \emph{realized}
demand risk but also \emph{perceived} risk, and the latter feeds directly into both safety-stock levels and
platform adoption. The key implication is that ``turning up volatility'' to stimulate inventory intensity can
backfire by pushing marginal sellers out of platform fulfillment. Platforms should therefore (i) anticipate
the forecasting rules that sellers actually use when optimizing allocation, and (ii) design policies that are
robust to forecast suboptimality---either by moderating induced volatility near adoption breakpoints or by
reducing the perception gap via forecasting support, defaults, or information that makes demand more
predictable from the sellers’ perspective.

\section{Conclusion}\label{sec:conclusions}

Digital marketplaces increasingly act as \emph{market makers}: beyond matching, they algorithmically route order flow among competing sellers and, in many categories, provide managed fulfillment and inventory-carrying services. This paper studies how a platform can use \emph{intertemporal} demand allocation to influence sellers’ inventory decisions when sellers replenish via base-stock policies and choose between fulfill-by-merchant (FBM) and fulfill-by-platform (FBP). Motivated by accountability and fairness concerns, we focus on \emph{neutral} allocation mechanisms that treat sellers symmetrically in expectation by equalizing both long-run mean demand shares and one-step-ahead forecast uncertainty (root MSFE).\smallskip

Our main results show that neutrality still leaves substantial scope for operational design. Uniform splitting attains a sharp lower bound on sellers’ root MSFE, and any higher level can be implemented via simple low-order moving-average perturbations of the uniform split---MA(1) when $N$ is even, MA(2) when $N$ is odd---without changing mean demand shares. 
This characterization yields a tractable platform problem: under neutrality, the platform effectively chooses a scalar uncertainty level that determines (i) which sellers adopt FBP (the extensive margin) and (ii) the safety stock carried by adopters (the intensive margin). A distinctive structural implication is that any strict improvement beyond the uniform benchmark requires \emph{non-invertibility} from the sellers’ perspective, highlighting the platform’s informational advantage.\smallskip

Managerially, the model clarifies a fundamental extensive--intensive trade-off for platforms that monetize both fulfillment activity and on-platform storage. Increasing forecast uncertainty raises safety stocks (and therefore on-platform inventory level) among remaining FBP adopters, but it can also rotate marginal sellers toward FBM by making FBP more costly. Because adoption changes only at seller-specific thresholds while induced safety stock scales linearly between thresholds, the platform’s payoff is piecewise linear in the induced root MSFE, making optimal neutral policies transparent and implementable. We also provide a demand-allocation interpretation of our transfer-function allocations, and we show that
when sellers forecast using common heuristics (e.g., simple exponential smoothing), the platform’s design lever shifts from controlling a scalar MSFE to shaping the \emph{frequency content} of allocated demand.\smallskip


Several directions for future research would sharpen these conclusions and address natural limitations of our baseline model. \emph{First}, neutrality can be defined in richer ways. Beyond equal mean shares and equal one-step-ahead MSFE, one could require neutrality over longer horizons (multi-step forecast errors), preserve the entire second-order structure (e.g., identical autocovariances or spectral densities), or impose exposure/share floors motivated by fairness regulation; characterizing the implementable set and the platform’s optimal design under such alternative constraints is an important next step. \emph{Second}, we restrict attention to stationary demand and stationary allocation rules. Extending the analysis to nonstationary demand (seasonality, trends, regime shifts) and time-varying allocations would connect the mechanism to realistic planning environments. 
\emph{Third}, incorporating lost sales and fulfillment-capacity congestion would link allocation design more directly to service levels and operational bottlenecks and would endogenize additional participation and feasibility constraints. \emph{Finally}, integrating strategic behavior, including pricing, learning, and platform competition, would broaden the framework from inventory shaping to a more comprehensive theory of marketplace operations under governance constraints.\smallskip

Together, these extensions suggest that intertemporal demand allocation is a general and operationally meaningful platform lever: even under neutrality constraints, dynamic order routing can act as an information and operations design instrument that shapes forecastability, inventory, and fulfillment adoption in two-sided marketplaces.

{\footnotesize
\bibliographystyle{ormsv080}
\bibliography{References}
}

\newpage
\begin{appendices}
 \renewcommand{\thesection}{Appendix \Alph{section}}

\renewcommand{\thefootnote}{\fnsymbol{footnote}}
\setcounter{footnote}{1}
\setcounter{equation}{0}
\renewcommand{\theequation}{A\arabic{equation}}
\vspace{0.3cm}

\section{Proofs}\label{Append:Proofs}

{\sc Proof of \cref{prop:lowerbound}:} {
Let $\psi(z)$ and $\psi_n(z)$ denote the $z$-transforms of the market demand process $\{D_t\}$ and seller $n$'s demand process $\{D_{nt}\}$, respectively. Because the policy $\pi$ is admissible, we have
\[
\psi(z)=\sum_{n \in [N]} \psi_n(z), \qquad z \in \mathbb{D}.
\]

For each $n \in [N]$, let $\psi_n(z)=\C{O}_n(z)\,\C{I}_n(z)$ denote the Nevanlinna inner--outer factorization of $\psi_n(z)$. It follows that $\sigma^{\pi}_n=\abs{\C{O}_n(0)}$, and hence
\[
\sum_{n \in [N]} \sigma^{\pi}_n \;=\; \sum_{n \in [N]} \abs{\C{O}_n(0)}.
\]

Also, from the admissibility of $\pi$,
\[
\psi(0)=\sum_{n \in [N]} \psi_n(0)=\sum_{n \in [N]} \C{O}_n(0)\,\C{I}_n(0).
\]
As a result,
\[
\abs{\psi(0)}
=
\Big|\sum_{n \in [N]} \C{O}_n(0)\,\C{I}_n(0)\Big|
\leq
\sum_{n \in [N]} \abs{\C{O}_n(0)}\,\abs{\C{I}_n(0)}
\leq
\sum_{n \in [N]} \abs{\C{O}_n(0)}
=
\sum_{n \in [N]} \sigma^{\pi}_n,
\]
where the second inequality follows from the fact that each $\C{I}_n(z)$ is inner. Indeed, since $\C{I}_n$ is analytic on the unit disc $\mathbb{D}=\{z \in \mathbb{C}\colon |z|<1\}$, the maximum modulus principle implies
\[
\abs{\C{I}_n(0)}
\leq
\max_{z \in \mathbb{D}} \abs{\C{I}_n(z)}
=
\max_{z \in \mathbb{T}} \abs{\C{I}_n(z)}
=1,
\]
and the last equality holds because $\C{I}_n$ is unimodular on the unit circle $\mathbb{T}=\{z \in \mathbb{C}\colon |z|=1\}$. We therefore have that $
\sum_{n \in [N]} \sigma^{\pi}_n \;\ge\; \abs{\psi(0)}$. 
Finally, if $\pi$ is a neutral demand allocation policy then $\sigma^{\pi}_n=\sigma$, independent of $n$, and the previous inequality implies  $\sigma^{\pi}_n \geq \frac{\abs{\psi(0)}}{N}$. 
\qed
}\medskip

{\sc Proof of \cref{cor:uniform}:} 
Since the market demand in \eqref{def:Dt} is invertible by construction, it follows that $\psi(z)$ is outer. As a result, under the uniform allocation policy $\psi_n(z)=\psi(z)/N$, seller $n$'s allocated demand filter is also outer. Hence, by \cref{lem:msfe_outer_main},
\[
\sigma_n^{\pi}=\abs{\psi_n(0)}=\frac{\abs{\psi(0)}}{N}=\sigma_{\LB}.
\]
\hfill \qed
\medskip

{\sc Proof of \cref{prop:cost}:}
In this context, with a slight abuse of notation, we let $\psi \in \ell^2$ and $\psi_n \in \ell^2$ denote the coefficient sequences of the $\mathrm{MA}(\infty)$ representations of the aggregate market demand and seller $n$'s allocated demand, respectively. By definition, the unconditional variance of a weakly stationary $\mathrm{MA}(\infty)$ process driven by standard white noise ($\sigma_\eps=1$) is the squared $\ell^2$-norm of this coefficient sequence. We are given the variance constraints for all $n \in [N]$:
\begin{equation*}
    \norm{\psi_{n}}^2 = \frac{1}{N^2}\norm{\psi}^2.
\end{equation*}
Admissibility (\cref{def:Admissible}) requires that the allocation perfectly clears the market:
\begin{equation*}
    \sum_{n=1}^N \psi_{n} = \psi.
\end{equation*}
We expand the squared norm of the aggregate demand using the inner product space properties of $\ell^2$:
\begin{align*}
    \norm{\psi}^2 &= \norm{\sum_{n=1}^N \psi_{n}}^2 \nonumber \\
    &= \sum_{n=1}^N \norm{\psi_{n}}^2 + \sum_{i \neq j} \langle \psi_{i}, \psi_{j} \rangle.
\end{align*}
Substituting the given variance constraints into the expansion yields:
\begin{equation*}
    \norm{\psi}^2 = N \left( \frac{1}{N^2}\norm{\psi}^2 \right) + \sum_{i \neq j} \langle \psi_{i}, \psi_{j} \rangle = \frac{1}{N}\norm{\psi}^2 + \sum_{i \neq j} \langle \psi_{i}, \psi_{j} \rangle.
\end{equation*}
Rearranging this equation to isolate the sum of the inner products gives:
\begin{equation}
    \sum_{i \neq j} \langle \psi_{i}, \psi_{j} \rangle = \frac{N-1}{N}\norm{\psi}^2. \label{eq:inner_product_sum}
\end{equation}
We then apply the Cauchy-Schwarz inequality, which states that $\langle x, y \rangle \le \norm{x}\norm{y}$. Applying this to each pair of our allocation sequences:
\begin{equation*}
    \langle \psi_{i}, \psi_{j} \rangle \le \norm{\psi_{i}} \norm{\psi_{j}} = \left( \frac{1}{N}\norm{\psi} \right) \left( \frac{1}{N}\norm{\psi} \right) = \frac{1}{N^2}\norm{\psi}^2.
\end{equation*}
Because there are exactly $N(N-1)$ terms in the summation over $i \neq j$, the maximum for the sum of the inner products is:
\begin{equation*}
    \sum_{i \neq j} \langle \psi_{i}, \psi_{j} \rangle \le N(N-1) \left( \frac{1}{N^2}\norm{\psi}^2 \right) = \frac{N-1}{N}\norm{\psi}^2.
\end{equation*}
Comparing this upper bound to our derived exact value in \eqref{eq:inner_product_sum}, we see that the Cauchy-Schwarz inequality must hold with strict equality for every pair $(i, j)$. 

Therefore, we have $\psi_{1} = \psi_{2} = \dots = \psi_{N}$. Substituting this back into the admissibility constraint, we obtain $\psi_{n} = \frac{1}{N}\psi$ for all $n \in [N]$.
\hfill  \qed
\medskip

{\sc Proof of \cref{prop:weaklyeven}.}  Define $\alpha_n=(-1)^n\, \ba$.
First, since $N$ is even, we have $\sum_n\alpha_n=0$, which implies $\sum_n\psi_n(z)=\psi(z)$; hence $\pi$ is an admissible demand allocation. Moreover, since $\sigma \geq \sigma_{\LB}$, we have $\abs{\alpha_n}\ge 1$ for all $n$. Thus, as in \cref{exm:iid}, transfer function $\C{T}_n(z)=1+\alpha_n z$ admits the inner–outer factorization
\[
\C{T}_n(z)=\C{O}_n(z)\,\C{I}_n(z),\quad \text{with~~  }\C{O}_n(z)=\alpha_n + z\quad \text{and}\quad \C{I}_n(z)=\frac{1+\alpha_n z}{\alpha_n + z},
\]

where $\C{O}_n(z)\in\mathbb{O}$ and $\C{I}_n(z)\in\mathbb{I}$ is a Blaschke factor. It follows from \cref{lem:msfe_outer_main} that
\[
\sigma^{\pi}_n=\frac{1}{N}\,\abs{\psi(0)}\,\abs{\C{O}_n(0)}=\frac{\abs{\psi(0)}\,\abs{\alpha_n}}{N}=\sigma.
\]
Thus, $\pi$ is neutral and implements the target root MSFE $\sigma$ as required. \hfill \qed

\medskip

{\sc Proof of \cref{prop:weaklyodd}.} To show that $\pi$ is admissible, note that since $N$ is odd we have
\begin{align*}
\sum_{n=1}^3 \C{T}_n(z)
&= 1+\ba z+\ba\,z^2
+1-\ba z^2
+1-\ba\,z
= 3,\\
\sum_{n=4}^N \C{T}_n(z)
&= \sum_{n=4}^N \bigl(1+(-1)^n \ba\, z\bigr)
= N-3.
\end{align*}
It follows that $\sum_{n=1}^N \psi_n(z)=\psi(z)$.

\medskip
From the proof of \cref{prop:weaklyeven}, we know that the root MSFE for each seller $n\ge 3$ is $\sigma^\pi_n=\sigma$. For seller $1$, the transfer $\C{T}_1(z)=1+\ba z+\ba\,z^2$ admits the inner–outer factorization
\[
\C{T}_1(z)=\C{O}_1(z)\,\C{I}_1(z), \quad \mbox{where}\quad 
\C{O}_1(z)=\ba+\ba\,z+z^2, \qquad \C{I}_1(z)={1+\ba z+\ba\,z^2 \over \ba+\ba\,z+z^2}. \]

The fact that $\ba > 1$ implies that the two roots of the polynomial $1+\ba z+\ba\,z^2$ lie in $\mathbb{D}$, and therefore $\mathcal{I}_1(z)$ is a Blaschke product and hence an inner factor (see \citealp{GarciaMashreghiRoss2018}). It follows that
\[
\sigma^\pi_1=\frac{1}{N}\,\abs{\psi(0)}\,\abs{\C{O}_1(0)}=\frac{\ba\,\abs{\psi(0)}}{N}=\sigma.
\]
Similarly, for seller 2, we have that 
\[
\C{T}_2(z)=1-\ba\,z^2
=\C{O}_2(z)\,\C{I}_2(z),\quad \mbox{where}\quad
 \C{O}_2(z)=\ba-z^2, \quad \C{I}_2(z)={1-\ba\,z^2 \over \ba-z^2} \]
where again the fact that $\ba > 1$ implies that $\C{I}_2(z)$ is a Blaschke product and 
\[
\sigma^\pi_2=\frac{1}{N}\,\abs{\psi(0)}\,\abs{\C{O}_2(0)}=\frac{\ba\,\abs{\psi(0)}}{N}=\sigma.
\]
In conclusion, for all sellers $n \in [N]$, we have $\sigma^\pi_n = \sigma$ under policy~$\pi$, thereby proving that $\pi$ is neutral and implements $\sigma^{\pi}=\sigma$. \qed
\medskip

{\sc Proof of \cref{prop:noninv}:} Let $\psi(z)$ and $\psi_n(z)$ denote the $z$-transforms of the market demand $D_t$ and seller $n$'s demand $D_{nt}$, respectively.
Let
\[
\psi_n(z)=\frac{1}{N}\,\psi(z)\,\mathcal{T}_n(z),\qquad \text{with }\; \mathcal{T}_n(0)=1,\; \mbox{for all }n \in [N]\quad\mbox{and} \quad\sum_{n=1}^N \mathcal{T}_n(z)=N\ \text{for } z\in\mathbb{D},
\]

Suppose, by contradiction, that all the demand processes $D_{nt}$ are invertible. Thus, by \cref{lem:msfe_outer_main} we have $\sigma^{\pi}_n = \abs{\psi_{n}(0)}=\abs{\psi(0)\,\mathcal{T}_n(0)}/N=\abs{\psi(0)}/N=\sigma_{\LB}$, which violates the assumption $\sigma^{\pi}>\sigma_{\LB}$. \qed

\medskip

{\sc Proof of \cref{cor:timedomain}:} Write the (mean-corrected) market demand in its Wold form as
\[
D_t-\mu=\psi(\mathcal{B})\,\varepsilon_t,
\]
where $\{\varepsilon_t\}$ is white noise. Under an allocation with seller-$n$ transfer function
\[
\psi_n(z)=\frac{1}{N}\,\psi(z)\,\mathcal{T}_n(z),
\]
the corresponding time-domain representation is
\[
D_{nt}-\frac{\mu}{N}=\psi_n(\mathcal{B})\,\varepsilon_t
=\frac{1}{N}\,\psi(\mathcal{B})\,\mathcal{T}_n(\mathcal{B})\,\varepsilon_t.
\]
Since $\psi(\mathcal{B})$ and $\mathcal{T}_n(\mathcal{B})$ are both functions of the backshift operator, they
commute, hence
\[
D_{nt}-\frac{\mu}{N}
=\frac{1}{N}\,\mathcal{T}_n(\mathcal{B})\,\psi(\mathcal{B})\,\varepsilon_t
=\frac{1}{N}\,\mathcal{T}_n(\mathcal{B})\,(D_t-\mu).
\]
Therefore,
\[
D_{nt}
=\frac{\mu}{N}+\frac{1}{N}\,\mathcal{T}_n(\mathcal{B})\,(D_t-\mu).
\]
Now plug in the explicit polynomial forms of $\mathcal{T}_n$ from
Propositions~\ref{prop:weaklyeven}--\ref{prop:weaklyodd} and expand.

\smallskip
\noindent\textbf{(a) $N$ even.}
From Proposition~\ref{prop:weaklyeven},
\[
\mathcal{T}_n(z)=1+(-1)^n\,\frac{\sigma}{\sigma_{\LB}}\,z.
\]
Hence
\[
\mathcal{T}_n(\mathcal{B})(D_t-\mu)
=(D_t-\mu)+(-1)^n\,\frac{\sigma}{\sigma_{\LB}}\,(D_{t-1}-\mu),
\]
and substituting into $D_{nt}=\frac{\mu}{N}+\frac{1}{N}\mathcal{T}_n(\mathcal{B})(D_t-\mu)$ gives
\[
D_{nt}=\frac{D_t}{N}+(-1)^n\,\frac{\sigma}{N\,\sigma_{\LB}}\,(D_{t-1}-\mu).
\]

\smallskip
\noindent\textbf{(b) $N$ odd.}
From Proposition~\ref{prop:weaklyodd}, the transfer functions are (with $s:=\sigma/\sigma_{\LB}$)
\[
\mathcal{T}_1(z)=1+s(z+z^2),\qquad
\mathcal{T}_2(z)=1-s z^2,\qquad
\mathcal{T}_n(z)=1+(-1)^n s z\ \ (n\ge 3).
\]
Applying these polynomials to $(D_t-\mu)$ yields
\[
\mathcal{T}_1(\mathcal{B})(D_t-\mu)=(D_t-\mu)+s\big[(D_{t-1}-\mu)+(D_{t-2}-\mu)\big],
\]
\[
\mathcal{T}_2(\mathcal{B})(D_t-\mu)=(D_t-\mu)-s(D_{t-2}-\mu),
\]
\[
\mathcal{T}_n(\mathcal{B})(D_t-\mu)=(D_t-\mu)+(-1)^n s (D_{t-1}-\mu),\qquad n\ge 3.
\]
Substituting into $D_{nt}=\frac{\mu}{N}+\frac{1}{N}\mathcal{T}_n(\mathcal{B})(D_t-\mu)$ and using
$\frac{\mu}{N}+\frac{1}{N}(D_t-\mu)=\frac{D_t}{N}$ gives exactly the stated time-domain formulas:
\[
D_{1t}=\frac{D_t}{N}+\frac{\sigma}{N\sigma_{\LB}}\big[(D_{t-1}-\mu)+(D_{t-2}-\mu)\big],\qquad
D_{2t}=\frac{D_t}{N}-\frac{\sigma}{N\sigma_{\LB}}(D_{t-2}-\mu),
\]
\[
D_{nt}=\frac{D_t}{N}+(-1)^n\,\frac{\sigma}{N\sigma_{\LB}}(D_{t-1}-\mu),\qquad n\ge 3.
\]
\hfill  \qed
\medskip

{\sc Proof of \cref{lem:offsettracking_discrepancy}:}  Summing \eqref{eq:benchmark_decomp} over $n$ yields $D_t=\sum_{n=1}^N D_{nt}=D_t+\sum_{n=1}^N b_{n,t}$, hence
\begin{equation}\label{eq:sum_b_zero}
\sum_{n=1}^N b_{n,t}=0.
\end{equation}
Therefore $\sum_{n=1}^N x_{n,t}=D_t$, so $\{x_{n,t}\}$ is a fractional allocation of the $D_t$ orders.

Let $A_n(k)$ denote the counter for seller $n$ after the first $k$ orders of period $t$ have been assigned
(by definition $A_n(0)=0$ and $\widehat D_{nt}=A_n(D_t)$).  Define the discrepancy process
\[
\delta_n(k)\;:=\;A_n(k)-x_{n,t},\qquad k=0,1,\dots,D_t.
\]
Because $x_{n,t}\ge 0$, we have $\delta_n(0)=-x_{n,t}\le 0$ for all $n$.

\smallskip
At any arrival $k$, Algorithm~\ref{alg:offsettracking} chooses $n_k\in\arg\min_{n}\{A_n(k-1)-b_{n,t}\}$.
Since $x_{n,t}=D_t/N+b_{n,t}$, we have
\[
A_n(k-1)-b_{n,t}
=
\bigl(A_n(k-1)-x_{n,t}\bigr)+\frac{D_t}{N}
=
\delta_n(k-1)+\frac{D_t}{N},
\]
so the platform equivalently chooses $n_k\in\arg\min_n \delta_n(k-1)$. After assigning order $k$ to $n_k$,
\[
\delta_{n_k}(k)=\delta_{n_k}(k-1)+1,
\qquad
\delta_n(k)=\delta_n(k-1)\ \text{ for } n\neq n_k.
\]
In particular, each $\delta_n(k)$ is \emph{nondecreasing} in $k$.

\smallskip
Let $\delta_n:=\delta_n(D_t)$ be the terminal discrepancies and pick an index $i\in\arg\max_n \delta_n$.
If $\delta_i\le 0$, then $\delta_n\le 0$ for all $n$. Since
\[
\sum_{n=1}^N \delta_n
=
\sum_{n=1}^N A_n(D_t)-\sum_{n=1}^N x_{n,t}
=
D_t-D_t
=
0,
\]
it follows that $\delta_n=0$ for all $n$ and the claim holds trivially.

Suppose instead that $\delta_i>0$. Since $\delta_i(0)\le 0$ and $\delta_i(D_t)>0$, seller $i$ must have been
chosen at least once. Let $\tau$ be the (arrival) index of the \emph{last} order assigned to seller $i$.
Immediately before that assignment, seller $i$ is among the minimizers of $\delta(\tau-1)$, hence
\[
\delta_i(\tau-1)\le \delta_j(\tau-1)\qquad\text{for all } j\in[N].
\]
Because $\tau$ is the last time $i$ is chosen, we have $\delta_i(\tau-1)=\delta_i-1$. Moreover, since each
$\delta_j(k)$ is nondecreasing in $k$, $\delta_j(\tau-1)\le \delta_j$ for all $j$. Therefore,
\[
\delta_i-1\le \delta_j \qquad\text{for all } j\in[N],
\]
or equivalently $\delta_i\le \delta_j+1$ for all $j$. Taking the minimum over $j$ yields
\[
\max_{n}\delta_n=\delta_i \;\le\; \min_{n}\delta_n +1.
\]
Hence $\max_n\delta_n-\min_n\delta_n\le 1$.

\smallskip
In conclusion, since $\sum_n \delta_n=0$, we have $\min_n\delta_n\le 0\le \max_n\delta_n$. Combined with
$\max_n\delta_n\le \min_n\delta_n+1$, this implies $\min_n\delta_n\ge -1$ and $\max_n\delta_n\le 1$.
Thus $|\delta_n|\le 1$ for every $n$, i.e.,
\[
\bigl|A_n(D_t)-x_{n,t}\bigr|\le 1
\qquad\text{for all } n\in[N].
\]
Recalling that $\widehat D_{nt}=A_n(D_t)$ and $x_{n,t}=D_t/N+b_{n,t}$ completes the proof.
\hfill  \qed
\medskip

\section{Hardy-space foundations: $\mathbb{H}^2$}
\label{app:hardy_tools}

This appendix formalizes the $z$-domain tools used in \cref{subsec:hardy_tools}. Our goal is to make precise how demand-allocation filters affect seller-level forecastability, why inner factors preserve second moments while altering invertibility, and why the one-step-ahead root MSFE is determined by the outer component of the transfer function. Detailed background on Hardy spaces can be found in \cite{Rudin1987RealComplex,MartinezAvendanoRosenthal2007,Nikolski2019HardySpaces}.\smallskip

Transforming a stationary demand process into the $z$-domain is useful because time-domain convolutions become multiplications. In our setting, this means that the platform's allocation rule acts as an algebraic filter on market demand. The resulting representation separates the magnitude and phase components of demand, which in turn allows us to distinguish the part of seller-level demand that governs forecastability from the part that merely distorts phase while leaving second moments unchanged.\smallskip

Throughout, let $\mathbb{D}=\{z\in\mathbb{C}:|z|<1\}$ denote the open unit disk, let $\mathbb{T}=\{e^{-i\theta}:\theta\in[-\pi,\pi)\}$ denote its boundary, and let $\mathbb{H}^2$ denote the Hardy--Hilbert space of analytic functions on $\mathbb{D}$ with square-summable Taylor coefficients (equivalently, square-integrable boundary values on $\mathbb{T}$).\smallskip

A convenient starting point is the Wold decomposition theorem. It implies that every weakly stationary purely nondeterministic process admits a unique one-sided moving-average representation in terms of its innovation sequence. In particular, if $\{X_t\}$ is such a process, then
\[
X_t=\bar X+\sum_{k=0}^\infty x_k\,\eps_{t-k},
\qquad \{x_k\}\in\ell^2,
\]
where $\{\eps_t\}$ is a white-noise innovation sequence orthogonal to the closed linear span of the past $\{X_{t-1},X_{t-2},\dots\}$. The sequence $\{\eps_t\}$ coincides with the one-step-ahead forecast errors of the process, and the coefficients $\{x_k\}$ are uniquely determined. This representation provides the time-series foundation for the $z$-transform used below.\smallskip

\begin{dfn}[$z$-transform representation]
Let $\{X_t\}$ be a weakly stationary, purely nondeterministic Gaussian process with Wold MA$(\infty)$ representation
\[
X_t=\bar X+\sum_{k=0}^\infty x_k\,\eps_{t-k},
\qquad \{x_k\}\in\ell^2,
\]
where $\{\eps_t\}$ is a white-noise innovation sequence, normalized to have unit variance unless stated otherwise. The $z$-transform of $\{X_t\}$ is the function $\mathcal{X}\in\mathbb{H}^2$ defined by
\[
\mathcal{X}(z)=\sum_{k=0}^\infty x_k z^k,
\qquad z\in\mathbb{D}.
\]
\end{dfn}\smallskip

Under an admissible allocation policy $\pi=\bigl(\mu_n,\C{T}_n(z):n\in[N]\bigr)$, seller $n$'s demand process satisfies
\[
D_{nt}=\mu_n+\C{T}_n(\C{B})(D_t-\mu).
\]
Since market demand admits the representation $D_t-\mu=\psi(\C{B})\eps_t$, seller $n$'s allocated demand can equivalently be written as
\[
D_{nt}=\mu_n+\psi_n(\C{B})\eps_t,
\qquad
\psi_n(z):=\C{T}_n(z)\psi(z).
\]
Hence seller $n$'s forecastability is governed by the transfer function $\psi_n(z)$. The feasibility condition $D_t=\sum_{n\in[N]}D_{nt}$ implies $\sum_{n\in[N]}\psi_n(z)=\psi(z)$.\smallskip

\paragraph{Invertibility.}
A stationary process is said to be \emph{invertible} if its driving innovation sequence can be recovered from the history of the observed process. In the notation above, the representation
\[
X_t=\bar X+\mathcal{X}(\C{B})\eps_t
\]
is invertible if there exists a square-summable sequence $\{\varpi_k\}\in\ell^2$ such that
\[
\eps_t=\sum_{k=0}^\infty \varpi_k\,(X_{t-k}-\bar X).
\]
Equivalently, the shocks can be recovered by applying the inverse filter $\mathcal{X}(\C{B})^{-1}$ to the centered process. Thus, invertibility means that past observations contain enough information to reconstruct the underlying shocks. As we explain below, this is exactly the property captured by the outer component of the transfer function.\smallskip

\begin{dfn}[Inner and outer functions]
A function $\mathcal{I}\in\mathbb{H}^2$ is \textbf{inner} if its boundary values are unimodular almost everywhere on $\mathbb{T}$, that is,
\[
|\mathcal{I}(e^{-i\theta})|=1
\qquad\text{for a.e.\ }\theta\in[-\pi,\pi).
\]

A function $\mathcal{O}\in\mathbb{H}^2$ is \textbf{outer} if it it has no zeros in the open unit disk, i.e., $\mathcal{O}(z)\neq 0$ for $|z|<1$\footnote{
Equivalently, $\mathcal{O}$ is determined by its boundary modulus through the Poisson integral:
\[
\mathcal{O}(z)=c\exp\!\left\{\frac{1}{2\pi}\int_{-\pi}^{\pi}
\frac{e^{-i\theta}+z}{e^{-i\theta}-z}\,
\log|\mathcal{O}(e^{-i\theta})|\,d\theta\right\},
\qquad z\in\mathbb{D},
\]
for some unimodular constant $c\in\mathbb{T}$, with $\log|\mathcal{O}(e^{-i\theta})|\in L^1([-\pi,\pi))$. In particular, outer functions are zero-free on $\mathbb{D}$.}. We write $\mathbb{I}$ and $\mathbb{O}$ for the classes of inner and outer functions in $\mathbb{H}^2$, respectively.
\end{dfn}\smallskip

The forecasting interpretation of these two classes is central to our analysis. Inner factors are \emph{all-pass} filters: they preserve the boundary magnitude of a transfer function and therefore leave its spectral density unchanged. Outer factors, by contrast, are the minimum-phase, invertible components of a transfer function. Accordingly, they encode the part of the process that is relevant for one-step-ahead forecasting.\smallskip

\begin{lem}[All-pass property of inner factors]\label{lem:allpass}
If $\mathcal{I}\in\mathbb{I}$ and $f\in\mathbb{H}^2$, then
\[
|(\mathcal{I}f)(e^{-i\theta})|^2=|f(e^{-i\theta})|^2
\qquad\text{for a.e.\ }\theta\in[-\pi,\pi).
\]
Consequently, multiplying a transfer function by an inner factor preserves the spectral density and hence the entire autocovariance structure of the associated stationary process.
\end{lem}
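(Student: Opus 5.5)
The proof reduces to a pointwise identity on the unit circle. Fix $\mathcal{I}\in\mathbb{I}$ and $f\in\mathbb{H}^2$. First I will check that the product $\mathcal{I}f$ lies in $\mathbb{H}^2$ and that its boundary function is the pointwise product of the boundary functions of $\mathcal{I}$ and $f$; the identity itself is then immediate.

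\textbf{Boundary values of the product.} By Fatou's theorem, $\mathcal{I}$ and $f$ have nontangential boundary limits almost everywhere on $\mathbb{T}$. On the set of full measure where both limits exist, the limit of the product equals the product of the limits:
\[
(\mathcal{I}f)(e^{-i\theta})=\mathcal{I}(e^{-i\theta})\,f(e^{-i\theta})\quad\text{a.e.}
\]

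\textbf{Membership in $\mathbb{H}^2$.} Since $|\mathcal{I}|=1$ a.e.\ on $\mathbb{T}$, $\mathcal{I}$ is bounded on $\mathbb{T}$. An inner function lying in $\mathbb{H}^2$ is the Poisson integral of its boundary values, so the maximum principle gives $|\mathcal{I}(z)|\le 1$ on $\mathbb{D}$; this is the same bound used in the proof of \cref{prop:lowerbound}. Hence $\mathcal{I}\in\mathbb{H}^\infty$, and for each $0<r<1$,
\[
\int|\mathcal{I}f(re^{-i\theta})|^2\,d\theta\le\int|f(re^{-i\theta})|^2\,d\theta .
\]
Taking the supremum over $r$ shows $\mathcal{I}f\in\mathbb{H}^2$ with $\|\mathcal{I}f\|\le\|f\|$. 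In particular its boundary function is the $L^2$ limit of the dilates $(\mathcal{I}f)(re^{-i\theta})$ and coincides with the a.e.\ nontangential limit found above.

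\textbf{The identity.} Taking moduli squared in the boundary relation and using $|\mathcal{I}(e^{-i\theta})|=1$ a.e.,
\[
|(\mathcal{I}f)(e^{-i\theta})|^2=|\mathcal{I}(e^{-i\theta})|^2\,|f(e^{-i\theta})|^2=|f(e^{-i\theta})|^2\quad\text{a.e.}
\]

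\textbf{Consequence for second moments.} For a process with transfer function $g\in\mathbb{H}^2$ and unit-variance white noise, the spectral density is $|g(e^{-i\theta})|^2/(2\pi)$. By Parseval, the autocovariance at lag $h$ is $\sum_k g_k\bar g_{k+h}=\frac{1}{2\pi}\int|g(e^{-i\theta})|^2e^{ih\theta}\,d\theta$, up to the sign convention of the exponent. Applying this with $g=\mathcal{I}f$ and with $g=f$, the two autocovariance sequences agree for every lag $h$ because the integrands coincide a.e.

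\textbf{Main obstacle.} The only delicate point is identifying the $\mathbb{H}^2$ boundary function of $\mathcal{I}f$ with the product of the separate boundary functions. I will handle it by combining the $L^2$ convergence of the dilates with a.e.\ radial convergence, which is standard Hardy space theory (e.g.\ \cite{Rudin1987RealComplex}) and which I will cite rather than reprove.
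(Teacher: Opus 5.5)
Your proof is correct and follows the same reasoning the paper relies on. The paper gives no formal proof of this lemma; it only remarks that $|\mathcal{I}|=1$ a.e.\ on $\mathbb{T}$ leaves the boundary modulus of $f$ unchanged, and hence its spectral density and autocovariances, which is exactly your core step. Your extra checks correctly supply the rigor the paper leaves implicit: $|\mathcal{I}(z)|\le 1$ on $\mathbb{D}$ via the Poisson representation, so $\mathcal{I}f\in\mathbb{H}^2$ with boundary function equal to the product of boundary functions, and the Parseval formula for autocovariances. One small point of wording: the bound $|\mathcal{I}(z)|\le 1$ comes from positivity of the Poisson kernel rather than from the maximum principle, but this does not affect the argument.
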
\smallskip

The next result is the basic structural theorem from Hardy-space theory that we use throughout.

\begin{lem}[Nevanlinna inner--outer factorization]\label{lem:inner-outer-app}
Every $\mathcal{X}\in\mathbb{H}^2$ admits a factorization
\[
\mathcal{X}(z)=\mathcal{O}(z)\,\mathcal{I}(z),
\]
where $\mathcal{O}\in\mathbb{O}$ is outer and $\mathcal{I}\in\mathbb{I}$ is inner. This factorization is unique up to multiplication by a unimodular constant.
\end{lem}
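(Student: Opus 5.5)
The statement to prove is the Nevanlinna inner--outer factorization in $\mathbb{H}^2$, \cref{lem:inner-outer-app}. I will prove existence by constructing the outer factor from the boundary modulus and then showing that the quotient is inner; uniqueness will come from a maximum-principle argument. The case $\mathcal{X}\equiv 0$ is degenerate and I set it aside.

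\emph{Step 1: the boundary values are log-integrable.} For $\mathcal{X}\in\mathbb{H}^2$ with $\mathcal{X}\not\equiv 0$, the radial limits $\mathcal{X}(e^{-i\theta})$ exist a.e. and lie in $L^2(\mathbb{T})$. The key fact is that $\log|\mathcal{X}(e^{-i\theta})|\in L^1$. I would prove it by first factoring out a power $z^m$ so that $\mathcal{X}(0)\neq 0$. Jensen's inequality on circles of radius $r<1$ gives $\log|\mathcal{X}(0)|\le \frac{1}{2\pi}\int\log|\mathcal{X}(re^{-i\theta})|\,\D\theta$. The positive part of the integrand is controlled by $\|\mathcal{X}\|_{\mathbb{H}^2}$, since $\log^+ x\le x^2$. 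Fatou's lemma as $r\uparrow 1$ then bounds $\int\log^-|\mathcal{X}|$ on the boundary.

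\emph{Step 2: construct the outer factor.} Define $\mathcal{O}$ by the Poisson--Herglotz formula in the footnote, with $\log|\mathcal{O}|:=\log|\mathcal{X}|$ on $\mathbb{T}$. By construction $\mathcal{O}$ is analytic and zero-free on $\mathbb{D}$, and $|\mathcal{O}|$ has boundary modulus $|\mathcal{X}|$ a.e. Moreover $|\mathcal{O}(z)|^2=\exp(P[2\log|\mathcal{X}|])\le P[|\mathcal{X}|^2](z)$ by Jensen's inequality for the Poisson probability measure. This shows $\mathcal{O}\in\mathbb{H}^2$, so $\mathcal{O}\in\mathbb{O}$.

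\emph{Step 3: the quotient is inner.} Set $\mathcal{I}:=\mathcal{X}/\mathcal{O}$, which is analytic on $\mathbb{D}$ because $\mathcal{O}$ has no zeros. Its boundary modulus is $1$ a.e. The main obstacle is showing $|\mathcal{I}|\le 1$ inside the disk, since boundary unimodularity alone does not imply boundedness. I would attack it first through subharmonicity: $\log|\mathcal{X}|$ is subharmonic and $\mathcal{X}\in\mathbb{H}^2$, so $\log|\mathcal{X}(z)|\le P[\log|\mathcal{X}|](z)=\log|\mathcal{O}(z)|$. This is the least-harmonic-majorant property of $\log|f|$ for Hardy functions, which I would justify by applying the sub-mean inequality on circles of radius $r$ and passing to the limit using $\mathbb{H}^2$ convergence of $\mathcal{X}_r\to\mathcal{X}$. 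Hence $|\mathcal{I}|\le 1$, so $\mathcal{I}\in\mathbb{H}^\infty\subset\mathbb{H}^2$ with unimodular boundary values, and therefore $\mathcal{I}\in\mathbb{I}$.

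\emph{Step 4: uniqueness.} Suppose $\mathcal{O}_1\mathcal{I}_1=\mathcal{O}_2\mathcal{I}_2$. Taking boundary moduli gives $|\mathcal{O}_1|=|\mathcal{O}_2|$ a.e. on $\mathbb{T}$. Outer functions are determined up to a unimodular constant by their boundary modulus via the Poisson representation, so $\mathcal{O}_1=c\,\mathcal{O}_2$ with $|c|=1$. Dividing through then gives $\mathcal{I}_1=\bar c\,\mathcal{I}_2$.

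For the uniqueness step, I would interpret outer functions in the Poisson-representation sense of the footnote rather than merely as zero-free functions. Zero-freeness alone does not determine the factor: singular inner functions such as $\exp\bigl(-\frac{1+z}{1-z}\bigr)$ are zero-free but are not outer.
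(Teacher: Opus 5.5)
The paper does not prove this lemma. It quotes it as the classical canonical factorization and defers to the Hardy-space texts cited at the start of the appendix. Your argument is a correct, self-contained proof, and it takes a shorter path than the two standard ones in that literature.

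The function-theoretic route divides out the zeros by a Blaschke product $B$. It writes $\log|\mathcal{X}/B|$ as the Poisson integral of a real measure and splits that measure into the absolutely continuous part $\log|\mathcal{X}|\,\D\theta$ and a nonpositive singular part, arriving at $\mathcal{X}=cBS\mathcal{O}$. The operator-theoretic route defines outer as cyclic for the shift and extracts $\mathcal{I}$ from Beurling's theorem.

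You avoid zeros, measures and invariant subspaces altogether. The harmonic-majorant inequality $\log|\mathcal{X}|\le P[\log|\mathcal{X}|]$ gives $|\mathcal{X}/\mathcal{O}|\le 1$ in one step, and that is all the paper needs from the lemma (the split itself and the value $|\mathcal{O}(0)|$). The longer routes buy extra structure. The first yields $\mathcal{I}=cBS$, as in the paper's remark on inner functions. The second identifies outer functions with cyclic vectors of the shift, i.e., with processes whose past spans the past of the shocks.

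One step should be written out. In Step 3, $\mathbb{H}^2$ convergence only controls $\log^+|\mathcal{X}_r|$ (via $|\log^+a-\log^+b|\le|a-b|$). The $\log^-$ part must go through Fatou's lemma, so only an inequality survives the limit. That is the direction you need, and equality really can fail: for $S(z)=\exp\bigl(-\frac{1+z}{1-z}\bigr)$, the circle means of $\log|S|$ equal $\log|S(0)|=-1$ for every $r<1$, while $\log|S|=0$ a.e.\ on $\mathbb{T}$.

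Your closing caveats are genuine corrections to the statement, not technicalities. The lemma fails for $\mathcal{X}\equiv 0$. Under the main-text definition of outer as merely zero-free, the uniqueness claim is false, since the same $S$ factors both as $S\cdot 1$ and as $1\cdot S$. The footnote's ``equivalently'' is therefore wrong in one direction. The defect would also propagate to the MSFE formula $\sigma=|\mathcal{O}(0)|$, which would return either $1$ or $e^{-1}$ for the unit-variance white-noise process $S(\mathcal{B})\eps_t$; the correct value is $1$. Adopting the Poisson-representation definition, as you do, is what makes both this lemma and its downstream use correct.
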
\smallskip

Because inner factors preserve second moments while outer factors govern invertibility, \cref{lem:inner-outer-app} gives a canonical decomposition of any stationary demand stream into a forecast-relevant component and a pure phase distortion.\smallskip

\begin{lem}[Root MSFE and the outer factor]\label{lem:msfe_outer_app}
Let
\[
X_t=\bar X+\mathcal{X}(\C{B})\eps_t
\]
be a weakly stationary, purely nondeterministic Gaussian process with $\mathcal{X}\in\mathbb{H}^2$, and let $\mathcal{X}=\mathcal{O}\mathcal{I}$ be its inner--outer factorization. Then the one-step-ahead root mean squared forecast error of $\{X_t\}$ is
\[
\sigma_X=|\mathcal{O}(0)|.
\]
Equivalently, only the outer component of the transfer function affects one-step-ahead forecastability.
\end{lem}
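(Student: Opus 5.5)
The plan is to identify the one-step-ahead forecast error with the projection residual in $\mathbb{H}^2$. The optimal linear predictor is the Gaussian conditional mean. So I would compute $\sigma_X^2$ as the squared distance from $X_{t+1}-\bar X$ to the closed linear span $\mathcal{H}_t$ of $\{X_s-\bar X\}_{s\le t}$.

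Working through the isometry $\eps_{t+1-k}\mapsto z^k$, the centered variable $X_{t+1}-\bar X$ corresponds to the function $\mathcal{X}$. The shifted observation $X_{t+1-j}-\bar X$ corresponds to $z^j\mathcal{X}$. Hence $\mathcal{H}_t$ corresponds to the closure of $z\,\mathcal{P}\,\mathcal{X}$, where $\mathcal{P}$ denotes polynomials, and
\[
\sigma_X^2=\inf_{p\ \text{polynomial}}\big\|\mathcal{X}-z\,p\,\mathcal{X}\big\|_{\mathbb{H}^2}^2=\inf_p\big\|(1-zp)\,\mathcal{O}\,\mathcal{I}\big\|^2=\inf_p\big\|(1-zp)\,\mathcal{O}\big\|^2 .
\]
The last equality uses \cref{lem:allpass}: multiplication by the inner factor $\mathcal{I}$ preserves boundary moduli and hence $\mathbb{H}^2$ norms. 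The problem is thereby reduced to the same extremal problem for the outer function $\mathcal{O}$.

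For the lower bound, note that any $g=(1-zp)\mathcal{O}$ satisfies $g(0)=\mathcal{O}(0)$. By Parseval, $\|g\|^2\ge |g(0)|^2=|\mathcal{O}(0)|^2$, exactly as in the footnote to \eqref{eq:SES-MSFE}.

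For the upper bound, invoke Beurling's theorem: for outer $\mathcal{O}$, the polynomial multiples $\mathcal{P}\mathcal{O}$ are dense in $\mathbb{H}^2$. In particular, $z\mathcal{P}\mathcal{O}$ is dense in $z\mathbb{H}^2=\{h:h(0)=0\}$. Choose polynomials $p_k$ with $zp_k\mathcal{O}\to \mathcal{O}-\mathcal{O}(0)$ in $\mathbb{H}^2$. Then $(1-zp_k)\mathcal{O}\to \mathcal{O}(0)$, a constant, whose norm is $|\mathcal{O}(0)|$. Combining the two bounds gives $\sigma_X=|\mathcal{O}(0)|$.

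The main obstacle is this density step, which is exactly where outerness enters. If Beurling's theorem is not taken as known, I would argue via the Wold decomposition instead. The process $Y_t=\mathcal{O}(\mathcal{B})\eps_t$ has the same spectral density as $X_t$, by \cref{lem:allpass}, so it has the same one-step prediction error, since that error depends only on second moments. Its innovation variance is then given by Szegő–Kolmogorov, $\exp\{\frac{1}{2\pi}\int\log|\mathcal{O}|^2\}$. By the Poisson representation of outer functions in the footnote, this equals $|\mathcal{O}(0)|^2$.

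A minor point is that the factorization is unique only up to a unimodular constant. This does not affect $|\mathcal{O}(0)|$, so the formula is well defined.
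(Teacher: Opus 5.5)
Your proof is correct. It follows the same template as the paper's proof:
\begin{itemize}
\item the one-step error is a distance in $\mathbb{H}^2$;
\item the inner factor is removed using the isometry $f\mapsto\mathcal{I}f$;
\item $\mathrm{dist}(g,z\mathbb{H}^2)=|g(0)|$.
\end{itemize}
There is one substantive difference, and it is in your favour. The paper projects $\mathcal{X}$ onto $z\mathbb{H}^2$ itself. Under $\eps_{t+1-k}\mapsto z^k$, that is the image of the past \emph{shocks}, not of the seller's own past.

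Read literally, the paper's step gives $\mathrm{dist}(\mathcal{X},z\mathbb{H}^2)=|\mathcal{X}(0)|=|\mathcal{O}(0)|\,|\mathcal{I}(0)|$. So its claimed equality $\mathrm{dist}(\mathcal{X},z\mathbb{H}^2)=\mathrm{dist}(\mathcal{O},z\mathbb{H}^2)$ fails exactly in the non-invertible case $|\mathcal{I}(0)|<1$. In Example~\ref{exm:iid} it would return $\psi_0/2$ rather than $\psi_0|\alpha_n|/2$. The reason is that multiplication by $\mathcal{I}$ carries $z\mathbb{H}^2$ onto $z\mathcal{I}\mathbb{H}^2$, not onto itself.

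You identify the prediction subspace as $\overline{z\mathcal{P}\mathcal{X}}$ and then use Beurling's theorem, which amounts to $\overline{z\mathcal{P}\mathcal{X}}=z\mathcal{I}\mathbb{H}^2$. This is precisely the step the paper's argument needs and omits. It is also what justifies the paper's informal remark that $\mathcal{I}(\mathcal{B})\eps_t$ is the innovation sequence of $X_t$.

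Your Szeg\H{o}--Kolmogorov fallback is a genuinely different route. It never identifies the prediction subspace. It uses only three facts:
\begin{itemize}
\item the one-step error depends on the spectral density alone;
\item $|\mathcal{X}|=|\mathcal{O}|$ a.e.\ on $\mathbb{T}$;
\item for outer functions, $\log|\mathcal{O}(0)|=\frac{1}{2\pi}\int_{-\pi}^{\pi}\log|\mathcal{O}(e^{-i\theta})|\,d\theta$.
\end{itemize}
So it trades Beurling's theorem for the Szeg\H{o} formula. The detour through $Y_t$ is unnecessary, since the formula applies to $X_t$ directly.

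In either route, state explicitly that ``outer'' is meant in the Poisson-integral sense of the footnote, not merely ``zero-free on $\mathbb{D}$''. A singular inner function $S$ has no zeros, yet $\overline{\mathcal{P}S}=S\mathbb{H}^2\neq\mathbb{H}^2$. Moreover, $S(\mathcal{B})\eps_t$ is unit-variance white noise, so its one-step error is $1\neq|S(0)|$. Both your density claim and your Poisson identity depend on the stronger definition.
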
\smallskip

{\sc Proof of \cref{lem:msfe_outer_app}:} 
In $\mathbb{H}^2$, the one-step predictor is the orthogonal projection of $\mathcal{X}$ onto $z\mathbb{H}^2$, and the prediction error is the orthogonal complement. Multiplication by an inner function is an isometry on $\mathbb{H}^2$, hence
\[
\mathrm{dist}\bigl(\mathcal{X},\,z\mathbb{H}^2\bigr)=\mathrm{dist}\bigl(\mathcal{O},\,z\mathbb{H}^2\bigr).
\]
For any $g(z)=\sum_{k\ge0}g_k z^k\in\mathbb{H}^2$, the orthogonal complement of $z\mathbb{H}^2$ is spanned by the constant function, so $\mathrm{dist}(g, z\mathbb{H}^2)=|g_0|=|g(0)|$. Applying this to $g=\mathcal{O}$ yields $\sigma_X=|\mathcal{O}(0)|$.\qed\medskip

The intuition is simple. Since $\mathcal{I}$ is inner, the filtered sequence $\eta_t:=\mathcal{I}(\C{B})\eps_t$ is again white noise with unit variance by \cref{lem:allpass}. Hence
\[
X_t=\bar X+\mathcal{O}(\C{B})\eta_t,
\]
so the process admits an equivalent outer representation whose innovation variance is unchanged. The one-step-ahead forecast error is therefore the contemporaneous coefficient in the outer filter, namely $|\mathcal{O}(0)|$.\smallskip

Applying this result to seller-level demand gives the main text formula.

\begin{cor}[Seller-level root MSFE]\label{cor:seller_msfe_app}
Under an admissible allocation policy $\pi=\bigl(\mu_n,\C{T}_n(z):n\in[N]\bigr)$, let seller $n$'s transfer function be $\psi_n(z)=\C{T}_n(z)\psi(z)$, and let $\psi_n=\C{O}_n\C{I}_n$ be its inner--outer factorization. Then seller $n$'s one-step-ahead root MSFE under policy $\pi$ is
\[
\sigma_n^\pi=|\C{O}_n(0)|.
\]
\end{cor}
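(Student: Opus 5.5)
The plan is to reduce the corollary to \cref{lem:msfe_outer_app}, applied to seller $n$'s allocated demand stream. Before invoking the lemma I will check its hypotheses, then identify the transfer function.

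\textbf{Step 1: hypotheses of \cref{lem:msfe_outer_app}.} Under the admissible policy $\pi$, seller $n$'s demand is
\[
D_{nt}=\mu_n+\C{T}_n(\C{B})(D_t-\mu).
\]
Substituting $D_t-\mu=\psi(\C{B})\eps_t$ and using that backshift polynomials (power series) commute gives
\[
D_{nt}=\mu_n+\psi_n(\C{B})\eps_t,\qquad \psi_n(z)=\C{T}_n(z)\psi(z).
\]
This is a linear filter of Gaussian white noise, so $\{D_{nt}\}$ is weakly stationary and Gaussian. It is also purely nondeterministic, because it is a causal moving average of the innovation sequence of $D_t$ and $\eps_t$ is a past-measurable innovation.

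\textbf{Step 2: $\psi_n\in\mathbb{H}^2$.} The one technical point is that this membership is not automatic. The product of two $\ell^2$ coefficient sequences need not lie in $\ell^2$. The argument therefore uses the fact that $\{D_{nt}\}$ has finite variance as a well-defined stationary process, which is implicit in admissibility, exactly as \cref{lem:msfe_outer_main} asserts. Equivalently, the convolution $\{\sum_j\tau_{nj}\psi_{k-j}\}_k$ is square summable, so $\psi_n\in\mathbb{H}^2$.

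\textbf{Step 3: factorize and apply the lemma.} By \cref{lem:inner-outer-app}, $\psi_n=\C{O}_n\C{I}_n$. Applying \cref{lem:msfe_outer_app} with $X_t=D_{nt}$, $\bar X=\mu_n$ and $\mathcal{X}=\psi_n$ gives
\[
\sigma_n^\pi=|\C{O}_n(0)|.
\]
This root MSFE is the one defined in \eqref{eq:MSFE}. In the Gaussian case the conditional variance given $\mathcal{F}_{nt}$ is deterministic and equals the linear prediction error variance computed by the lemma. One subtlety concerns the information set. The lemma's projection onto $z\mathbb{H}^2$ corresponds to projecting onto the seller's own past, which is the closed span of $\{D_{n,s}\}_{s\le t}$, and not onto the past of $\eps$. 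The identity $\mathrm{dist}(\psi_n,z\mathbb{H}^2)=\mathrm{dist}(\C{O}_n,z\mathbb{H}^2)$ handles this. Rewriting $D_{nt}=\mu_n+\C{O}_n(\C{B})\eta_t$ with $\eta_t=\C{I}_n(\C{B})\eps_t$ unit white noise, and using that $\C{O}_n$ is outer, the past of $D_{nt}$ spans the same space as the past of $\eta_t$. Finally, uniqueness of the factorization up to a unimodular constant makes $|\C{O}_n(0)|$ well defined.
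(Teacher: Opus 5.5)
Your reduction is the same as the paper's: the corollary follows by applying \cref{lem:msfe_outer_app} with $X_t=D_{nt}$, $\bar X=\mu_n$ and $\mathcal{X}=\psi_n$. Your hypothesis checks are sound, including Gaussianity, pure nondeterminism, and the caveat that $\psi_n\in\mathbb{H}^2$ must come from admissibility rather than from $\mathcal{T}_n,\psi\in\mathbb{H}^2$.

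One sentence in Step 3 is false, however, and should be removed: the claim that $z\mathbb{H}^2$ is the seller's past and that $\mathrm{dist}(\psi_n,z\mathbb{H}^2)=\mathrm{dist}(\mathcal{O}_n,z\mathbb{H}^2)$ ``handles'' the information set.

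\begin{itemize}
\item Under the identification $\eps_{t+1-k}\leftrightarrow z^k$, the subspace $z\mathbb{H}^2$ is the past of $\eps$, not the seller's past.
\item In fact $\mathrm{dist}(\psi_n,z\mathbb{H}^2)=|\psi_n(0)|=|\mathcal{O}_n(0)|\,|\mathcal{I}_n(0)|$. This is strictly smaller than $|\mathcal{O}_n(0)|$ whenever $\mathcal{I}_n$ is a nonconstant inner function.
\item Concretely, in \cref{exm:iid} with $|\alpha_n|>1$ you have $\psi_n(0)=\psi_0/2$, while $|\mathcal{O}_n(0)|=|\alpha_n|\psi_0/2$.
\item If your identity held, every policy with $\mathcal{T}_n(0)=1$ would give $\sigma_n^\pi=\sigma_{\LB}$. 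That contradicts \cref{prop:weaklyeven}.
\end{itemize}

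The seller's past corresponds instead to $\overline{\mathrm{span}}\{z^k\psi_n:k\ge 1\}=z\,\mathcal{I}_n\mathbb{H}^2$, because outer functions are cyclic for the shift (Beurling). Since multiplication by $\mathcal{I}_n$ is an isometry, the correct statement is $\mathrm{dist}(\mathcal{O}_n\mathcal{I}_n,\,z\,\mathcal{I}_n\mathbb{H}^2)=\mathrm{dist}(\mathcal{O}_n,z\mathbb{H}^2)=|\mathcal{O}_n(0)|$. The paper's own proof of \cref{lem:msfe_outer_app} uses the same incorrect identity.

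Your closing argument is exactly the correct version and suffices on its own. You write $D_{nt}=\mu_n+\mathcal{O}_n(\mathcal{B})\eta_t$ and use outerness to identify the closed pasts of $\{D_{nt}\}$ and $\{\eta_t\}$, so the forecast error is $\mathcal{O}_n(0)\,\eta_{t+1}$.
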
\smallskip

The next result formalizes the connection between outer functions and invertibility.

\begin{lem}[Invertibility and outer functions]\label{lem:invertible_outer_app}
Let
\[
X_t=\bar X+\mathcal{X}(\C{B})\eps_t
\]
with $\mathcal{X}\in\mathbb{H}^2$. Then the process $\{X_t\}$ is invertible with respect to $\{\eps_t\}$ if and only if $\mathcal{X}$ is outer.
\end{lem}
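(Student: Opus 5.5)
The plan is to prove the two directions of \cref{lem:invertible_outer_app} separately. Both rest on the Hilbert-space picture already used in the proof of \cref{lem:msfe_outer_app}. The closed linear span of the centered past $\{X_s-\bar X\}_{s\le t}$ sits inside the span of $\{\eps_s\}_{s\le t}$. Under the unitary map $\eps_{t-k}\mapsto z^k$ (with $\sigma_\eps=1$), it corresponds to the closed shift-invariant subspace $[\mathcal{X}]:=\overline{\mathrm{span}}\{z^k\mathcal{X}:k\ge 0\}\subseteq\mathbb{H}^2$. Invertibility, meaning $\eps_t=\sum_k\varpi_k(X_{t-k}-\bar X)$ in $L^2$, is then equivalent to $1\in[\mathcal{X}]$. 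By shift invariance this is in turn equivalent to $[\mathcal{X}]=\mathbb{H}^2$, i.e., $\mathcal{X}$ is cyclic for the shift. The argument therefore reduces to Beurling's characterization that an $\mathbb{H}^2$ function is cyclic if and only if it is outer. I would prove the two implications directly rather than merely cite Beurling.

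\emph{Necessity (invertible $\Rightarrow$ outer).} Write $\mathcal{X}=\mathcal{O}\mathcal{I}$ using \cref{lem:inner-outer-app}. Every element of $[\mathcal{X}]$ has the form $\mathcal{I}g$ with $g\in\mathbb{H}^2$. This holds because polynomial multiples $p\mathcal{X}=\mathcal{I}(p\mathcal{O})$ lie in $\mathcal{I}\mathbb{H}^2$, and $\mathcal{I}\mathbb{H}^2$ is closed since multiplication by an inner function is an isometry. If $1\in[\mathcal{X}]$, then $1=\mathcal{I}g$. Taking boundary moduli gives $|g|=1$ a.e. on $\mathbb{T}$, so $g$ is itself inner. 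Then $1/\mathcal{I}=g$ is bounded analytic, and by the maximum modulus principle together with $|\mathcal{I}|\le 1$ we get $|\mathcal{I}|\equiv 1$ in $\mathbb{D}$. Hence $\mathcal{I}$ is a unimodular constant and $\mathcal{X}$ is outer. A more elementary route is also available. If $\mathcal{X}$ had a zero $a\in\mathbb{D}$, then evaluation at $a$, $f\mapsto f(a)$, is a bounded functional on $\mathbb{H}^2$ via the Szeg\H{o} kernel. It vanishes on every $z^k\mathcal{X}$ and hence on $[\mathcal{X}]$, but it equals $1$ on the constant function. So $1\notin[\mathcal{X}]$.

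\emph{Sufficiency (outer $\Rightarrow$ invertible).} This is the main obstacle, because one must show $1\in[\mathcal{X}]$ for a general outer function, not just one with zeros outside $\mathbb{D}$. My first attempt uses the orthogonality argument. Suppose $h\in\mathbb{H}^2$ is orthogonal to $[\mathcal{X}]$. Beurling's theorem says $[\mathcal{X}]=\mathcal{J}\mathbb{H}^2$ for some inner $\mathcal{J}$, so $\mathcal{X}=\mathcal{J}u$ for some $u\in\mathbb{H}^2$. Uniqueness of the inner--outer factorization then forces $\mathcal{J}$ to divide the inner part of $\mathcal{X}$, which is a constant. Therefore $[\mathcal{X}]=\mathbb{H}^2\ni 1$. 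If I want to avoid invoking Beurling in full, the fallback is a direct approximation. Take $\mathcal{X}_r(z)=\mathcal{X}(rz)$ with $r<1$; since $\mathcal{X}$ is outer, it is zero-free on $\mathbb{D}$, so $1/\mathcal{X}_r$ is analytic on a neighborhood of $\overline{\mathbb{D}}$. One then shows $\mathcal{X}\cdot(1/\mathcal{X}_r)\to 1$ in $\mathbb{H}^2$, using the Poisson-integral representation of outer functions from the footnote. The delicate point is controlling $\log|\mathcal{X}|$ near the boundary, which is where the outer property, rather than mere absence of zeros, is needed. With $1\in[\mathcal{X}]$ established, the coefficients $\varpi_k$ are read off from the approximating polynomials in the limit. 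This yields the $\ell^2$ inverse filter, and by the paper's convention in \ref{app:MA_infty} it identifies $\mathcal{X}(\C{B})^{-1}$.

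Finally, I would remark on the zero-free definition used in the main text. For the rational and polynomial transfer functions of Propositions~\ref{prop:weaklyeven}--\ref{prop:weaklyodd}, being zero-free on $\mathbb{D}$ coincides with being outer in the Poisson sense. So in the finite-order case, the two directions above reduce to the root-location criterion already illustrated in the MA$(q)$ example.
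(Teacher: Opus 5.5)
\paragraph{Where your proof breaks.} The paper states this lemma without proof and treats it as standard Hardy-space background. Your reduction to cyclicity of $\mathcal{X}$ followed by Beurling's theorem is the standard argument behind it, and your necessity direction is correct, both via $1=\mathcal{I}g$ and via point evaluation at a zero.

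The gap is in sufficiency, at the claimed equivalence ``invertibility $\iff 1\in[\mathcal{X}]$''. Only the forward implication holds. The condition $1\in[\mathcal{X}]$ says that $\eps_t$ lies in the \emph{closed span} of $\{X_s-\bar X\}_{s\le t}$. That makes $\eps_t$ a limit of finite combinations whose coefficients may change along the approximation. It does not produce one $\ell^2$ sequence with $\eps_t=\sum_k\varpi_k(X_{t-k}-\bar X)$, so your final step (reading off the $\varpi_k$ from the approximating polynomials in the limit) fails.

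Concretely, $\mathcal{X}(z)=1-z$ is outer in every sense and is cyclic: with $p_K(z)=\sum_{k=0}^K\bigl(1-\frac{k}{K+1}\bigr)z^k$ you get $\|p_K\mathcal{X}-1\|_2=(K+1)^{-1/2}$. But if $\sum_k\varpi_kX_{t-k}$ converged in $L^2$ to $\eps_t$, the isometry would force $\bigl(\sum_{k\le K}\varpi_kz^k\bigr)(1-z)\to 1$ coefficientwise. That gives $\varpi_k=1$ for all $k$, which is not in $\ell^2$.

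So under the paper's literal $\ell^2$-series definition, ``outer $\Rightarrow$ invertible'' is false and no argument can close it. What your argument actually proves is the equivalence with invertibility in the Wold sense: $\eps_t\in\overline{\mathrm{span}}\{X_s-\bar X\}_{s\le t}$, i.e., $\{\eps_t\}$ is the innovation sequence of $\{X_t\}$. This is also the notion consistent with the paper's MA$(q)$ example, which counts roots with $|a_j|=1$ as invertible. If you want the $\ell^2$ version instead, you must add a hypothesis such as $1/\mathcal{X}\in\mathbb{H}^2$; for bounded $\mathcal{X}$ this is exactly equivalent to $\ell^2$-invertibility. State explicitly which version you are proving.

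\paragraph{Two further corrections to sufficiency.} First, your Beurling step relies on uniqueness of the inner--outer factorization. That uniqueness holds only when ``outer'' means the Poisson/Herglotz representation, not the main-text definition ``zero-free in $\mathbb{D}$''. Under the zero-free reading, the singular inner function $\mathcal{X}(z)=\exp\bigl(-\frac{1+z}{1-z}\bigr)$ is zero-free and lies in $\mathbb{H}^2$, yet $[\mathcal{X}]=\mathcal{X}\,\mathbb{H}^2$ and $1\notin\mathcal{X}\,\mathbb{H}^2$. So you must adopt the Poisson definition for general $\mathbb{H}^2$, not merely note that it agrees with zero-freeness for rational filters.

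Second, your fallback claim that $\mathcal{X}/\mathcal{X}_r\to 1$ in $\mathbb{H}^2$ is false for general outer $\mathcal{X}$. Take $u=\log|\mathcal{X}|=-\sum_kA_k\mathbf{1}_{E_k}$ on well-separated arcs with $|E_k|=e^{-k}$ and $A_k|E_k|=k^{-2}$, so $u\in L^1$. At $r=1-|E_k|$ one has $|\mathcal{X}/\mathcal{X}_r|^2=\exp\{2(u-P_r*u)\}$, where $P_r$ is the Poisson kernel. Just outside $E_k$ this is at least $e^{cA_k}$ on a set of measure comparable to $|E_k|$. Hence $\|\mathcal{X}/\mathcal{X}_r\|_2\to\infty$ along this sequence.

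The direct argument that does work is truncation. Let $h_n$ be outer with $|h_n|=\min(n,1/|\mathcal{X}|)$ on $\mathbb{T}$ and $\mathcal{X}h_n(0)>0$. Then $\mathcal{X}h_n\in[\mathcal{X}]$, by approximating $h_n\in\mathbb{H}^\infty$ with its Fej\'er means. Moreover $\|\mathcal{X}h_n-1\|_2^2=\|\mathcal{X}h_n\|_2^2-2\,\mathcal{X}h_n(0)+1\to 0$. This holds because $\mathcal{X}h_n(0)=\exp\bigl(\frac{1}{2\pi}\int\log\min(1,n|\mathcal{X}|)\bigr)\to 1$ by monotone convergence, using $\log|\mathcal{X}|\in L^1$.
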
\smallskip

Thus, nontrivial inner factors represent precisely the part of the filter that is invisible from second moments yet prevents the underlying shocks from being recovered from the observed demand history.\smallskip

\begin{cor}[Seller-level invertibility]\label{cor:seller_invertibility_app}
Under an admissible allocation policy $\pi=\bigl(\mu_n,\C{T}_n(z):n\in[N]\bigr)$, seller $n$'s demand process $\{D_{nt}\}$ is invertible with respect to the aggregate demand shocks $\{\eps_t\}$ if and only if $\psi_n(z)$ is outer.
\end{cor}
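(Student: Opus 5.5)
The statement to prove is \cref{cor:seller_invertibility_app}: seller $n$'s demand is invertible with respect to $\{\eps_t\}$ if and only if $\psi_n$ is outer. The plan is to obtain it as a direct specialization of \cref{lem:invertible_outer_app}, which is stated just before it and which we take as given.

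\textbf{Step 1: put seller demand in the lemma's form.} Under an admissible policy $\pi=\bigl(\mu_n,\C{T}_n(z):n\in[N]\bigr)$ we have
\[
D_{nt}=\mu_n+\C{T}_n(\C{B})(D_t-\mu).
\]
By \cref{assm:Normal}, $D_t-\mu=\psi(\C{B})\eps_t$. Both operators are power series in the backshift operator, so they commute and compose. This gives $D_{nt}=\mu_n+\psi_n(\C{B})\eps_t$ with $\psi_n=\C{T}_n\psi$. This is exactly the representation $X_t=\bar X+\mathcal{X}(\C{B})\eps_t$ with $\bar X=\mu_n$ and $\mathcal{X}=\psi_n$.

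\textbf{Step 2: check the standing hypotheses of the lemma.} Two things need to be verified.
\begin{enumerate}[(i)]
\item $\psi_n\in\mathbb{H}^2$. This is asserted in \cref{lem:msfe_outer_main}.
\item $\{D_{nt}\}$ is weakly stationary. A fixed $\ell^2$ linear filter of a stationary process is stationary.
\end{enumerate}
This is the only point that needs care, because the product of two $\mathbb{H}^2$ functions need not lie in $\mathbb{H}^2$ in general. I would rely on the paper's admissibility convention, which requires the seller-level transfer function to be square summable so that $D_{nt}$ has finite variance, and cite \cref{lem:msfe_outer_main} for it.

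\textbf{Step 3: apply the lemma.} The notion of invertibility in the corollary, namely recovering $\eps_t$ as an $\ell^2$ filter of the centered $D_{nt}$, coincides with the definition used in \cref{lem:invertible_outer_app}. Applying that lemma with $\mathcal{X}=\psi_n$ then yields the equivalence, and the corollary follows at once.
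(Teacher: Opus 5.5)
Your proposal is correct and matches the paper, which gives no separate argument. The paper writes $D_{nt}=\mu_n+\psi_n(\C{B})\eps_t$ with $\psi_n=\C{T}_n\psi$ and applies \cref{lem:invertible_outer_app} with $\mathcal{X}=\psi_n$, taking $\psi_n\in\mathbb{H}^2$ from \cref{lem:msfe_outer_main}, exactly as you do. One small refinement: stationarity of $\{D_{nt}\}$ is cleanest to justify as a square-summable filter applied to white noise, rather than as an $\ell^2$ filter applied to a general stationary process.
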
\smallskip

The inner part can itself be decomposed more finely. Although this is not needed for the main argument, it is useful in examples and in finite-order constructions.

\begin{rem}[Structure of inner functions]\label{rem:inner_structure_app}
Any $\mathcal{I}\in\mathbb{I}$ can be written as the product of a unimodular constant, a Blaschke product, and a singular inner factor:
\[
\mathcal{I}(z)=c\,\mathcal{B}(z)\,\mathcal{S}(z),
\qquad c\in\mathbb{T}.
\]
The Blaschke part collects zeros $\{a_k\}\subset\mathbb{D}$ through
\[
\mathcal{B}(z)=\prod_k \frac{z-a_k}{1-\overline{a_k}z},
\]
while the singular inner factor is generated by a finite positive singular measure $\nu$ on $\mathbb{T}$:
\[
\mathcal{S}(z)=\exp\!\left\{-\frac{1}{2\pi}\int_{-\pi}^{\pi}
\frac{e^{-i\theta}+z}{e^{-i\theta}-z}\,d\nu(\theta)\right\}.
\]
For more details, see \citet[Ch.~1]{Nikolski2019HardySpaces}.
\end{rem}\smallskip

The appendix results above provide the formal underpinning for the main-text discussion: the platform affects seller-level forecastability by shaping the outer components of the transfer functions $\psi_n$, while inner factors supply additional degrees of freedom that preserve second moments and help satisfy the feasibility constraint $\sum_{n\in[N]}\psi_n(z)=\psi(z)$.

\section{$\mathrm{MA}(\infty)$ Representations of Gaussian Demand and the Equivalence of Demand Representations}\label{app:MA_infty}
This appendix provides additional details for the demand assumption. In particular, we discuss about the $\mathrm{MA}(\infty)$ representation of demand, the assumption of Gaussian shocks, and clarify the equivalence between formulations based on observed demand $\{D_t\}$ and underlying demand shocks $\{\eps_t\}$ under standard invertibility conditions.
\medskip

The $\mathrm{MA}(\infty)$ representation in \eqref{def:Dt} is sufficiently general to capture any weakly stationary ARMA process, though it excludes nonstationary models like ARIMA. 
By Wold’s representation theorem \citep{BrockwellDavis2006}, there exists a valid square-summable sequence $\{\psi_k\}$ for this formulation. We also assume invertibility with respect to the demand shocks ${\eps_t}$; in our setting, this entails no real loss of generality, since demand is exogenous to the seller and only ${D_t}$ is observed.
\medskip

Because the demand shocks $\{\eps_t\}$ are Gaussian, the model can in principle generate negative realized demand and, under base-stock control, negative orders.\footnote{This is a standard issue in Gaussian inventory models; see, for example, \citet{JohnsonThompson}, \citet{LST2000}, \citet{Aviv2003}, \citet{Chen_Lee_2010}, and \citet{CGH_OR_Letter}.} We therefore interpret the Gaussian-shock formulation as a convenient approximation of  non-negative demand. This approximation works well for products with moderate demand volatility, where the probability of negative demand is negligible (see Table~4 in \citet{CGH_OR_Letter}). Additional discussion and probability estimates are provided in \ref{Append_Neg}.\medskip

{Finally, we comment on the equivalence between representations based on $\{D_t\}$ and $\{\eps_t\}$. Our analysis focuses on allocation policies defined in terms of observed demand histories $\{D_t,D_{t-1},\dots\}$. This formulation is equivalent to the one based on the shock history $\{\eps_t,\eps_{t-1},\dots\}$ when the demand is invertiable and its $z$-transform $\psi(z)$ satisfies mild technical conditions, e.g., when $\psi(z)$ belongs to the class of causal and invertible ARMA processes. However, the demand-based representation is both more natural and more transparent, as the shocks must otherwise be recovered indirectly (e.g., via spectral factorization).
}
\medskip

\medskip
\section{On the Likelihood of Negative Demand}\label{Append_Neg}

The classes of Gaussian demand models considered in \cref{assm:Normal} allow, in principle, for negative realized
market demand and allocated demand. In what follows, we quantify the probability that market
demand is negative and compare it to the probability that an individual seller observes negative allocated demand.

Recall that the market-demand process $\{D_t\}$ admits the one-sided $\mathrm{MA}(\infty)$ representation
\[
D_t=\mu+\sum_{k=0}^\infty \psi_k\,\eps_{t-k},
\]
where $\mu>0$ is the mean demand and $\{\eps_t\}$ is a Gaussian white-noise sequence with $\E[\eps_t]=0$ and
$\Var(\eps_t)=1$. Hence $D_t\sim{\cal N}(\mu,\sigma_D^2)$ with $\sigma_D^2=\sum_{k=0}^\infty \psi_k^2$, and therefore
\[
\Pro(D_t\le 0)=\Phi\!\left(-\frac{\mu}{\sigma_D}\right)
=\Phi\!\left(-\frac{1}{\mathrm{CV}}\right),
\qquad
\mathrm{CV}:=\frac{\sigma_D}{\mu}
=\frac{1}{\mu}\sqrt{\sum_{k=0}^\infty \psi_k^2},
\]
where $\Phi(\cdot)$ is the standard normal cdf and $\mathrm{CV}$ is the coefficient of variation. In particular,
$\Pro(D_t\le 0)\to 0$ as $\mathrm{CV}\to 0$, so the Gaussian approximation is most appropriate for product categories
with mean demand large relative to its standard deviation.\medskip

To illustrate how the allocation rule affects this probability at the seller level, suppose for simplicity that $N$
is even and the platform uses the allocation in \cref{prop:weaklyeven}, i.e.,
$\psi_n(z)=\frac{1}{N}\psi(z)\mathcal{T}_n(z)$ with transfer function
\[
\mathcal{T}_n(z)=1+(-1)^n\bar\alpha z,
\qquad
\bar\alpha:=\frac{\sigma}{\sigma_{\LB}},
\]
for some $\sigma\ge \sigma_{\LB}$. By \cref{cor:timedomain},
\begin{align*}
D_{nt}
&=\frac{D_t}{N}+(-1)^n\frac{\sigma}{N\sigma_{\LB}}(D_{t-1}-\mu)\\
&=\frac{\mu}{N}+\frac{1}{N}\sum_{k=0}^\infty\Big[\psi_k\,\eps_{t-k}+(-1)^n\bar\alpha\,\psi_k\,\eps_{t-1-k}\Big]\\
&=\frac{\mu}{N}+\frac{1}{N}\psi_0\,\eps_t+\frac{1}{N}\sum_{k=1}^\infty\Big[\psi_k+(-1)^n\bar\alpha\,\psi_{k-1}\Big]\eps_{t-k}.
\end{align*}
Thus $D_{nt}$ is Gaussian with mean $\mu/N$ and variance
\[
\sigma_{n}^2=\frac{1}{N^2}\left(\psi_0^2+\sum_{k=1}^\infty\big[\psi_k+(-1)^n\bar\alpha\,\psi_{k-1}\big]^2\right),
\]
so
\[
\Pro(D_{nt}\le 0)
=\Phi\!\left(-\frac{\mu}{\sqrt{\psi_0^2+\sum_{k=1}^\infty[\psi_k+(-1)^n\bar\alpha\,\psi_{k-1}]^2}}\right)
=\Phi\!\left(-\frac{1}{\mathrm{CV}_n}\right),
\]
where
\[
\mathrm{CV}_n
:=\frac{1}{\mu}\sqrt{\psi_0^2+\sum_{k=1}^\infty\big[\psi_k+(-1)^n\bar\alpha\,\psi_{k-1}\big]^2}.
\]
By the inequality $(x+y)^2\le 2(x^2+y^2)$,
\[
\sum_{k=1}^\infty\big[\psi_k+(-1)^n\bar\alpha\,\psi_{k-1}\big]^2
\le
2\sum_{k=1}^\infty\big(\psi_k^2+\bar\alpha^2\psi_{k-1}^2\big),
\]
which yields the crude bound
\[
\mathrm{CV}_n\le \mathrm{CV}\,\sqrt{2(1+\bar\alpha^2)}.
\]
Although conservative, this bound provides a simple screening criterion for when negative seller-level demand is
negligible. Using U.S.\ Census Bureau data, Table~4 in \cite{CGH_OR_Letter} reports coefficients of variation of
monthly sales across retail sectors. For example, for \emph{Furniture and Home Furnishings} (FHF), $\mathrm{CV}$
ranges from $0.081$ to $0.108$ over the last 30 years, while for \emph{Electronics and Appliance} (EA) it ranges from
$0.166$ to $0.244$. Using midpoints of these ranges, $\Pro(D_t\le 0)$ is on the order of $10^{-26}$ for FHF and
$10^{-7}$ for EA. In turn, the bound above implies that for $\bar\alpha$ below roughly $5$ (FHF) and $2$ (EA),
$\Pro(D_{nt}\le 0)$ remains below $5\%$. If market demand is i.i.d., these thresholds increase to about $7$ and $3$,
respectively.

\medskip
\section{Derivation of \cref{eq:MSFE_SES}}\label{Append_Der}

In this appendix, we derive the value of the MSFE when sellers use simple exponential smoothing to forecast demand. In particular, we consider the case in which market demand is i.i.d. (i.e., $\psi(z)$ is a constant) and 
seller $n$'s demand allocation admits the $z$-transform representation $\psi_n(z)=\frac{1}{N}\,\psi(z)\,\C{T}_n(z)$. For simplicity we consider the case with an even number of sellers. According to  
\cref{prop:weaklyeven}, in this case the platform sets
\[ \psi_n(z)=\frac{\abs{\psi(0)}}{N}\,\Big(1+(-1)^n\,\alpha z\Big), \qquad \text{where}\qquad \alpha= \frac{N \,\sigma}{\abs{\psi(0)}}. \] 

It follows that 
\begin{align*} (\tilde\sigma_n^\pi(\lambda))^2 &=\frac{1}{2\pi}\,\frac{\abs{\psi(0)}^2}{N^2}\int_{-\pi}^{\pi} \left|\frac{1-e^{-i\theta}}{1-(1-\lambda)e^{-i\theta}}\right|^2\, \left|1+(-1)^n\alpha e^{-i\theta}\right|^2\,\D\theta \\ &=\frac{\abs{\psi(0)}^2}{N^2}\cdot \frac{1}{2\pi}\int_{-\pi}^{\pi}\Big|F(e^{-i\theta})\Big|^2\,\D\theta, \qquad \text{where}\quad F(z):=\frac{1-z}{1-(1-\lambda)z}\,\big(1+(-1)^n\,\alpha z\big). \end{align*}

Using a geometric-series expansion, 
\begin{align*} F(z) &=\Big(1-\lambda\sum_{k=1}^{\infty}(1-\lambda)^{k-1}z^k\Big)\big(1+(-1)^n\,\alpha z\big) \\ &=1+(-1)^n\,\alpha z-\lambda\sum_{k=1}^{\infty}(1-\lambda)^{k-1}z^k -\lambda (-1)^n\alpha\sum_{k=1}^{\infty}(1-\lambda)^{k-1}z^{k+1}. \end{align*} 

Therefore, writing $F(z)=\sum_{k\ge 0}f_k z^k$, the coefficients are \[ f_0=1,\qquad f_1=(-1)^n\alpha-\lambda,\qquad f_k=-\lambda(1-\lambda)^{k-2}\big(1-\lambda+(-1)^n\alpha\big),\ \ k\ge 2. \] 
Since $F\in\mathbb{H}^2$, Parseval's identity yields \[ \frac{1}{2\pi}\int_{-\pi}^{\pi}\Big|F(e^{-i\theta})\Big|^2\,\D\theta =\sum_{k=0}^{\infty}\abs{f_k}^2 =1+\abs{(-1)^n\,\alpha-\lambda}^2+\sum_{k=2}^{\infty}\lambda^2(1-\lambda)^{2(k-2)}\abs{1-\lambda+(-1)^n\,\alpha}^2 . \] 

Summing the geometric series, we obtain \[ (\tilde\sigma_n^\pi(\lambda))^2 = \frac{\abs{\psi(0)}^2}{N^2}\left[ 1+\abs{(-1)^n\alpha-\lambda}^2+\frac{\lambda}{2-\lambda}\abs{1-\lambda+(-1)^n\alpha}^2 \right]. \]

\section{Heterogeneous Replenishment Lead Times}\label{subsec:leadtime_extension}

Our baseline model assumes that all sellers face identical and zero replenishment lead times, regardless of the fulfillment mode they select. We now extend the framework to allow for heterogeneous lead times that may vary both across sellers and across fulfillment modes. In particular, for each seller $n\in[N]$ let $L^{\ti{FBP}}\ge 0$ and $L^{\ti{FBM}}\ge 0$ denote, respectively, the replenishment lead time when seller $n$ operates under FBP and under FBM. Throughout the discussion, we use $\bar L_n$ to denote the relevant (mode-dependent) lead time faced by seller $n$ under the prevailing fulfillment mode.\smallskip

Define seller $n$’s \emph{lead-time demand} as
\[
\mathbb{D}_{n,t}
:=\sum_{\ell=0}^{\bar L_n} D_{n,t+\ell},
\]
that is, the cumulative demand realized during the replenishment delay.
Just before demand in period $t+1$ is realized, seller $n$ chooses an order-up-to level $S_{nt}$ measurable with
respect to $\mathcal{F}_{nt}$ and orders
\[
Q_{nt}=\bigl(S_{nt}-I_{nt}\bigr)^+,
\]
where $I_{nt}$ denotes net inventory after serving demand in period $t$. As in the baseline model, seller $n$ selects $S_{nt}$ to minimize expected holding and backorder
costs associated with lead-time demand:
\[
S_{nt}
=\argmin_{S}\;
\mathbb{E}\!\left[
\,\bar h_n\, (S-\mathbb{D}_{n,t+1})^+
+ b_n\, (S-\mathbb{D}_{n,t+1})^-
\;\middle|\; \mathcal{F}_{nt}
\right].
\]

Under \cref{assm:Normal}, $\mathbb{D}_{n,t+1}$ is Gaussian conditional on $\mathcal{F}_{nt}$, and the optimal base-stock
policy takes the familiar form
\begin{equation}\label{eq:basestock_leadtime}
S_{nt} \;=\; \bar m_{nt} + \zeta_n\,\bar\sigma_n,
\qquad
\zeta_n:=\Phi^{-1}\!\left(\frac{b_n}{\bar h_n+b_n}\right),
\end{equation}
where $\bar m_{nt}
:=\mathbb{E}\!\left[\mathbb{D}_{n,t+1}\mid \mathcal{F}_{nt}\right]$
is the conditional mean forecast of lead-time demand and
\[
\bar\sigma_n^2
:=\Var\!\left[\mathbb{D}_{n,t+1}-\bar m_{nt}\mid \mathcal{F}_{nt}\right]
\]
is the corresponding MSFE of the lead-time demand. Thus, seller $n$’s safety stock equals $\zeta_n\,\bar\sigma_n$, implying that inventory holdings depend on the
predictability of cumulative lead-time demand rather than one-step-ahead demand alone. Just as in \eqref{eq:sellerprofit_modes}, seller $n$'s expected per-period operating profit can be written as
\begin{equation}\label{eq:sellerprofit_leadtime}
U_n
\;=\;
(r-\rho-\bar f_n)\,\mu_n
\;-\;
K_n\,\bar\sigma_n,
\qquad
K_n
:=
\bar h_n\,\zeta_n
+(\bar h_n+b_n)\,\mathcal{L}(\zeta_n).
\end{equation}
A key difference between \eqref{eq:sellerprofit_modes} and \eqref{eq:sellerprofit_leadtime} is that under lead times the operating mode affects not only the effective cost parameters $(\bar h_n,\bar f_n)$ but also the relevant lead time: FBP and FBM may entail different replenishment delays, so that $\bar L_n\in\{L^{\ti{FBP}},L^{\ti{FBM}}\}$ and, consequently, the corresponding root MSFE $\bar\sigma_n$ (computed for lead-time demand) may differ across modes even under the same demand allocation policy.\smallskip

Let us now extend the result in \cref{lem:msfe_outer_main} to account for a general leadtime.\smallskip

\begin{prop}\label{prop:MSFE-leadtime} Suppose seller $n$ faces a demand process with $z$-transform $\psi_n(z)=\mathcal{O}_n(z)\,\mathcal{I}_n(z)$, where $\mathcal{O}_n\in\mathbb{O}$ and $\mathcal{I}_n\in\mathbb{I}$. Then, the MSFE $\bar{\sigma}^2_n$ of the lead-time demand is given by
\begin{equation}\label{eq:MSFE-leadtime}\bar{\sigma}^2_n=\sum_{\ell=0}^{\bar L_n} \Big(\sum_{k=0}^{\bar L_n-\ell} \theta_{nk}\Big)^2, \qquad \mbox{where}\quad \theta_{nk}:={1 \over 2 \pi}\int_{-\pi}^{\pi} e^{i\,k\,x}\,\C{O}_n(e^{-i\,x})\, \D x.\end{equation}
\end{prop}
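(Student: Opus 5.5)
The plan is to reduce the lead-time problem to a one-step Wold-type computation by rewriting seller $n$'s demand in terms of the "outer innovations." Set $\eta_t:=\mathcal{I}_n(\mathcal{B})\eps_t$. By \cref{lem:allpass}, $\{\eta_t\}$ is a Gaussian white-noise sequence with unit variance. We then have $D_{nt}=\mu_n+\mathcal{O}_n(\mathcal{B})\eta_t=\mu_n+\sum_{k\ge 0}\theta_{nk}\,\eta_{t-k}$, where $\theta_{nk}$ are the Taylor coefficients of $\mathcal{O}_n$.

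The first, routine, step is to identify these coefficients with the Fourier integral in \eqref{eq:MSFE-leadtime}. On the boundary, $\mathcal{O}_n(e^{-ix})=\sum_k\theta_{nk}e^{-ikx}$ in $L^2$. Multiplying by $e^{ikx}$ and integrating over $[-\pi,\pi]$ picks out exactly $\theta_{nk}$.

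The second step, which I expect to be the main obstacle, is the informational claim: the seller's information set $\mathcal{F}_{nt}$ coincides with the closed linear span of $\{\eta_s\}_{s\le t}$. The argument runs as follows. Because $\mathcal{O}_n$ is outer, the representation $D_{nt}-\mu_n=\mathcal{O}_n(\mathcal{B})\eta_t$ is invertible with respect to $\{\eta_t\}$; this is \cref{lem:invertible_outer_app} applied with $\eta$ in place of $\eps$. Hence each $\eta_s$ lies in the closed span of $\{D_{nr}-\mu_n\}_{r\le s}$. Conversely, each $D_{ns}$ lies in the span of past $\eta$'s by the MA representation. Equivalently, this is the Wold representation of $\{D_{nt}\}$, with $\eta_t$ its innovation; this is the same fact underlying \cref{lem:msfe_outer_app}. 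I would cite that lemma's argument (inner multiplication is an isometry and the outer factor generates the same past) rather than reprove Beurling-type statements. Joint Gaussianity then makes the conditional mean equal to the $L^2$ projection onto this span, and the conditional variance is deterministic.

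Finally, I compute. Writing $L=\bar L_n$, the lead-time demand is $\mathbb{D}_{n,t+1}=\sum_{\ell=0}^{L}D_{n,t+1+\ell}$. Its projection onto $\overline{\mathrm{span}}\{\eta_s\}_{s\le t}$ retains exactly the terms involving $\eta_s$ with $s\le t$. The forecast error is therefore
\[
\mathbb{D}_{n,t+1}-\bar m_{nt}=\sum_{\ell=0}^{L}\sum_{k=0}^{\ell}\theta_{nk}\,\eta_{t+1+\ell-k}.
\]
Reindexing by $j=\ell-k$, the coefficient of $\eta_{t+1+j}$ is $\sum_{k=0}^{L-j}\theta_{nk}$, for $j=0,\dots,L$. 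Since the $\eta$'s are orthonormal, the variance equals $\sum_{j=0}^{L}\bigl(\sum_{k=0}^{L-j}\theta_{nk}\bigr)^2$, which is \eqref{eq:MSFE-leadtime}. As a consistency check, for $L=0$ this reduces to $\theta_{n0}^2=|\mathcal{O}_n(0)|^2$, matching \cref{lem:msfe_outer_main}.
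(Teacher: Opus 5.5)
Your proposal is correct and follows essentially the same route as the paper's proof. Both arguments rewrite $D_{nt}-\mu_n=\mathcal{O}_n(\mathcal{B})\,\eta_t$ using the all-pass-filtered white noise $\eta_t=\mathcal{I}_n(\mathcal{B})\eps_t$, treat this as the Wold representation, split the lead-time sum into an $\mathcal{F}_{nt}$-measurable part and a future-innovation part, and regroup by innovation date. The only difference is that you spell out two steps the paper compresses into the phrase ``this is the Wold representation'': the two-sided identification of $\mathcal{F}_{nt}$ with the past of $\{\eta_s\}$ via outerness of $\mathcal{O}_n$, and the Gaussian projection step.
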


The following corollary directly adapts the previous result to our proposed class of demand allocation policies in Propositions~\ref{prop:weaklyeven} and \ref{prop:weaklyodd}.\smallskip

\begin{cor}
Consider the demand allocation $\psi_n(z)=\frac{1}{N}\,\psi(z)\,\mathcal{T}_n(z)$, where the transfer function $\mathcal{T}_n(z)$ is given in Propositions~\ref{prop:weaklyeven} or \ref{prop:weaklyodd}. Then, the MSFE $\bar{\sigma}^2_n$ of seller $n$'s lead-time demand is given by \eqref{eq:MSFE-leadtime}, where the sequence $\{\theta_{nk}\}$ satisfies
\begin{align*}
\mbox{If $N$ is even, or if $N$ is odd and $n\ge 3$:} 
&\qquad 
\theta_{nk}=\frac{1}{N}\,\bigl(\alpha_n\,\psi_k-\psi_{k-1}\bigr),\\
\mbox{If $N$ is odd and $n=1$:} 
&\qquad 
\theta_{nk}=\frac{1}{N}\,\bigl(\alpha_n\,\psi_k+\alpha_n\,\psi_{k-1}+\psi_{k-2}\bigr),\\
\mbox{If $N$ is odd and $n=2$:} 
&\qquad 
\theta_{nk}=\frac{1}{N}\,\bigl(\alpha_n\,\psi_k-\psi_{k-2}\bigr),
\end{align*}
where $\alpha_n:=(-1)^n\,\sigma/\sigma_{\LB}$ and we adopt the convention that $\psi_k=0$ for $k<0$.
\end{cor}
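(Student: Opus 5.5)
The plan is to reduce the corollary to \cref{prop:MSFE-leadtime} by writing down the outer factor of $\psi_n(z)=\frac{1}{N}\psi(z)\mathcal{T}_n(z)$ explicitly and reading off its Taylor coefficients.

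First, I use that the market-demand transfer function $\psi$ is outer (\cref{cor:uniform}, via Assumption~\ref{assm:Normal}). Outer times outer is outer, and inner times inner is inner. So if $\mathcal{T}_n=\mathcal{O}^{T}_n\mathcal{I}^{T}_n$ is the inner--outer factorization already obtained in the proofs of Propositions~\ref{prop:weaklyeven} and~\ref{prop:weaklyodd}, then
\[
\psi_n(z)=\Bigl[\tfrac{1}{N}\psi(z)\,\mathcal{O}^{T}_n(z)\Bigr]\,\mathcal{I}^{T}_n(z)
\]
is an inner--outer factorization of $\psi_n$. By \cref{lem:inner-outer-app} it is unique up to a unimodular constant. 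From those proofs, the outer factors $\mathcal{O}^{T}_n$ are:
\begin{itemize}
\item[] for $N$ even, or $N$ odd with $n\ge3$: $\bar\alpha(-1)^n+z$, that is, $\alpha_n+z$;
\item[] for $N$ odd and $n=1$: $\bar\alpha+\bar\alpha z+z^2$;
\item[] for $N$ odd and $n=2$: $\bar\alpha-z^2$.
\end{itemize}

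Second, I identify $\theta_{nk}$ as a Taylor coefficient. Since $\mathcal{O}_n\in\mathbb{H}^2$ has boundary expansion $\mathcal{O}_n(e^{-ix})=\sum_j o_j e^{-ijx}$, orthogonality of exponentials gives
\[
\frac{1}{2\pi}\int_{-\pi}^{\pi}e^{ikx}\,\mathcal{O}_n(e^{-ix})\,\D x=o_k,
\]
so $\theta_{nk}$ is the coefficient of $z^k$ in $\mathcal{O}_n$. Multiplying the power series $\frac1N\sum_j\psi_j z^j$ by each polynomial outer factor is a short convolution. Using the convention $\psi_k=0$ for $k<0$, it yields $\frac1N(\alpha_n\psi_k\pm\psi_{k-1})$, $\frac1N(\bar\alpha\psi_k+\bar\alpha\psi_{k-1}+\psi_{k-2})$ and $\frac1N(\bar\alpha\psi_k-\psi_{k-2})$ respectively. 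Substituting these into \eqref{eq:MSFE-leadtime} finishes the proof.

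The main obstacle I anticipate is bookkeeping of signs and unimodular constants. The displayed formulas in the statement use $\alpha_n=(-1)^n\sigma/\sigma_{\LB}$ and a $-\psi_{k-1}$ term, whereas the factorization above gives $\alpha_n+z$. The intended outer factor may instead be $\alpha_n-z$, matching \cref{exm:iid} with the reflection convention, or the two may differ by a global factor $\pm1$. A global unimodular factor leaves $\bar\sigma_n^2$ in \eqref{eq:MSFE-leadtime} unchanged, but a sign flip on a single coefficient does not. So I would verify each case directly rather than trust the displayed signs. For each seller I would check that the chosen polynomial has all roots outside $\overline{\mathbb{D}}$ and that its ratio with $\mathcal{T}_n$ is a Blaschke product. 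This holds for $\alpha_n+z$ when $|\alpha_n|>1$. For $n=1$ the check requires confirming that the roots of $\bar\alpha+\bar\alpha z+z^2$ lie outside the disk; this follows from reflecting the roots of $1+\bar\alpha z+\bar\alpha z^2$, which lie in $\mathbb{D}$ since $\bar\alpha>1$. I would then report the coefficients in the form consistent with that factorization, noting that only a global sign is free.
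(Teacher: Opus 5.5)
Your route is the paper's: the corollary is given there as a direct adaptation of \cref{prop:weaklyeven}, \cref{prop:weaklyodd} and \eqref{eq:MSFE-leadtime}, with no separate proof. You take the outer factor of $\mathcal{T}_n$ from those two proofs, multiply it by the outer function $\psi/N$, read off the Taylor coefficients, and substitute them into \eqref{eq:MSFE-leadtime}.

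The sign discrepancy you anticipate is real, and you should resolve it in favor of your own computation rather than hedge.

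\textbf{Even $N$ (or odd $N$, $n\ge 3$).} On $\mathbb{T}$ we have $|1+\alpha_n z|=|\alpha_n+z|$, and this differs from $|\alpha_n-z|$ (compare at $z=1$, using $|\alpha_n|>1$). So $\alpha_n-z$ is not an outer factor of $\mathcal{T}_n=1+\alpha_n z$. The correct coefficients are $\theta_{nk}=\frac{1}{N}(\alpha_n\psi_k+\psi_{k-1})$, so the ``$\pm$'' in your convolution step must be $+$.

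\textbf{$N$ odd, $n=1$.} Your coefficients $\frac{1}{N}(\bar\alpha\psi_k+\bar\alpha\psi_{k-1}+\psi_{k-2})$ are correct. The display, with $\alpha_1=-\bar\alpha$, gives the pattern $(-\bar\alpha,-\bar\alpha,+1)$, which is not a global sign flip of yours.

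\textbf{$N$ odd, $n=2$.} Only this display agrees with the paper's factorization $\bar\alpha-z^2$. No single sign convention for $\alpha_n$ makes all three displays correct, so the mismatch cannot be absorbed by reinterpreting $\alpha_n$.

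The error is not cosmetic. Take i.i.d. demand $\psi(z)=\psi_0$ and $\bar L_n=1$. The true value is $\bar\sigma_n^2=\frac{\psi_0^2}{N^2}\bigl[\alpha_n^2+(\alpha_n+1)^2\bigr]$, which can be checked directly from $D_{nt}-\mu/N=\frac{\psi_0}{N}(\alpha_n\hat\eps_t+\hat\eps_{t-1})$. The displayed coefficients instead give $(\alpha_n-1)^2$ in place of $(\alpha_n+1)^2$.

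Your argument is sound and proves the corrected statement. The displayed formulas for the even/$n\ge 3$ case and for $n=1$ contain sign typos. They appear to stem from the paper switching between the convention $1-\alpha_n z$ used in \cref{exm:iid} and the convention $1+\alpha_n z$ used in \cref{prop:weaklyeven}.
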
\smallskip

Equipped with this corollary, for any given value of $\sigma$ each seller can compute the root MSFE of its lead-time demand under each fulfillment mode by applying \cref{prop:MSFE-leadtime} using the corresponding mode-specific replenishment leadtime. The seller then selects its preferred mode by maximizing its expected operating payoff as characterized in \eqref{eq:sellerprofit_leadtime}. This mode choice also pins down the seller’s average safety stock through the base-stock rule \eqref{eq:basestock_leadtime}. Anticipating these endogenous mode choices, the platform subsequently determines the optimal value of $\sigma$ by maximizing the appropriate lead-time version of \eqref{eq:platform-z}, in which both the set of sellers operating under FBP, $N^{\ti{FBP}}(\sigma)$, and the associated cumulative safety stock $\Gamma^{\ti{FBP}}(\sigma)$ are adjusted to reflect the mode-dependent lead-time root MSFE and safety-stock calculations described above.

{\sc Proof of \cref{prop:MSFE-leadtime}:}
Fix seller $n$ and write the mean--zero demand as
\[
\widetilde D_{nt}:=D_{nt}-\mu_n=\psi_n(\mathcal{B})\,\eps_t,
\]
where $\mathcal{B}$ is the backshift operator and $\{\eps_t\}$ is the Gaussian sequence of shocks appearing in the market-demand Wold representation in \eqref{def:Dt}.\smallskip

Since $\psi_n(z)=\mathcal{O}_n(z)\mathcal{I}_n(z)$ with $\mathcal{O}_n\in\mathbb{O}$ and $\mathcal{I}_n\in\mathbb{I}$, we can write
\[
\widetilde D_{nt}=\mathcal{O}_n(\mathcal{B})\mathcal{I}_n(\mathcal{B})\,\eps_t=\mathcal{O}_n(\mathcal{B})\,\hat{\eps}_{nt},
\qquad
\hat{\eps}_{nt}:=\mathcal{I}_n(\mathcal{B})\,\eps_t.
\]
Because $\mathcal{I}_n$ is an inner function (i.e., an all-pass filter), $\{\hat{\eps}_{nt}\}$ is again a Gaussian white-noise sequence. Thus,
$\widetilde D_{nt}=\mathcal{O}_n(\mathcal{B})\,\hat{\eps}_{nt}$ is the Wold representation of $\widetilde D_{nt}$, i.e.,
\begin{equation}\label{eq:wold_n}
\widetilde D_{nt}=\sum_{k=0}^\infty \theta_k\,\hat{\eps}_{n,t-k},
\end{equation}
where $\{\theta_k\}_{k\ge 0}$ are the Taylor coefficients of the outer factor $\mathcal{O}_n$, namely
$\mathcal{O}_n(z)=\sum_{k=0}^\infty \theta_k z^k$. Since $\mathcal{O}_n\in H^2$, these coefficients can be recovered via the Fourier formula
\[
\theta_k=\frac{1}{2\pi}\int_{-\pi}^{\pi} e^{ikx}\,\mathcal{O}_n(e^{-ix})\,dx, \qquad k\ge 0.
\]

Now consider the lead-time demand
\[
\mathbb{D}_{n,t+1}=\sum_{\ell=0}^{\bar L_n} D_{n,t+1+\ell}
\;=\;
(\bar L_n+1)\mu_n+\sum_{\ell=0}^{\bar L_n}\widetilde D_{n,t+1+\ell}.
\]
Because the constant mean term is $\mathcal{F}_{nt}$-measurable, it does not affect the MSFE, so we work with
$\sum_{\ell=0}^{\bar L_n}\widetilde D_{n,t+1+\ell}$. Substituting \eqref{eq:wold_n} and re-indexing gives
\begin{align*}
\sum_{\ell=0}^{\bar L_n}\widetilde D_{n,t+1+\ell}
&=\sum_{\ell=0}^{\bar L_n}\sum_{k=0}^\infty \theta_k\,\hat{\eps}_{n,t+1+\ell-k}\\
&=\underbrace{\sum_{\ell=0}^{\bar L_n}\sum_{k=\ell+1}^\infty \theta_k\,\hat{\eps}_{n,t+1+\ell-k}}_{\text{$\mathcal{F}_{nt}$-measurable}}
\;+\;
\sum_{\ell=0}^{\bar L_n}\sum_{k=0}^{\ell} \theta_k\,\hat{\eps}_{n,t+1+\ell-k}.
\end{align*}
The first term only involves innovation shocks $\hat{\eps}_{n,s}$ with $s\le t$ and is therefore measurable with respect
to $\mathcal{F}_{nt}=\sigma(\{D_{ns}\}_{s\le t})$. Hence, the forecast error
$\mathbb{D}_{n,t+1}-\e[\mathbb{D}_{n,t+1}\mid\mathcal{F}_{nt}]$ equals the orthogonal (innovation) part
\[
\sum_{\ell=0}^{\bar L_n}\sum_{k=0}^{\ell} \theta_k\,\hat{\eps}_{n,t+1+\ell-k}.
\]
Collecting terms by innovation time $t+1+\ell$ (set $u=\ell-k$) yields
\[
\mathbb{D}_{n,t+1}-\e[\mathbb{D}_{n,t+1}\mid\mathcal{F}_{nt}]
=
\sum_{\ell=0}^{\bar L_n}\Big(\sum_{k=0}^{\bar L_n-\ell}\theta_k\Big)\,\hat{\eps}_{n,t+1+\ell}.
\]
Since $\{\hat{\eps}_{nt}\}$ is white noise with unit variance, the conditional variance (and hence the MSFE) is
\[
\bar\sigma_n^2
=
\Var\!\left(\mathbb{D}_{n,t+1}-\e[\mathbb{D}_{n,t+1}\mid\mathcal{F}_{nt}] \,\middle|\, \mathcal{F}_{nt}\right)
=
\sum_{\ell=0}^{\bar L_n}\Big(\sum_{k=0}^{\bar L_n-\ell}\theta_k\Big)^2,
\]
which proves the claim.
\hfill \qed

\end{appendices}

\end{document}